\def \be {\begin{equation}}
\def \ee {\end{equation}}
\def \bea {\begin{eqnarray}}
\def \eea {\end{eqnarray}}
\newcommand{\upd}{\mathrm{d}}
\newtheorem{theorem}{Theorem}[section]
\newtheorem{lemma}[theorem]{Lemma}
\newtheorem{proposition}[theorem]{Proposition}
\theoremstyle{definition}
\newtheorem{definition}[theorem]{Definition}
\theoremstyle{remark}
\newcommand{\dVol}{\mathrm{d}\textit{vol}}
\DeclareMathOperator{\spn}{span}
\title{Wave propagation on microstate geometries}
\begin{document}
\author{Joe Keir \\ \\
{\small Mathematical Institute, University of Oxford, Andrew Wiles Building,} \\
{\small Radcliffe Observatory Quarter (550), Woodstock Road, Oxford, OX2 6GG} \vspace{5mm} \\ 
\small{joseph.keir@maths.ox.ac.uk}}

\maketitle

\begin{abstract}
 Supersymmetric microstate geometries were recently conjectured \cite{Eperon2016} to be nonlinearly unstable due to numerical and heuristic evidence, based on the existence of very slowly decaying solutions to the linear wave equation on these backgrounds. In this paper, we give a thorough mathematical treatment of the linear wave equation on both two and three charge supersymmetric microstate geometries, finding a number of surprising results. In both cases we prove that solutions to the wave equation have uniformly bounded local energy, despite the fact that three charge microstates possess an ergoregion; these geometries therefore avoid Friedman's ``ergosphere instability'' \cite{Friedman1978}. In fact, in the three charge case we are able to construct solutions to the wave equation with local energy that neither grows nor decays, although this data must have nontrivial dependence on the Kaluza-Klein coordinate. In the two charge case we construct quasimodes and use these to bound the uniform decay rate, showing that the only possible uniform decay statements on these backgrounds have very slow decay rates. We find that these decay rates are sub-logarithmic, verifying the numerical results of \cite{Eperon2016}. The same construction can be made in the three charge case, and in both cases the data for the quasimodes can be chosen to have trivial dependence on the Kaluza-Klein coordinates.
\end{abstract}

\section{Introduction}

\subsection{Microstate geometries}

``Microstate geometries'' are a large family of solutions to type IIB supergravity with several interesting features (\cite{Maldacena2002} \cite{Balasubramanian2001}, \cite{Lunin:2002iz} \cite{Giusto2004} \cite{Giusto2004-2} \cite{Bena2006} \cite{Berglund2006}). They are smooth and ``asymptotically Kaluza-Klein'': near infinity, they approach the product of five dimensional Minkowski space with five compact dimensions\footnote{Note that four of these compact dimensions will play no role whatsoever in this paper: they could be included, but would not affect our results. Alternatively, we can simply consider the corresponding six dimensional spacetime, which will be the approach taken in this paper.}. They do not possess a black hole region or a horizon, although they do share several geometric features with black hole solutions, including the ``trapping'' of null geodesics and the possibility of possessing an ergoregion. In addition, they can exhibit an ``evanescent ergosurface''\cite{Gibbons2013}: a timelike submanifold on which an otherwise timelike Killing vector field becomes null. The ``fuzzball proposal'' \cite{Mathur:2005zp} conjectures that they provide a geometric description of certain quantum microstates of black holes, providing further motivation for their study.

A natural question to ask regarding microstate geometries is whether they are classically stable, i.e.\ does there exist a suitable function space for the initial data, and an open set around the initial data for the microstate solution in question, such that the future evolution of all initial data in this set remains suitably ``close'' to the microstate geometry? In general, this is a very difficult question to address in a nonlinear field theory such as supergravity, but we may begin to address it by studying suitable \emph{linear} equations on a fixed geometric background. In this paper we shall study the behaviour of linear waves, that is, solutions of the equation
\begin{equation}
\label{equation wave equation}
 \Box_g u = 0
\end{equation}
on microstate geometries, where $g$ is the metric of the corresponding microstate geometry. Note that there are certain linearised supergravity fields which obey this equation \cite{Cardoso:2007ws}, but it can also be considered as a toy model for the linearisation of the equation of motion for the metric, neglecting both the tensorial structure and the coupling to matter.

A large family of microstate solutions have been found, and in particular, the geometries may be either ``supersymmetric'' or not; the supersymmetric microstate geometries possess a global, null Killing vector field \cite{Gutowski:2003rg}, while the non-supersymmetric microstates do not \cite{Jejjala2005}. In \cite{Cardoso:2005gj}, the linear stability of non-supersymmetric microstate geometries was studied. Both heuristic and numerical evidence was presented, all of which points to a linear instability of these geometries. This instability can be understood as an instance of the ``Friedman instability'' or ``ergosphere instability'' \cite{Friedman1978}: the non-supersymmetric microstate geometries possess an ergoregion but no horizon, meaning that perturbations can be localised within the ergoregion and cannot decay in time. In these circumstances Friedman provided a heuristic argument indicating that the local energy of solutions should \emph{grow} in time; this was investigated numerically in, for example (\cite{Cardoso:2007az} \cite{Comins211} \cite{Yoshida}) and was very recently proved rigorously in \cite{Moschidis:2016zjy}, under certain conditions. Although these conditions do not apply to microstate geometries\footnote{Specifically, \cite{Moschidis:2016zjy} requires that the spacetime is asymptotically flat, rather than asymptotically Kaluza-Klein. Note, however, that in the case of non-supersymmetric microstate geometries, we can restrict to waves with trivial dependence on the Kaluza-Klein direction, and the argument of \cite{Moschidis:2016zjy} does apply to these solutions.}, \cite{Cardoso:2005gj} in fact produced evidence for exponentially growing solutions to the linearised equations of motion in the non-supersymmetric microstate geometries. Note that there are also examples of instabilities associated with ergoregions in spacetimes \emph{with} horizons (see, for example,\cite{Shlapentokh-Rothman2014} and \cite{Moschidis:2016wew}, the heuristic work of \cite{Damour:1976kh} \cite{Zouros} \cite{Detweiler:1980uk} and the numerical results in \cite{Dolan:2007mj} \cite{Dolan:2012yt}).

On the other hand, the presence of an additional causal Killing vector field in the supersymmetric microstate geometries might suggest that they have better stability properties than their non-supersymmetric counterparts (see, for example, the comments in \cite{Cardoso:2005gj}). However, in the very recent work \cite{Eperon2016}, both heuristic and numerical evidence was provided which indicates that these geometries might \emph{also} be unstable, but in contrast to their non-supersymmetric counterparts, this instability is conjectured to be nonlinear in nature. The source of this instability was identified as the presence of \emph{stably trapped} null geodesics near the evanescent ergosurface, i.e.\ null geodesics that remain trapped in a bounded area of space, in such a way that nearby null geodesics are also trapped. Indeed, this stable trapping was shown to be a generic feature of spacetimes possessing evanescent ergosurfaces. In addition, heuristic arguments for instability were given that made use of the unusual fact that there are stably trapped null geodesics \emph{with zero energy} measured with respect to an asymptotically timelike Killing vector field; these are the null geodesics that rule the evanescent ergosurface. Note that both \cite{Eperon2016} and the current paper focus only on a special class of supersymmetric microstate geometries, rather than the more general class constructed in \cite{Lunin:2002iz}, which posses fewer symmetries than the spacetimes we consider.

\subsection{Stable trapping and slow decay}

Previous studies of wave propagation on spacetimes with stably trapped null geodesics have shown that linear waves on these backgrounds decay very slowly (\cite{Holzegel2014, Keir2016, Benomio2018}). This suggests that nonlinear instabilities might be present, since waves might have time to ``clump'' sufficiently for nonlinear effects to play a role before dispersion can occur. In all of the references given above, the decay was found to be no faster than ``logarithmic'', that is, there is some open region $U$ and some norm of the initial data $E^{(1)}_0(u)$, depending only on the field $u$ and its first derivative, such that, for solutions $u$ to the wave equation \eqref{equation wave equation} with, say, Schwartz initial data, there is some universal positive constant $C$ such that
\begin{equation}
\label{equation uniform decay slower than log}
 \limsup_{t \rightarrow \infty} \ \sup_{u} \ \log(2 + t) \frac{ || u ||_{H^1[U]} }{ \sqrt{ E^{(1)}_0(u) } } \geq C
\end{equation}
This shows that no uniform decay statement with a uniform rate of decay that is faster than logarithmic can hold. If we instead take norms of the initial data involving higher derivatives, then the factor of $\log(2 + t)$ needs to be replaced by a factor of $\left(\log(2+t)\right)^n$ for some power $n$. Note, however, that this decay rate is always slower than polynomial, for any finite $n$.

Interestingly, in \cite{Eperon2016} numerical evidence was found suggesting that, in microstate geometries, linear waves decay even slower. In particular, the function $\log(2+t)$ in \eqref{equation uniform decay slower than log} should be replaced by another function which grows even slower at large $t$: approximately at the rate $(\log t)/(\log \log t)$. This means that, in order to recover a comparable uniform decay rate, additional derivatives of the initial data must be included in the ``initial energy'' $E_0$. Note, however, that \cite{Eperon2016} used \emph{quasinormal modes} to demonstrate this fact, and these do not arise from compactly supported or Schwartz initial data, so the two results are not directly comparable (see, however, \cite{Gannot2015}). Note that there is an extremely extensive body of work regarding quasinormal modes in the physics literature (see e.g.\ \cite{Kokkotas1999}), and a growing mathematical literature on the subject (see e.g.\ \cite{Zworski2017} for a review), including some work on backgrounds with stably trapped null geodesics \cite{Gannot2012, Gannot2016}.

As mentioned above, the results of \cite{Holzegel2014} and \cite{Keir2016} established that no uniform decay statement with rate faster than logarithmic can hold on the spacetimes investigated, namely, Kerr-AdS and ultracompact neutron stars. These results were complemented by proofs (in \cite{Holzegel2013} and \cite{Keir2016} respectively) of the uniform decay of waves on those backgrounds. In other words, not only \emph{can} waves decay at a (uniform) rate no faster than logarithmic, but in fact, all waves with suitable initial data actually \emph{do} decay at least logarithmically. This should be compared with the classical result of Burq \cite{Burq1998}, establishing that the local energy of waves decays logarithmically in the exterior of any ``obstacles'' in Minkowski space (without restriction on the shape of the obstacles or the trapping of geodesics caused by the obstacles), as well as the recent theorem of Moschidis \cite{Moschidis2015}, showing that the same result holds on a very general class of spacetimes. Indeed, in both of these cases an estimate of the form
\begin{equation}
 || u ||_{H^1(U)} + || \partial_t u ||_{L^2(U)}  \leq C \left( \log(2 + t) \right)^{-m} \sqrt{ E^{(m)}_0(u) }
\end{equation}
holds, where $E^{(m)}_0(u)$ denotes the initial $m$-th order energy of the field $u$, which is a quantity involving up to $(m+1)$ derivatives of the initial data (suitably weighted). In this context, the indication in \cite{Eperon2016} that a slower-than-logarithmic rate of decay might hold in microstate geometries is extremely interesting, although we note again that the quasinormal modes used in \cite{Eperon2016} are not expected to lie in the suitably weighted energy space. Nevertheless, this result may be taken to be even more strongly indicative of a possible nonlinear instability than in the previously studied cases.

\subsection{Boundedness results}

In this paper we provide a thorough mathematical analysis of the behaviour of solutions to the linear wave equation on microstate geometries. As in \cite{Eperon2016}, we restrict attention to supersymmetric microstate geometries, and we also focus on the simplest examples of supersymmetric microstates (rather than the larger class of solutions constructed in, for example, \cite{Lunin:2002iz}). We examine both two \cite{Maldacena2002} and three charge \cite{Giusto2004} microstate geometries; geometrically, these are distinguished by the fact that the three charge geometries exhibit an ergoregion, whereas the two state geometries only exhibit an evanescent ergosurface.

One of the most basic questions we can ask about solutions to the wave equation is whether they are uniformly bounded, and due to the lack of a globally timelike Killing vector field in the microstate geometries this not straightforward. Indeed, the presence of an ergoregion in the three charge geometries, together with the heuristic arguments of Friedman \cite{Friedman1978} and the rigorous proof of Moschidis \cite{Moschidis:2016zjy} (the conditions of which, however, do not apply to microstate geometries) strongly suggests that solutions to the wave equation on three charge microstate geometries might not be uniformly bounded, and in fact, there might exist growing solutions. If this were true, then the very slowly decaying solutions observed in \cite{Eperon2016} would not be the ``worst'' solutions, and we would instead find the more familiar situation of solutions to the wave equation which grow in time, perhaps in the form of exponentially growing mode solutions. Note that the presence of an ergoregion was noticed already in \cite{Jejjala2005}, who also commented on the absence of a ``superradiant instability'' due to the lack of a horizon. 

Despite the considerations above, in section \ref{section boundedness} we prove that, in both the two and three charge microstate geometries, solutions to the wave equation with suitable initial data remain bounded for all time. Note that, in the three charge case, waves remain bounded \emph{despite the presence of an ergoregion}, and so the microstate geometries avoid Friedman's ``ergosphere instability''. In the three charge case our proof of boundedness relies crucially on the presence of the null Killing vector field in the supersymmetric geometries, and so does not apply to the non-supersymmetric geometries, which were previously found to suffer from an ergosphere instability \cite{Cardoso:2005gj}. Additionally, the proof we present ``loses derivatives'', i.e.\ we are only able to bound the local energy at future times by a ``higher order'' energy (involving more derivatives) initially. This means that the boundedness estimate is very unlikely to be of much use in a nonlinear setting, although it works well in the case of linear waves studied in this paper. We also note that our proof makes use of the additional symmetries of the geometries we consider, so it does not apply to all of the more general microstate geometries constructed in \cite{Cardoso:2005gj}.

In the two charge case, we also find that we can prove boundedness with a loss of derivatives. In this case, the asymptotically timelike Killing vector field is globally causal, but becomes null at the evanescent ergosurface, meaning that the corresponding energy degenerates there. We can contrast this with the case of the exterior of black holes: even in the relatively simple case of a Schwarzschild black hole, the asymptotically timelike Killing vector field becomes null on the event horizon, and so the corresponding energy degenerates there. One way\footnote{Another way to approach this was given in \cite{Kay} but this made use of a particular symmetry in the Schwarzschild spacetime.} to overcome this is to make use of the celebrated \emph{red shift effect}: we can modify the vector field so that it is no longer Killing, but we find that the error terms this introduces can themselves be bounded by the non-degenerate energy \cite{Dafermos2009}. However, in the microstate geometries the evanescent ergosurface is timelike: there is no local red shift effect (which is in some ways reminiscent of the case of an extremal black hole -- see \cite{Aretakis:2012ei}), and the presence of trapped null geodesics prevents us from obtaining a suitable ``integrated local energy decay estimate'' (\cite{Ralston}, \cite{Sbierski2013a}) which could be used to bound error terms.

Note that, in another work \cite{Keir:2018hnv}, we have shown that, for a broad class of spacetimes that includes the microstate geometries studied here, this loss of derivatives in the boundedness statement \emph{cannot be avoided}. In other words, it is not possible to bound the energy at some future time in terms of some kind of initial energy. Hence, the ``boundedness with a loss of derivatives'' result which we show here cannot be improved to a standard boundedness result.

\subsection{Non-decay and slow decay results}

Next, in section \ref{section non decay} we show that, on the three charge microstate geometries (which have an ergoregion) we can construct initial data with negative energy with respect to an asymptotically timelike Killing vector field. This follows from the work of Friedman \cite{Friedman1978}, but for completeness and clarity we give a more explicit construction on the three charge microstate geometries. Consequently, there exist solutions to the wave equation whose local energy does not decay in time. Combined with the boundedness result above, we conclude that the behaviour of generic solutions to the wave equation with suitable initial data is neither to decay nor grow over time. In particular, the local energy within the ergoregion will not decay over time, and yet there are no solutions with growing local energy.

The situation is different in the case of two charge microstate geometries, since these do not possess an ergoregion, but only an evanescent ergosurface. Thus we cannot use the construction of Friedman to find solutions that do not decay in time, but we can still prove boundedness in the same way as for the three charge case. Instead of showing that there are solutions which do not decay in time, in section \ref{section slow decay} we adapt the \emph{quasimode} construction, first used (in the context of general relativity) in \cite{Holzegel2014} (see also \cite{Keir2016, Benomio2018}), to construct very slowly decaying solutions. In fact, we are able to construct solutions\footnote{To be precise, we do not actually construct a solution which decays at this rate, but we do construct a sequence of solutions which decay at a rate arbitrarily close to this decay rate, for an arbitrarily long time.} which decay even more slowly than the solutions constructed in \cite{Holzegel2014} and \cite{Keir2016}, i.e.\ at a sub-logarithmic rate, verifying the numerical results of \cite{Eperon2016}. Note that these waves may be chosen to have trivial dependence on the compact directions, so the reason that the general logarithmic decay result of \cite{Moschidis2015} does not hold on supersymmetric microstate geometries is due to the fact that there does not exist a global, timelike Killing vector field on these geometries.

In addition, since we use quasimodes rather than quasinormal modes, we also improve the class of initial data leading to this slow decay rate, since the quasimodes we construct induce Schwartz initial data. In contrast, quasinormal modes do not even have finite energy on hypersurfaces which extend to spacelike infinity, although they do have finite energy on hypersurfaces extending to future null infinity, and the quasinormal modes constructed in \cite{Eperon2016} were found to be localised near the evanescent ergosurface. Nevertheless, this is an important point, since even in Minkowski space, we can construct solutions to the wave equation with arbitrarily slow decay, if we restrict only to initial data with finite energy\footnote{Specifically, we could use geometric optics, or the Gaussian beams of \cite{Sbierski2013a}, to construct solutions to the wave equation that are localised around an ingoing null ray, and then take a sequence of initial data such that this incoming ray is initially positioned at further and further distances from the origin.}. Hence, our quasimode construction not only verifies the slow decay rate found in \cite{Eperon2016}, but also confirms the expectation that this decay is caused by the local geometry of the microstate, and is not an artefact of the slow decay of the initial data towards infinity.

Our quasimode construction is the most technical part of this paper. Before discussing it further, we shall first give a brief overview of the role of quasimodes in the slow decay results of \cite{Holzegel2014} and \cite{Keir2016}. The idea is to first separate the wave equation, and then to note that the radial part of the wave equation involves an effective potential with a local minimum near some fixed radius. The effective potential also involves a factor of $\ell^2$, where $\ell$ is the angular frequency of the wave. The idea is then to construct approximate solutions near this local minimum by first constructing solutions to the corresponding Dirichlet problem, with boundary conditions imposed on either side of the local minimum. To reach these boundaries, the wave has to tunnel through the effective potential, and so we find that, near the boundaries, the wave is exponentially suppressed. Since the height of the potential scales with $\ell^2$, we find that the size of the wave near the boundaries behaves as $e^{-\ell}$. Hence, by smoothly cutting off the solution near these boundaries, we obtain approximate solutions to the wave equation, with errors that are exponentially small in $\ell$. This exponentially small error then leads directly to the logarithmic bound on the decay rate.

In showing that the decay rate on the two charge microstate geometries is even slower than logarithmic, the key observation (made in \cite{Eperon2016}) is the fact that, in these geometries, there are stably trapped null geodesics \emph{with zero energy} measured with respect to the asymptotically timelike Killing vector field. Together with the fact that the wave equation separates on the supersymmetric microstate geometries, this leads to a situation in which the effective potential for the radial part of the wave equation has a local minimum, at which the effective potential \emph{vanishes} to leading order at large $\ell$. We can use this fact by constructing quasimodes that are localised near this local minimum, but then exploiting the fact that both the height \emph{and the width} of the potential barrier which these waves must tunnel through scales with $\ell$; see figure \ref{figure}. Since the microstate geometries are asymptotically flat, the effective potential behaves asymptotically as $\ell^2 r^{-2}$, so the width of the potential barrier scales\footnote{We note in passing that this idea could be used to construct metrics on which the uniform decay rate for linear waves is arbitrarily slow, by ensuring that the effective potential decays slower in $r$, although these spacetimes would not be asymptotically flat.} as $\ell$. This then leads to quasimodes with errors that are super-exponentially suppressed in $\ell$, and that in turn leads to slower than logarithmic decay rates.

\begin{figure}[ht]
\centering
\includegraphics[width = 0.8\linewidth, keepaspectratio]{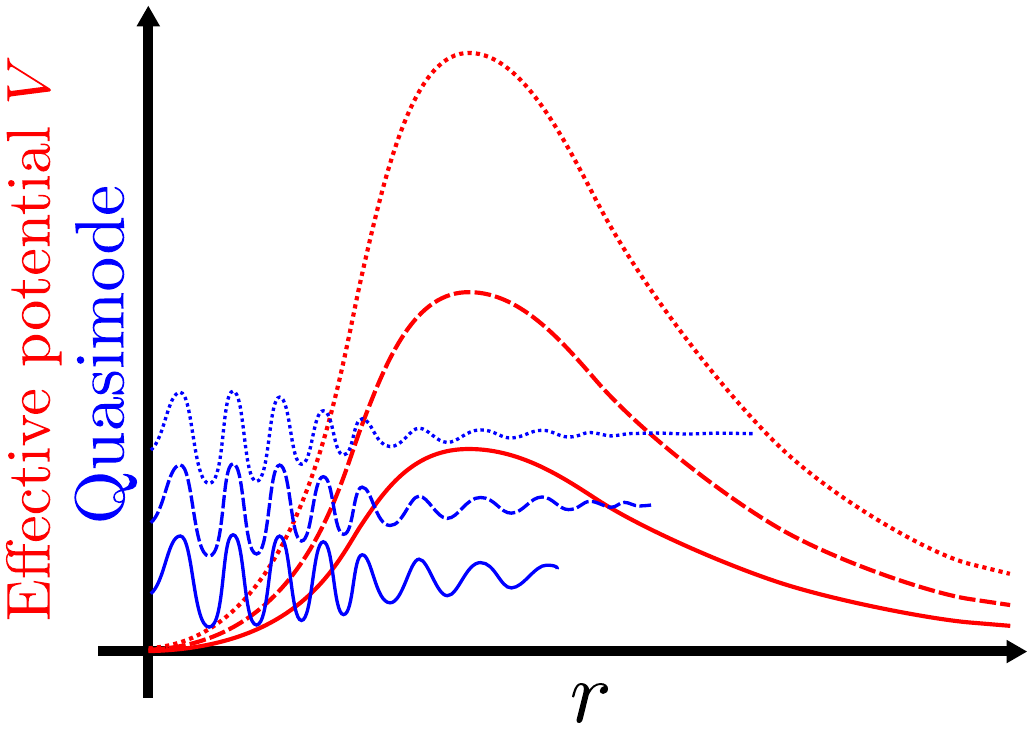}
\caption{A sketch of the regular part of the effective potential (red) at three increasing values of $\ell$ and the corresponding quasimodes (blue). The solid lines are associated with the lowest value of $\ell$, the dashed lines an intermediated value of $\ell$, and the dotted lines the highest value of $\ell$. Note that, at higher values of $\ell$, the potential barrier increases in height, and the quasimodes can be cut-off at larger values of $r$. In this kind of sketch the quasimodes will often be shifted up by their corresponding eigenvalue; in this case, the quasimodes all have very similar eigenvalues, so we instead sketch the quasimodes shifted up by the corresponding value of $\ell$. }
\label{figure}
\end{figure}

We also note here that the quasimodes we construct can be chosen to have trivial dependence on the coordinates parameterizing the compact dimensions. Note that this is \emph{not} the case for the non-decaying solutions constructed on three charge geometries in section \ref{section non decay}, which must have a nontrivial dependence on one of the compact dimensions. If we wish to specialise to solutions of the wave equation with trivial dependence on the compact directions, then we obtain the same results for both two and three charge microstates, i.e.\ we have a bound on the decay rate which is slower than logarithmic, again matching the numerical results of \cite{Eperon2016}.

When considering these results, we must bear in mind the following caveat: when using quasimodes to construct slowly decaying solutions, we do not actually \emph{construct} a solution which decays at a slower rate than logarithmic -- indeed, it may be the case that all solutions with suitable initial data decay faster\footnote{This can be contrasted with the construction of non-decaying solutions in the three charge case, in which we actually construct initial data for a solution whose local energy does not decay.}. However, we do show that no \emph{uniform} decay statement with a faster rate of decay can hold. To be more precise, we show that there are positive constants $C_m$ such that, for solutions $u$ to the wave equation arising from Schwartz initial data, we have
\begin{equation}
 \limsup_{t\rightarrow \infty} \sup_{u} \left( \frac{\log(2+t)}{\log\log(2+t)} \right)^m \frac{ ||u||_{H^1(U)} + ||\partial_t u||_{L^2(U)} }{\sqrt{E^{(m)}_0(u)}} \geq C_m
\end{equation}

Note also that we do not address the issue of whether all solutions with suitable initial data actually \emph{do} decay. If, as above, we restrict to waves with trivial dependence on the compact directions, then we can use the generic results of \cite{Moschidis2015} to prove the decay estimate
\begin{equation}
 || u ||_{H^1(U)} + || \partial_t u ||_{L^2(U)} \leq C \left( \log(2 + t) \right)^{-m} \sqrt{ E^{(m+1)}_0(u) }
\end{equation}
although we do not expect this to be sharp, in the sense that we expect a slightly faster rate of decay when estimating the solution in terms of a given number of derivatives of the initial data.

\vspace{5mm}

In summary, we perform a thorough analysis of the behaviour of solutions to the linear wave equation on supersymmetric microstate geometries. In both the two and three charge cases, we establish in section \ref{section boundedness} that the local energy of solutions arising from suitable initial data is bounded at all times. This is particularly surprising in the three charge case, given that these geometries possess an ergoregion and so might be thought to suffer from Friedman's ergosphere instability. On the other hand, following Friedman, on the three charge geometries we are able (in section \ref{section non decay}) to construct solutions whose local energy does not decay in time. Finally, in section \ref{section slow decay} we construct quasimodes on two charge microstate geometries, which we then use to show that no uniform decay statement can hold, except those with very slow (sub-logarithmic) decay rates.

\section{The geometries and their properties}

We will study both two and three charge microstate geometries. Here we describe these geometries and their respective metrics, and discuss some of their basic properties. See \cite{Maldacena2002} \cite{Balasubramanian2001}, \cite{Lunin:2002iz} \cite{Giusto2004} \cite{Giusto2004-2} \cite{Bena2006} \cite{Berglund2006} and \cite{Eperon2016} and the references therein for additional details.

\subsection{Three charge microstates}
The three charge microstate geometries are $\mathbb{R} \times \mathbb{S}^3 \times \mathbb{R}^2$, where points in $\mathbb{R}$ are given a coordinates $t$, points in $\mathbb{S}^3$ are given standard Hopf coordinates $(\theta, \phi, \psi)$, and points in $\mathbb{R}^2$ are given coordinates $(r, z)$, which are related to the standard polar coordinates $(r, \vartheta)$ by the identification $z = \frac{2\pi}{R_z} \vartheta$. This manifold is equipped with the metric
\begin{equation}
\label{equation three charge metric}
 \begin{split}
  g &= -\frac{1}{h}(\upd t^2 - \upd z^2) 
  + \frac{Q_p}{h f}(\upd t - \upd z)^2
  + hf\left( \frac{\upd r^2}{r^2 + (\tilde{\gamma}_1 + \tilde{\gamma}_2)^2\eta} + \upd \theta^2 \right) \\
  &\phantom{=} + h\left( r^2 + \tilde{\gamma_1}(\tilde{\gamma}_1 + \tilde{\gamma}_2)\eta - \frac{(\tilde{\gamma}_1^2 - \tilde{\gamma}_2^2)\eta Q_1 Q_2 \cos^2 \theta}{h^2 f^2} \right)\cos^2 \theta \upd \psi^2 \\
  &\phantom{=} + h\left( r^2 + \tilde{\gamma}_2(\tilde{\gamma}_1 + \tilde{\gamma}_2)\eta + \frac{(\tilde{\gamma}_1^2 - \tilde{\gamma}_2^2)\eta Q_1 Q_2 \sin^2 \theta}{h^2 f^2} \right)\sin^2 \theta \upd \phi^2 \\
  &\phantom{=} + \frac{Q_p (\tilde{\gamma}_1 + \tilde{\gamma}_2)^2 \eta^2}{h f}(\cos^2 \theta \upd \psi + \sin^2 \theta \upd \phi)^2 \\
  &\phantom{=} - 2\frac{\sqrt{Q_1 Q_2}}{h f}\left( \tilde{\gamma}_1 \cos^2 \theta \upd \psi + \tilde{\gamma}_2 \sin^2 \theta \upd \phi \right)(\upd t - \upd z) \\
  &\phantom{=} - 2\frac{(\tilde{\gamma}_1 + \tilde{\gamma}_2) \eta \sqrt{Q_1 Q_2}}{h f}\left( \cos^2 \theta \upd \psi + \sin^2 \theta \upd \phi \right) \upd z
 \end{split}
\end{equation}
where 
\begin{equation}
 \begin{split}
  \eta &= \frac{Q_1 Q_2}{Q_1 Q_2 + Q_1 Q_p + Q_2 Q_p} \\
  \tilde{\gamma}_1 &= -a\tilde{n} \\
  \tilde{\gamma}_2 &= a(\tilde{n}+1) \\
  Q_p &= a^2 \tilde{n}(\tilde{n}+1) \\
  f &= r^2 + (\tilde{\gamma}_1 + \tilde{\gamma}_2)\eta\left( \tilde{\gamma}_1 \sin^2 \theta + \tilde{\gamma}_2 \cos^2 \theta \right) \\
  h &= \sqrt{ \left( 1 + \frac{Q_1}{f}\right) \left( 1 + \frac{Q_2}{f} \right)} \\
  a &= \frac{\sqrt{Q_1 Q_2}}{R_z}
 \end{split}
\end{equation}
and the ranges of the coordinates are $t \in \mathbb{R}$, $0 \leq \theta \leq \frac{\pi}{2}$, $r \geq 0$, $0 \leq \phi$, $\psi \leq 2\pi$ and $0 \leq z \leq R_z$. 

The $z$ coordinate parametrises a ``Kaluza Klein'' circle of radius $R_z$, i.e.\ the coordinate values $z = 0$ and $z = R_z$ are identified, as indicated by the description of the manifold given above. Note that the $(r,z)$ plane asymptotically has the geometry of a cylinder with radius $R_z$, and not a flat plane. The coordinates $\theta$, $\phi$ and $\psi$ parametrize a $3$-sphere. For more details on the global structure of these spacetimes, and other similar spacetimes, see \cite{Gibbons2013}.

The quantities $Q_1$, $Q_2$ and $Q_p$ are the three ``charges'' of the spacetime: these are constants taking values in $\mathbb{R}_+ \setminus \{0\}$. It is useful to express the charge $Q_p$ in terms of the (non-negative) integer $\tilde{n}$ and another (non-negative) real number $a$, which is itself given by the square root of the products of the charges $Q_1$ and $Q_2$, divided by the period of the $z$ coordinate. Then the constants $\eta$, $\tilde{\gamma}_1$, $\tilde{\gamma}_2$ and the functions $f$ and $h$ are expressed in terms of these constants, together with (in the cases of $f$ and $h$) the coordinate functions $r$ and $\theta$.

The reader should note the following important facts regarding these manifolds:
\begin{itemize}
	\item These spacetimes are non-singular, in fact, the metric is everywhere smooth.
	\item The spacetimes are globally hyperbolic.
	\item There is a notion of ``null infinity'' for these spacetimes (see below), and with respect to this notion the spacetimes do not have a black hole region.
\end{itemize}

The six dimensional spacetime described by this metric is asymptotically Kaluza-Klein in the sense that
\begin{equation}
 g = -\upd t^2 + \upd z^2 + \upd r^2 + r^2 \slashed{g}_{\mathbb{S}^3} + \mathcal{O}(r^{-1})
\end{equation}
where $\slashed{g}_{\mathbb{S}^3}$ is the standard round metric on the unit $3$-sphere with (Hopf) coordinates $\theta, \phi, \psi$, and the norm of a tensor is defined relative to a basis of $1$-forms $\upd t$, $\upd z$ and $r e_A$, where $e_A$ are an orthonormal basis for the cotangent space of the unit 3-sphere $\mathbb{S}^3$. Together with additional fields, the metric given above provides a solution to the supergravity equations. Note that the metric is regular everywhere, including on the surface defined by $f = 0$, since near this surface $h \sim f^{-1}$. In addition, as $r \rightarrow 0$ the Kaluza-Klein circle shrinks to zero size while the $3$-sphere does not, however, coordinates can be found showing that the metric is regular \cite{Giusto2004}.

The solution possesses $4$ Killing vector fields as well as a ``hidden'' symmetry, which allows us to separate the wave equation (and the geodesic equation). The most important vector fields for our purposes (all of which are Killing) are 
\begin{equation}
 \begin{split}
  T &:= \left.\frac{\partial}{\partial t}\right|_{z, r, \theta, \phi, \psi} \\
  Z &:= \left.\frac{\partial}{\partial z}\right|_{t, r, \theta, \phi, \psi} \\
  \Psi &:= \left.\frac{\partial}{\partial \psi}\right|_{t, z, r, \theta, \phi} \\
  \Phi &:= \left.\frac{\partial}{\partial \phi}\right|_{t, z, r, \theta, \psi} \\
  V &:= T + Z
 \end{split}
\end{equation}
In particular, the last Killing vector field in the list above, $V$, is null everywhere and is future-directed. In contrast, the vector field $T$ is future directed and timelike at large $r$, and (when none of the charges vanish) spacelike at small $r$. This spacetime therefore has a genuine ergoregion associated with the vector field $T$. Indeed, we can compute
\begin{equation*}
	g(T,T)
	=
	-\frac{1}{h} \left( 1 - \frac{Q_p}{f} \right)
\end{equation*}
Since $h > 0$, the ergoregion is given by the region in which $f < Q_p$, i.e.\ it is the region
\begin{equation*}
	r^2
	<
	a^2 \left(
		\tilde{n} (\tilde{n} + 1)
		+ \eta \tilde{n} \sin^2 \theta
		- \eta (\tilde{n} + 1) \cos^2 \theta
	\right)
\end{equation*}
Clearly, if $a$, $\tilde{n}$ and $\eta$ are (strictly) positive then there is some region in which this condition holds (e.g.\ close to $\theta = \frac{\pi}{2}$).

A notion of ``evanescent ergosurface'' can also be introduced for this spacetime as in \cite{Eperon2016}, where it was defined as the submanifold on which $Z$ and $V$ are orthogonal, i.e.\ $g(Z, V) = 0$. This plays an important role when considering solutions of the wave equation which have trivial dependence on the $z$ coordinate, which we shall briefly outline here. Associated with the vector field $V$ is a non-negative ``energy'', but since $V$ is null this energy is degenerate. Specifically, it does not control derivatives of the solutions in the $V$ direction, although it controls derivatives in all the other directions. However, when $Z$ and $V$ are not orthogonal, derivatives in the $V$ direction can be expressed in terms of $Z$ derivatives and other derivatives which are controlled by the $V$ energy. Hence, the $V$ energy \emph{does} control all of the derivatives of a field which has trivial $Z$ dependence, except at the points where $Z$ and $V$ are orthogonal, at which the $V$ energy once again becomes degenerate. Hence, for these kinds of solutions, the submanifold defined by $g(Z, V) = 0$ plays the role of an evanescent ergosurface.

\subsection{Two charge microstates}
The manifold of a two charge microstate is identical to that of a three charge microstate: as before it can be viewed as $\mathbb{R} \times \mathbb{S}^3 \times \mathbb{R}^2$ with coordinates $t$ on $\mathbb{R}$, $(\theta, \phi, \psi)$ (Hopf coordinates) on $\mathbb{S}^3$, and $(r, z)$ on $\mathbb{R}^2$, which are related to polar coordinates in the same was as for a three charge microstate.

The metric of two charge microstates can be obtained from the metric for three charge microstates by setting $Q_p = 0$, which in turn means that $\tilde{n} = \tilde{\gamma}_1 = 0$ and $\eta = 1$. We summarize the important differences between the two and three charge microstate geometries below.

The $T$ Killing vector field is never spacelike in the two charge microstate geometry, in contrast to the three charge microstate geometry. However, it does become null on the submanifold defined by $r = 0$, $\theta = \frac{\pi}{2}$. Thus, unlike the three charge geometry, the two charge microstate geometry does not have an ergoregion but only an ``evanescent ergosurface'', which is reminiscent to the boundary of an ergoregion.

As $r \rightarrow 0$ and $\theta \rightarrow \frac{\pi}{2}$, the Kaluza-Klein circle smoothly pinches off to zero size, and we find that the submanifold $r = 0$, $\theta = \frac{\pi}{2}$, $t = \text{const.}$ (on which $T$ is null) has dimension $1$. In fact, points on this submanifold are uniquely specified by their $\phi$ coordinate, as expected from the fact that $(\theta, \psi, \phi)$ are Hopf coordinates (the level sets of $\theta$ on $\mathbb{S}^3$ are tori for $\theta \in (0, \frac{\pi}{2})$, parametrised by $(\phi, \psi)$, while the levels sets $\theta = 0$ and $\theta = \frac{\pi}{2}$ are circles parametrised by $\psi$ and $\phi$ respectively).

For reference, we provide the metric of the two charge microstate geometries we consider below:
\begin{equation}
\label{equation two charge metric}
 \begin{split}
  g &= -\frac{1}{h}(\upd t^2 - \upd z^2)
    + hf\left( \upd \theta^2 + \frac{\upd r^2}{r^2 + a^2}\right) 
    - \frac{2a \sqrt{Q_1 Q_2}}{hf}\left( \cos^2 \theta \upd z \upd \psi + \sin^2 \theta \upd t \upd \phi \right) \\
    &\phantom{=}+ h\left( \left( r^2 + \frac{a^2 Q_1 Q_2 \cos^2 \theta}{h^2 f^2}\right) \cos^2 \theta \upd \psi^2
    + \left( r^2 + a^2 - \frac{a^2 Q_1 Q_2 \sin^2 \theta}{h^2 f^2} \right)\sin^2 \theta \upd \phi^2 \right) 
 \end{split}
\end{equation}
where as before
\begin{equation}
 a := \frac{\sqrt{Q_1 Q_2}}{R_z}
\end{equation}

\section{Preliminary calculations and notation}
First we shall need several preliminary calculations which serve to set up notation and to prove some basic statements.

\begin{definition}[Notation]
 We shall use the notation
\begin{equation*}
 a \lesssim b
\end{equation*}
to indicate that there is some positive constant $C > 0$, independent of all parameters that are varying in our set up, such that
\begin{equation*}
 a \leq C b
\end{equation*}
Similarly, we shall sometimes use the notation $a \gtrsim b$. Finally, we use the notation
\begin{equation*}
 a \sim b
\end{equation*}
to indicate that there are positive constants $c$, $C > 0$ such that
\begin{equation*}
 cb \leq a \leq Cb
\end{equation*}
Note that the charges $Q_1$, $Q_2$, $Q_p$ and the parameters $\tilde{\gamma}_1$, $\tilde{\gamma}_2$ will be considered \emph{fixed} parameters during our calculations, so that, for example, $a \lesssim b$ means that there is some constant $C$, which may depend on $Q_1$, $Q_2$, $Q_p$, $\tilde{\gamma}_1$ and $\tilde{\gamma}_2$ but which is independent of all other parameters, such that $a \leq C b$.

We also use ``musical notation'': for any pair of covectors $\mu$ we define the vector $\mu^\sharp$ by
\begin{equation}
 \nu( \mu^\sharp ) = g^{-1}(\nu, \mu)
\end{equation}
for \emph{any} covector $\nu$. Similarly, given any vector $V$ we define the covector $V^\flat$ by
\begin{equation}
 V^\flat (X) = g(V, X)
\end{equation}
for \emph{any} vector $X$.

\end{definition}

\begin{definition}[The energy momentum tensor]
 We define the energy momentum tensor associated to a function $u$ as follows:
\begin{equation}
 Q_{\mu \nu}[u] := (\partial_\mu u)(\partial_\nu u) - \frac{1}{2}g_{\mu \nu}\left( (g^{-1})^{\alpha \beta}(\partial_\alpha u)(\partial_\beta u) \right)
\end{equation}
\end{definition}

\begin{definition}[Energy currents]
 Given a vector field $X$ and a function $u$ we define the associated energy current:
\begin{equation}
 {^{(X)}J}^\mu[u] := X^\nu Q_{\nu}^{\phantom{\nu}\mu}[u]
\end{equation}
 We shall sometimes refer to the vector field $X$ as a ``multiplier''.
\end{definition}

\begin{definition}[Deformation tensors]
 Given a vector field $X$ we define the associated deformation tensor
\begin{equation}
 {^{(X)}\pi}_{\mu \nu} := (\mathcal{L}_X g)_{\mu \nu} = \nabla_\mu X_\nu + \nabla_\nu X_\mu
\end{equation}
where $\nabla$ is the Levi-Civita connection associated with $g$. In particular, if $X$ is a Killing vector of $g$ then ${^{(X)}\pi} = 0$.
\end{definition}

We have the following classical energy identity, which is a consequence of the divergence theorem:
\begin{proposition}[The energy identity]
\label{proposition energy identity}
 Let $\mathcal{U}$ be a compact open set with smooth boundary $\partial \mathcal{U}$, and let $u$ and $X$ be smooth. Then
\begin{equation}
 \int_{\partial \mathcal{U}} \imath_{{^{(X)}J}[u]} \dVol_g = \int_{\mathcal{U}} \left( \frac{1}{2} {^{(X)}\pi}^{\mu \nu} Q_{\mu \nu}[u] + (Xu)\Box_g u \right) \dVol_g
\end{equation}
where $\dVol_g$ denotes the volume form associated with $g$, and $\imath$ denotes the interior product. Moreover, the same statement holds if $\mathcal{U}$ is not compact but $u$ decays sufficiently rapidly and $|X|$ is bounded.
\end{proposition}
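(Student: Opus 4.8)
The plan is to reduce everything to the divergence theorem applied to the vector field ${^{(X)}J}[u]$. Concretely, I would first recall that for any vector field $Y$ one has $\int_{\mathcal{U}} (\nabla_\mu Y^\mu)\,\dVol_g = \int_{\partial\mathcal{U}} \imath_Y \dVol_g$, so the whole proposition follows once I establish the pointwise identity
\begin{equation*}
 \nabla_\mu {^{(X)}J}^\mu[u] = \frac{1}{2}{^{(X)}\pi}^{\mu\nu}Q_{\mu\nu}[u] + (Xu)\,\Box_g u .
\end{equation*}

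To prove this pointwise identity I would expand using the Leibniz rule, $\nabla_\mu {^{(X)}J}^\mu = (\nabla_\mu X^\nu)Q_{\nu}^{\phantom{\nu}\mu}[u] + X^\nu (\nabla_\mu Q_{\nu}^{\phantom{\nu}\mu}[u])$, and then handle the two terms separately. For the first term, since $Q_{\mu\nu}[u]$ is manifestly symmetric, only the symmetric part of $\nabla_\mu X_\nu$ contributes, and that is exactly $\tfrac{1}{2}{^{(X)}\pi}_{\mu\nu}$ by the definition of the deformation tensor; raising indices gives $\tfrac12 {^{(X)}\pi}^{\mu\nu}Q_{\mu\nu}[u]$. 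For the second term, the key computational lemma is that the energy-momentum tensor is divergence-free up to a source, namely $\nabla^\mu Q_{\mu\nu}[u] = (\Box_g u)(\partial_\nu u)$; this is the one genuine calculation, and it follows by differentiating the definition $Q_{\mu\nu}[u] = (\partial_\mu u)(\partial_\nu u) - \tfrac12 g_{\mu\nu}(\partial u, \partial u)^\sharp$, commuting covariant derivatives on the scalar $u$ (so that $\nabla_\mu\nabla_\nu u$ is symmetric and no curvature terms appear), and using metric compatibility $\nabla g = 0$; the cross terms cancel and one is left with $(\Box_g u)(\partial_\nu u)$. Contracting with $X^\nu$ yields $(Xu)\Box_g u$, which completes the pointwise identity.

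Finally, I would dispatch the non-compact case by an exhaustion argument: take a sequence of compactly supported cutoffs $\chi_n$ equal to $1$ on larger and larger sets, apply the identity already proved to $\chi_n {^{(X)}J}[u]$ (or equivalently integrate over an exhausting sequence of compact sets), and observe that under the stated hypotheses — $u$ decaying sufficiently rapidly and $|X|$ bounded — the boundary contributions on the ``far'' part of $\partial \mathcal{U}$ and the error terms generated by $\nabla \chi_n$ tend to zero, since both $Q[u]$ and ${^{(X)}J}[u]$ are quadratic in $\partial u$ and hence decay fast enough to be integrable against the bounded multiplier. I do not expect any real obstacle here; the only step requiring care is the divergence-of-$Q$ computation, and even that is standard once one notes that the scalar Hessian $\nabla\nabla u$ is symmetric so no Ricci terms intervene.
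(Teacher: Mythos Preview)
Your proposal is correct and is exactly the standard argument the paper has in mind: the paper simply states that the proposition ``is a consequence of the divergence theorem'' and gives no further details, so your pointwise computation of $\nabla_\mu {^{(X)}J}^\mu[u]$ together with the exhaustion argument for the non-compact case is precisely what is being invoked.
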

In particular, proposition \ref{proposition energy identity} means that, if $u$ solves the wave equation and $X$ is a Killing vector field of $g$, then
\begin{equation*}
 \int_{\partial \mathcal{M}} \imath_{{^{(X)}J}[u]} \dVol_g = 0
\end{equation*}

We shall introduce notation for several regions of the spacetime manifold:
\begin{definition}[Submanifolds of the microstate geometries]
We use the notation $\mathcal{M}$ for the manifold associated with either the two or three charge microstate geometry.

In both the two and three charge microstate geometries we define the hypersurfaces of constant $t$:
\begin{equation}
 \Sigma_{t_1} := \{x \in \mathcal{M} \ \big| \ t(x) = t_1\}
\end{equation}
as well as the open spacetime region $\mathcal{M}_{t_0}^{t_1} \subset M$
\begin{equation}
 \mathcal{M}_{t_0}^{t_1} := \{ x \in \mathcal{M} \ \big| t_0 < t(x) < t_1 \}
\end{equation}

We also define the ``evanescent ergosurface'' as the submanifold $\mathcal{S}_t \subset \Sigma_t$ defined by
\begin{equation}
 \mathcal{S}_t := \{ x \in \Sigma_t \ \big| \ f\left( r(x), \theta(x) \right) = 0 \}
\end{equation}
Note that in the case of the three charge microstate geometries the submanifold $\mathcal{S}_t$ is a genuine hypersurface within $\Sigma_t$, i.e.\ a co-dimension one submanifold of $\Sigma_t$, while in the case of the two charge microstate geometries, the submanifold $\mathcal{S}_t$ is the co-dimension four (i.e.\ one dimensional) submanifold given by
\begin{equation*}
 \mathcal{S}_t = \left\{ x \in \Sigma_t \ \big| \ r = 0 \ , \ \theta = \frac{\pi}{2} \right\}
\end{equation*}

In the two charge microstate geometries, we define the open region (as a subset of $\Sigma_t$) containing the evanescent ergosurface:
\begin{equation}
 \tilde{\mathcal{S}}_t^\epsilon := \left\{ x \in \Sigma_t \ \big| \ r < \epsilon \ , \ \frac{\pi}{2} - \theta < \epsilon \right\}
\end{equation}

In the three charge microstate geometries, we can define the ergoregion as the open region given by:
\begin{equation}
 \mathcal{E}_t := \left\{ x \in \Sigma_t \ \big| \ g(T, T) > 0 \right\}
\end{equation}
Similarly, we define a slightly enlarged region containing the ergoregion as follows:
\begin{equation}
 \tilde{\mathcal{E}}_t^\epsilon := \left\{ x \in \Sigma_t \ \big| \ g(T, T) > -\epsilon \right\}
\end{equation}

\end{definition}

We shall also need the following properties of the microstate geometry metrics, which can be found in (\cite{Maldacena2002} \cite{Balasubramanian2001}, \cite{Lunin:2002iz} \cite{Giusto2004} \cite{Giusto2004-2} \cite{Bena2006} \cite{Berglund2006}):
\begin{proposition}[The volume form]
 On both the three charge and two charge microstate geometries, the volume form induced by the metric is given by
\begin{equation}
 \dVol_g = -r h f \sin\theta \cos \theta \ \upd t \wedge \upd z \wedge \upd r \wedge \upd \theta \wedge \upd \phi \wedge \upd \psi
\end{equation}
\end{proposition}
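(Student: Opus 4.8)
\emph{Proof proposal.} In any coordinate chart one has $\dVol_g = \pm\sqrt{|\det(g_{\mu\nu})|}\,\upd x^0\wedge\cdots\wedge\upd x^5$, the sign being fixed by the orientation convention, so the assertion is equivalent to the identity
\be
 \det(g_{\mu\nu}) = -\,r^2 h^2 f^2 \sin^2\theta\cos^2\theta
\ee
in the coordinates $(t,z,r,\theta,\phi,\psi)$, the negative sign of the determinant being forced by the Lorentzian signature and the sign in the stated formula being a matter of orientation. The plan is to exploit the block structure of the metric. Reading off (\ref{equation three charge metric}) — and likewise (\ref{equation two charge metric}) — one sees that $\upd r$ and $\upd\theta$ occur \emph{only} in the single term $hf\big(\upd r^2/(r^2+(\tilde\gamma_1+\tilde\gamma_2)^2\eta)+\upd\theta^2\big)$, so $(g_{\mu\nu})$ is block diagonal: a diagonal $2\times2$ block in $(r,\theta)$ with determinant $(hf)^2/(r^2+(\tilde\gamma_1+\tilde\gamma_2)^2\eta)$, and a $4\times4$ block $A$ in $(t,z,\phi,\psi)$. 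Hence $\det(g_{\mu\nu}) = \det A \cdot (hf)^2/(r^2+(\tilde\gamma_1+\tilde\gamma_2)^2\eta)$, and it remains to establish
\be
 \det A = -\,r^2\sin^2\theta\cos^2\theta\,\big(r^2+(\tilde\gamma_1+\tilde\gamma_2)^2\eta\big) .
\ee
In particular, all dependence on $h$ and $f$ must drop out of $\det A$, which is already a useful consistency check.

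To evaluate $\det A$ I would first extract the angular scaling: in (\ref{equation three charge metric}) the coordinate $\psi$ enters every component through a factor of at least $\cos^2\theta$ in the corresponding entry of $A$, and $\phi$ through a factor of at least $\sin^2\theta$, so $A = \Lambda\tilde A\Lambda$ with $\Lambda = \mathrm{diag}(1,1,\sin\theta,\cos\theta)$ (entries in the order $t,z,\phi,\psi$) and $\tilde A$ a smooth symmetric matrix; thus $\det A = \sin^2\theta\cos^2\theta\,\det\tilde A$, accounting for the trigonometric factor in the target. The remaining $4\times4$ determinant $\det\tilde A$ I would compute by bringing the quadratic form to a sum of squares: the one-forms $\upd t-\upd z$ and $\upd z$ (alongside $\cos^2\theta\,\upd\psi$ and $\sin^2\theta\,\upd\phi$) are the natural building blocks appearing in (\ref{equation three charge metric}), and completing the square successively — first in $\upd t-\upd z$, using the $\tfrac{Q_p}{hf}(\upd t-\upd z)^2$ term to absorb its cross terms with $\upd\psi,\upd\phi$; then in $\upd z$, using the $\tfrac1h\upd z^2$ piece; then in $\upd\psi,\upd\phi$ — diagonalises the form, whereupon $\det\tilde A$ is the product of the resulting coefficients, and one checks that the $h$, $f$ powers collapse and the factor $r^2+(\tilde\gamma_1+\tilde\gamma_2)^2\eta$ emerges. (A brute-force cofactor expansion of the $4\times4$ is a messier alternative.) The two charge metric (\ref{equation two charge metric}) is the specialisation $Q_p=0$, so that $\tilde\gamma_1=0$, $\eta=1$ and $(\tilde\gamma_1+\tilde\gamma_2)^2\eta=a^2$; there the computation is strictly easier, since $A$ is tridiagonal in the ordering $(\phi,t,z,\psi)$ — the only nonzero off-diagonal entries being $g_{t\phi}$, $g_{tz}$, $g_{z\psi}$ — and $\det A$ follows at once from the continuant recursion.

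The only real obstacle is the algebra of the $4\times4$ determinant in the three charge case: there are genuine cross terms linking $(t,z)$ to $(\phi,\psi)$, an extra $\upd\psi\,\upd\phi$ coupling coming from the $(\cos^2\theta\,\upd\psi+\sin^2\theta\,\upd\phi)^2$ term, and the diagonal entries $g_{\psi\psi},g_{\phi\phi}$ carry both positive and negative powers of $h$, so some bookkeeping is needed to see that after multiplying by the $(r,\theta)$ block the net powers are exactly $h^2 f^2$. Completing the square is the cleanest way to make these cancellations manifest; once $\det(g_{\mu\nu}) = -r^2 h^2 f^2 \sin^2\theta\cos^2\theta$ is in hand, taking the square root (noting $hf>0$, with $hf$ tending to a finite limit as $f\to0$ since $h\sim f^{-1}$ there) and fixing the orientation gives the stated formula for $\dVol_g$.
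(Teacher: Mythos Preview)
The paper does not actually prove this proposition; it simply records it as a known property of these metrics and points to the original references for the computation. Your proposal therefore supplies what the paper omits: an explicit computational strategy for evaluating $\det(g_{\mu\nu})$.

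Your approach is sound. The block-diagonal splitting into the $(r,\theta)$ block and the $4\times4$ block $A$ in $(t,z,\phi,\psi)$ is correct --- neither $\upd r$ nor $\upd\theta$ mixes with any other one-form in (\ref{equation three charge metric}) or (\ref{equation two charge metric}) --- and reduces the problem to the determinant of $A$. One small correction in the two charge case: there $g_{tz}=0$ (the $(t,z)$ part is $-h^{-1}(\upd t^2-\upd z^2)$, with no cross term), so $A$ is not merely tridiagonal in your ordering but genuinely block-diagonal as two $2\times2$ blocks in $(t,\phi)$ and $(z,\psi)$. Their determinants are $-(r^2+a^2)\sin^2\theta$ and $r^2\cos^2\theta$ respectively, the $h$ and $f$ dependence cancelling exactly as you anticipated, and this immediately gives $\det A=-r^2(r^2+a^2)\sin^2\theta\cos^2\theta$. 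For the three charge case your strategy of successive square-completion in $\upd t-\upd z$, then $\upd z$, then the angular one-forms is the natural way to organise the algebra, though one could alternatively verify the formula by direct (if tedious) expansion.
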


\begin{proposition}[The hypersurfaces $\Sigma_t$ are (uniformly) spacelike]
 On both the three charge and two charge microstate geometries, the hypersurfaces $\Sigma_t$ are uniformly spacelike. Indeed, we have (see the comments under equation (5.4) in \cite{Giusto2004-2})
\begin{equation}
 g^{-1}(\upd t , \upd t) = -\frac{1}{hf} \left( f + Q_1 + Q_2 + Q_p + \frac{ Q_1 Q_2 + Q_1 Q_p + Q_2 Q_p}{r^2 + (\tilde{\gamma}_1 + \tilde{\gamma}_2)^2 \eta } \right)
\end{equation}
which is bounded both above and below by some negative constants, depending on the constants $Q_1$, $Q_2$, $Q_p$, $\tilde{\gamma}_1$ and $\tilde{\gamma}_2$. Hence the hypersurfaces of constant $t$ are uniformly spacelike.
\end{proposition}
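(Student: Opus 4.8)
The plan is to derive the displayed identity for $g^{-1}(\upd t, \upd t)$ and then to read the uniform bounds off the resulting explicit expression. For the identity itself, the key structural observation is that in the coordinates $(t, z, r, \theta, \phi, \psi)$ both metrics (\ref{equation three charge metric}) and (\ref{equation two charge metric}) are block diagonal: the only terms involving $\upd r$ or $\upd\theta$ are the diagonal pieces $hf\left(\frac{\upd r^2}{r^2 + (\tilde\gamma_1 + \tilde\gamma_2)^2\eta} + \upd\theta^2\right)$, and all remaining components involve only $(t, z, \phi, \psi)$. Since $\upd t$ has no component along $\upd r$ or $\upd\theta$, the quantity $g^{-1}(\upd t, \upd t)$ equals the $tt$-entry of the inverse of the $4\times 4$ block in $(t, z, \phi, \psi)$; carrying out this finite inversion (as done in \cite{Giusto2004-2}) produces the stated formula. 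In practice I would quote the identity from \cite{Giusto2004-2} and, if a self-contained check is wanted, verify it by contracting $g^{-1}(\upd t, \cdot\,)$ against the coordinate $1$-forms.

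For the bounds I would work directly with the explicit expression, written as $g^{-1}(\upd t, \upd t) = -N/(hf)$ where
\[
 N := f + Q_1 + Q_2 + Q_p + \frac{Q_1 Q_2 + Q_1 Q_p + Q_2 Q_p}{r^2 + (\tilde\gamma_1 + \tilde\gamma_2)^2 \eta}.
\]
The goal is to show that $N$ and $hf$ are both strictly positive and mutually comparable (with constants depending only on the charges) throughout $\mathcal M$. Since $h = \sqrt{(1 + Q_1/f)(1 + Q_2/f)}$ we have $(hf)^2 = (f + Q_1)(f + Q_2)$, and regularity of the metric across $\{f = 0\}$ (where $h \sim f^{-1}$) forces the branch $hf = \sqrt{(f + Q_1)(f + Q_2)}$; thus $hf > 0$ wherever $f + Q_1$ and $f + Q_2$ are positive. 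Using $\tilde\gamma_1 + \tilde\gamma_2 = a$, the function $f = r^2 + a\eta(\tilde\gamma_1 \sin^2\theta + \tilde\gamma_2 \cos^2\theta)$ attains its minimum over $\mathcal M$ at $r = 0$, $\theta = \frac{\pi}{2}$, where it equals $-a^2 \tilde n\, \eta$ (and $f \geq 0$ everywhere in the two-charge case $\tilde n = 0$). From $Q_p = a^2 \tilde n(\tilde n + 1)$ and $\eta = Q_1 Q_2/(Q_1 Q_2 + Q_1 Q_p + Q_2 Q_p)$ one checks the elementary inequality $a^2 \tilde n\, \eta \leq \min(Q_1, Q_2)/(\tilde n + 1) < \min(Q_1, Q_2)$; hence $f + Q_1 > 0$ and $f + Q_2 > 0$ everywhere, $h$ is real and positive, and in particular both $hf$ and $N$ are bounded below by positive constants.

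It then remains to compare $N$ and $hf$, and for this I would compare each to $f + Q_1$ (which is positive, and bounded below by the positive constant $Q_1 - a^2\tilde n\,\eta$). For the numerator this is immediate: $N = (f + Q_1) + \big( Q_2 + Q_p + \tfrac{Q_1 Q_2 + Q_1 Q_p + Q_2 Q_p}{r^2 + (\tilde\gamma_1 + \tilde\gamma_2)^2\eta}\big)$, and the bracketed term lies between the positive constants $Q_2 + Q_p$ and $Q_2 + Q_p + (Q_1 Q_2 + Q_1 Q_p + Q_2 Q_p)/((\tilde\gamma_1 + \tilde\gamma_2)^2\eta)$, so $N \sim f + Q_1$. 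For the denominator, $hf = \sqrt{(f + Q_1)(f + Q_2)} = (f + Q_1)\sqrt{1 + (Q_2 - Q_1)/(f + Q_1)}$, and the square-root factor is pinched between positive constants by the lower bound on $f + Q_1$ already established, so $hf \sim f + Q_1$ too. Combining gives $N/(hf) \sim 1$ uniformly on $\mathcal M$, hence $g^{-1}(\upd t, \upd t) = -N/(hf)$ is bounded above and below by negative constants depending only on $Q_1, Q_2, Q_p, \tilde\gamma_1, \tilde\gamma_2$; equivalently $\upd t$ is uniformly timelike and the $\Sigma_t$ are uniformly spacelike.

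I do not expect a genuine obstacle here, as this is a preliminary computation. The only points needing care are: (a) fixing the sign of $hf$ near $\{f = 0\}$, which is why I would phrase everything via $(hf)^2 = (f + Q_1)(f + Q_2)$ together with smoothness of the metric, rather than trying to track the branch of the square root defining $h$; and (b) the algebraic inequality $a^2\tilde n\, \eta < \min(Q_1, Q_2)$, which is what guarantees $f + Q_1, f + Q_2 > 0$ and hence the reality and definite sign of $h$. Everything else is bookkeeping, and the two-charge case is simply the specialisation $\tilde n = \tilde\gamma_1 = Q_p = 0$, $\eta = 1$, $f = r^2 + a^2\cos^2\theta \geq 0$, for which every step simplifies and $hf = \sqrt{(f + Q_1)(f + Q_2)}$ is manifestly positive.
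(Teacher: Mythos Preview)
Your argument is correct. In fact you have done considerably more than the paper does: the proposition in the paper is stated without proof, simply quoting the formula for $g^{-1}(\upd t,\upd t)$ from \cite{Giusto2004-2} and asserting the uniform bounds as self-evident. Your write-up supplies what the paper omits --- the block-diagonal observation justifying why only the $4\times 4$ $(t,z,\phi,\psi)$ block matters, the algebraic check that $f+Q_1$ and $f+Q_2$ stay strictly positive (via the inequality $a^2\tilde n\,\eta < \min(Q_1,Q_2)$), and the explicit comparison $N \sim hf \sim f+Q_1$ that yields the two-sided bound. None of this is in the paper, so there is nothing to compare against beyond noting that your proof is compatible with, and a genuine elaboration of, the paper's bare assertion.
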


We can therefore make the following definition:
\begin{definition}[The vector field $n$]
 We define the vector $n$ as the unit (timelike) future directed normal to the hypersurface $\Sigma_t$. Note, from the above, that
 \begin{equation*}
 	n = -C (\upd t)^\sharp
 \end{equation*}
 where the function $C$ is uniformly bounded away from $0$ and $\infty$.
\end{definition}

\begin{definition}[The metric and volume form on $\Sigma_t$]
 We denote by $\underline{g}$ the metric induced by $g$ on the hypersurfaces $\Sigma_{\tau}$, and similarly we write $\dVol_{\underline{g}}$ for the volume form induced on the surfaces $\Sigma_\tau$. Since these hypersurfaces are \emph{uniformly} spacelike, we find that
 \begin{equation}
  \dVol_{\underline{g}} \sim r h f \sin\theta \cos \theta \ \upd z \wedge \upd r \wedge \upd \theta \wedge \upd \phi \wedge \upd \psi
 \end{equation}
\end{definition}

\begin{definition}[The nondegenerate energy]
 We define the non-degenerate energy of a function $u$ as follows: let $\{X_1, X_2,\ldots, X_5\}$ be an orthonormal basis for the tangent space of $\Sigma_t$ at the point $p \in \Sigma_t$. Then we define the non-degenerate energy at the point $p$ as follows:
\begin{equation}
 |\partial u|^2 := |T u|^2 + \sum_{A = 1}^5 |X_A u|^2
\end{equation}
Note that, since $T$ is transverse to the hypersurface $\Sigma_t$, the set $\{ T, X_1, \ldots, X_5 \}$ spans the tangent space of $\mathcal{M}$ at the point $p$.
\end{definition}

\begin{definition}[Higher order nondegenerate energies]
 We define the higher order non-degenerate energy on the surface $\Sigma_t$ as follows:
\begin{equation}
 E_t^{(m)} := \sum_{|\alpha| \leq m} \int_{\Sigma_t} |\partial \partial^{\alpha} u|^2 \dVol_{\underline{g}}
\end{equation}
for multi-indices $\alpha$, where (as above) $\partial$ may refer to any derivative in the set $\{T, X_1, \ldots, X_5 \}$.
\end{definition}

\begin{definition}[``Good'' derivatives]
 We also define a subset of the derivatives appearing in the non-degenerate energy to be the ``good'' derivatives. To be specific, we define $N$ to be the vector field (tangent to $\Sigma_t$) obtained by orthogonally projecting $T$ onto the surface $\Sigma_t$, that is,
\begin{equation}
 N := T + g(T, n)n
\end{equation}
 away from regions in which $n$ is parallel to $T$, we can define the last vector in the orthonormal basis $\{ X_1, \ldots, X_5\}$ to be parallel to $N$, i.e.\
 \begin{equation}
  X_5 := \left( g(N, N) \right)^{-\frac{1}{2}} N \\
 \end{equation}
 Indeed, we can choose the vector $X_5$ such that it is always parallel to $N$, so that
\begin{equation}
 N = (g(N, N))^{\frac{1}{2}} X_5
\end{equation}

 This allows us to define the ``good derivatives'' at any point $x \in \mathcal{M}$:
 \begin{equation}
  |\bar{\partial} u|^2 := |T u|^2 + \sum_{A = 1}^4 |X_A u|^2
 \end{equation}
 In other words, the good derivatives $\bar{\partial}u$ exclude the derivative in the $N$ direction.
\end{definition}

\begin{proposition}[The $T$-energy current]
\label{proposition T energy current}
	The energy current appearing in the energy identity in proposition \ref{proposition energy identity}, with the choice $X = T$, is given by
	\begin{equation}
	\begin{split}
		\left(\imath_{{^{(T)}J}[u]} \dVol_g \right)\big|_{\Sigma_t} 
		&= 
		\frac{1}{2}\bigg( -g(T, n)^{-1}(T u)^2 - g(T, n)\sum_{A = 1}^4 (X_A u)^2 \\
		&\phantom{= \frac{1}{2}\bigg(}
		 - g(T, n)^{-1}\left( (g(T, n))^2 - g(N, N) \right)(X_5 u)^2 \bigg) 
		 \dVol_{\underline{g}}
	\end{split}
	\end{equation} 
\end{proposition}

\begin{proof}
	Apply the energy estimate of proposition \ref{proposition energy identity} with the multiplier $T$ to the function $u$, on the spacetime region $\mathcal{M}_0^t$. We find that, 
\begin{equation}
 \int_{\Sigma_0} \imath_{{^{(T)}J}[u]} \dVol_g = \int_{\Sigma_t} \imath_{{^{(T)}J}[\phi]} \dVol_g
\end{equation}
Recalling that $n$ is the unit future-directed normal to $\Sigma_t$, then we find, restricting the interior product of the energy current and the volume form to the hypersurface $\Sigma_t$,
\begin{equation}
 \begin{split}
  \left(\imath_{{^{(T)}J}[u]} \dVol_g \right)\big|_{\Sigma_t} &= \left( (n u)(T u) - \frac{1}{2}g(T, n)\cdot g^{-1}(\upd u ,\upd u) \right)\dVol_{\underline{g}} \\
  &= \left( (n u)(T u) - \frac{1}{2}g(T, n)\left( -(n u)^2 + \sum_{A = 1}^5 (X_A u)^2 \right) \right)\dVol_{\underline{g}}
 \end{split}
\end{equation}
which, after a bit of algebra, gives the expression in the proposition.
\end{proof}

\section{Avoiding the ``Friedman instability'': uniform boundedness for solutions to the wave equation}
\label{section boundedness}
In \cite{Friedman1978} an instability associated with spacetimes possessing an ergoregion but lacking an event horizon was proposed. To be more precise, it was shown that on such backgrounds, the energy of solutions to the scalar wave equation (and to the Maxwell equations) cannot decay within the ergoregion, and a heuristic argument was given to suggest that such solutions actually grow. Very recently, \cite{Moschidis:2016zjy} has provided a rigorous proof of this growth, under certain additional but still very general assumptions.

The two charge microstate geometries do not possess an ergoregion, and are therefore immune to even the heuristic arguments for instability of \cite{Friedman1978}. Nevertheless, the absence of a global, timelike Killing vector field means that it is not straightforward to show that solutions of the wave equation remain uniformly bounded in terms of the initial data, and \emph{a priori} it is conceivable that some remnant of the ergosphere instability might lead to an instability of geometries with an evanescent ergosurface. Nevertheless, we are able to obtain a boundedness statement on these geometries, with a ``loss of derivatives'', i.e.\ we can bound the energy of solutions in the future by an initial ``higher order'' energy, involving higher derivatives of the initial data. We note here that a statement of this kind is unlikely to prove useful in any kind of nonlinear application, although it can help us to understand the nature of \emph{linear} waves, such as those studied in this paper.

In contrast, the three charge microstate geometries \emph{do} possess an ergoregion, and thus might be expected to be unstable due to the ``ergoregion instability of \cite{Friedman1978}, \cite{Moschidis:2016zjy}. However, these geometries do not satisfy all of the conditions required in \cite{Moschidis:2016zjy}; most importantly, they are not asymptotically flat, but are instead asymptotically Kaluza-Klein. In fact, once again we are able to prove a uniform boundedness statement, although, similarly to the two charge case, we must ``lose derivatives''. A key part of this proof relies on the presence of the globally null Killing vector field $V$.

Another key part of the proof of uniform boundedness, both in two charge and three charge microstate geometries, is a version of Hardy's inequality, which we prove below:

\begin{lemma}
\label{lemma Hardy}
 Let $\mathcal{M}$ be either a two or a three charge microstate geometry, and let $u$ be a smooth function on $\Sigma_t$ such that
\begin{equation}
 \lim_{r \rightarrow \infty} r^2 |u|^2 = 0
\end{equation}
Define the vector field (which is tangent to $\Sigma_t$)
\begin{equation}
\label{equation definition R}
 R := \left. \frac{\partial}{\partial r}\right|_{t,z,\theta,\phi,\psi}
\end{equation}
then we have
\begin{equation}
 \int_{\Sigma_t} \frac{|u|^2}{(1+r)^2} \dVol_{\underline{g}} \lesssim \int_{\Sigma_t} |R u|^2 \frac{r^2}{(1+r)^2} \dVol_{\underline{g}}
\end{equation}

\end{lemma}

\begin{proof}
 Note that
\begin{equation*}
 r h f \sim r(1+r)^2
\end{equation*}
So we have
\begin{equation*}
 \begin{split}
  \int_{\Sigma_t} \frac{|u|^2}{(1+r)^2} \dVol_{\underline{g}} &\lesssim \int_{\Sigma_t} |u|^2 r \sin\theta \cos\theta \ \upd z \wedge \upd r \wedge \upd \theta \wedge \upd \phi \wedge \upd \psi \\
  &\lesssim \int_{\Sigma_t} |u|^2 (R r^2) \sin\theta \cos\theta \ \upd z \wedge \upd r \wedge \upd \theta \wedge \upd \phi \wedge \upd \psi \\
 \end{split}
\end{equation*}
Integrating by parts in the $r$ direction, using the fact that $u$ is smooth and decays suitably at infinity, we find that, for any $\delta > 0$,
\begin{equation*}
 \begin{split}
  \int_{\Sigma_t} \frac{|u|^2}{(1+r)^2} \dVol_{\underline{g}} &\lesssim \int_{\Sigma_t} \left( \delta |u|^2 r + \delta^{-1}|R u|^2 r^3 \right) \sin\theta \cos\theta \ \upd z \wedge \upd r \wedge \upd \theta \wedge \upd \phi \wedge \upd \psi \\
  &\lesssim \delta \int_{\Sigma_t} \frac{|u|^2}{(1+r)^2} \dVol_{\underline{g}} + \delta^{-1} \int_{\Sigma_t} |R u|^2 \frac{r^2}{(1+r)^2} \dVol_{\underline{g}}
 \end{split}
\end{equation*}
Taking $\delta$ sufficiently small, we can absorb the first term by the left hand side, proving the lemma.

\end{proof}

We will also need the following result, which allows us to compute the equation satisfied by the commuted field, and which follows from a simple calculation:
\begin{proposition}
\label{proposition commute}
Let $u$ be a smooth function and let $V$ be a smooth vector field on $\mathcal{M}$. Then we have
\begin{equation}
 \Box_g ( Vu ) = V^\alpha \nabla_\alpha ( \Box_g u) + {^{(V)}\pi}^{\mu \nu}\nabla_\mu \nabla_\nu u + \left( \nabla_\mu  \left({^{(V)}\pi}^{\mu \alpha}\right) - \frac{1}{2} \nabla^\alpha \left({^{(V)}\pi}_\mu^{\phantom{\mu}\mu}\right) \right)\nabla_\alpha u
\end{equation}
In particular, if $\Box_g u = 0$ and $V$ is a Killing vector field for the metric $g$, then $\Box_g (V u) = 0$
\end{proposition}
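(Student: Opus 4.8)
The plan is to prove this by a direct covariant computation: expand $\Box_g(Vu)$ with the Leibniz rule, and then commute covariant derivatives until the curvature terms that appear cancel against one another. Writing $Vu = V^\alpha\nabla_\alpha u$ and $\Box_g = g^{\mu\nu}\nabla_\mu\nabla_\nu$, two applications of the Leibniz rule, using $\nabla g = 0$ to pull $g^{\mu\nu}$ through, give
\begin{equation*}
 \Box_g(Vu) = (\Box_g V^\alpha)(\nabla_\alpha u) + 2(\nabla^\nu V^\alpha)(\nabla_\nu\nabla_\alpha u) + V^\alpha\, g^{\mu\nu}\nabla_\mu\nabla_\nu\nabla_\alpha u .
\end{equation*}
I would then treat the three terms separately and reassemble them into the claimed form.

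For the middle term, the Hessian $\nabla_\nu\nabla_\alpha u$ of the scalar $u$ is symmetric, so only the symmetric part of $\nabla^\nu V^\alpha$ contributes; since $\nabla^\nu V^\alpha + \nabla^\alpha V^\nu = {^{(V)}\pi}^{\nu\alpha}$, this term is exactly ${^{(V)}\pi}^{\mu\nu}\nabla_\mu\nabla_\nu u$. For the last term, I would again use the symmetry of the Hessian to write $\nabla_\mu\nabla_\nu\nabla_\alpha u = \nabla_\mu\nabla_\alpha\nabla_\nu u$ and then commute $\nabla_\mu$ past $\nabla_\alpha$, which costs a single Riemann term; contracting with $g^{\mu\nu}$ and using $\nabla g = 0$ once more gives $V^\alpha g^{\mu\nu}\nabla_\mu\nabla_\nu\nabla_\alpha u = V^\alpha\nabla_\alpha(\Box_g u)$ plus a Ricci curvature term (linear in $V$ and in $\nabla u$), which isolates the principal term $V^\alpha\nabla_\alpha(\Box_g u)$.

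It then remains to show that $(\Box_g V^\alpha)\nabla_\alpha u$, together with the Ricci term just produced, equals $\big(\nabla_\mu{^{(V)}\pi}^{\mu\alpha} - \tfrac{1}{2}\nabla^\alpha{^{(V)}\pi}_\mu^{\ \mu}\big)\nabla_\alpha u$. This is the standard second-order identity for the deformation tensor: expanding $\nabla_\mu{^{(V)}\pi}^{\mu\alpha} = \Box_g V^\alpha + \nabla_\mu\nabla^\alpha V^\mu$ and commuting the derivatives in the second piece produces both $\nabla^\alpha(\nabla_\mu V^\mu) = \tfrac{1}{2}\nabla^\alpha{^{(V)}\pi}_\mu^{\ \mu}$ and precisely the Ricci term needed to cancel the one from the previous step. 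The only point that requires any care — and the step I expect to be the main obstacle — is the bookkeeping of the Riemann and Ricci contributions with a single fixed sign convention, so that this cancellation is manifest; everything else is the Leibniz rule and the symmetry of the Hessian of a scalar. Finally, if $V$ is Killing then ${^{(V)}\pi} \equiv 0$ by definition, so all three terms on the right-hand side vanish and $\Box_g(Vu) = V^\alpha\nabla_\alpha(\Box_g u)$, which is zero whenever $\Box_g u = 0$.

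One could instead argue more structurally: since $\Box_g$ is natural under diffeomorphisms, pulling back along the flow of $V$ gives $\mathcal{L}_V(\Box_g u) = \Box_g(\mathcal{L}_V u) + (\delta\Box_g)u$, where $\delta\Box_g$ is the linearisation of the Laplace--Beltrami operator evaluated in the metric direction $h = \mathcal{L}_V g = {^{(V)}\pi}$, and the claimed identity then follows from the classical formula for $\delta\Box_g$; the direct calculation above is, however, shorter and entirely self-contained.
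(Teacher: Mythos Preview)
Your proposal is correct and is precisely the ``simple calculation'' the paper alludes to without writing out: a direct covariant Leibniz expansion, symmetrisation of the middle term using the Hessian, and a commutation of derivatives in the remaining two pieces whose Ricci contributions cancel by symmetry of the Ricci tensor. The alternative naturality/linearised-Laplacian argument you mention is also valid but, as you say, the direct route is shorter here.
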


\subsection{Uniform boundedness on two charge microstate geometries}
In this section we will prove uniform boundedness for solutions of the wave equation
\begin{equation}
 \Box_g u = 0
\end{equation}
on two charge microstate geometries \eqref{equation two charge metric}.

We begin by applying the energy estimate with the multiplier $T$, which allows us to prove the following:
\begin{proposition}[The $T$-energy estimate on two charge microstates]
\label{proposition T energy two charge}
Let $u$ solve the wave equation $\Box_g u = 0$ on a two charge microstate geometry, with metric \eqref{equation two charge metric}. Then for any $t \geq 0$ and any $\epsilon > 0$ we have the following \emph{degenerate} energy estimate:
\begin{equation}
\label{equation T energy two charge 1}
 \int_{\Sigma_t \setminus \tilde{\mathcal{S}}_t^\epsilon}|\partial u|^2 \dVol_{\underline{g}} + \int_{\tilde{\mathcal{S}}_t^\epsilon} |\bar{\partial} u|^2 \dVol_{\underline{g}} \lesssim \int_{\Sigma_0} |\partial u|^2 \dVol_{\underline{g}}
\end{equation}
In addition, there is a positive constant $C(\epsilon) > 0$ such that
\begin{equation}
\label{equation T energy two charge 2}
 \int_{\Sigma_t} |\partial u|^2 \dVol_g \leq C(\epsilon)\left( \int_{\tilde{\mathcal{S}}_t^\epsilon}|N u|^2 + \int_{\Sigma_0}|\partial u|^2 \dVol_g \right)
\end{equation}
\end{proposition}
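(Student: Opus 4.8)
The plan is to run the energy identity of Proposition \ref{proposition energy identity} with the multiplier $X = T$ over the region $\mathcal{M}_0^t$. Since $T$ is Killing and $u$ solves the wave equation, the bulk term vanishes identically, so we obtain $\int_{\Sigma_t} \imath_{{^{(T)}J}[u]} \dVol_{\underline g} = \int_{\Sigma_0} \imath_{{^{(T)}J}[u]} \dVol_{\underline g}$ (boundary contributions at infinity vanish for Schwartz data, and one should remark that the identity extends by density/decay). The right-hand side is controlled by the nondegenerate energy $\int_{\Sigma_0} |\partial u|^2 \dVol_{\underline g}$, since $T$ is uniformly timelike near $\Sigma_0$'s region of interest and in any case bounded. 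The real content is to show that the flux through $\Sigma_t$ controls the quantity on the left of \eqref{equation T energy two charge 1}. Concretely, $\imath_{{^{(T)}J}[u]} \dVol_{\underline g}$ restricted to $\Sigma_t$ is $Q(T,n)$ times the induced volume form; I would compute $Q(T,n)[u]$ in an orthonormal frame $\{n, X_1,\dots,X_5\}$ adapted so that $X_5 \parallel N = T + g(T,n)n$. Writing $T = -g(T,n)\, n + N$, one finds $Q(T,n) = -g(T,n)\, Q(n,n) + Q(N,n)$, and $Q(n,n) = \tfrac12(|nu|^2 + \sum_A |X_A u|^2) \geq 0$ always, while the cross term $Q(N,n) = (Nu)(nu)$ can be absorbed: since $|g(T,n)|$ is bounded below by a positive constant away from $\tilde{\mathcal S}_t^\epsilon$ (because $T$ is uniformly timelike there), a Cauchy–Schwarz/Young step shows $Q(T,n) \gtrsim |\partial u|^2$ on $\Sigma_t \setminus \tilde{\mathcal S}_t^\epsilon$. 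Inside $\tilde{\mathcal S}_t^\epsilon$, where $g(T,T) \to 0$, the coefficient $g(T,n)$ degenerates and we can only extract the good derivatives: $Q(T,n) \gtrsim |\bar\partial u|^2$ there, using that $\bar\partial$ excludes precisely the $N$-direction. Summing these two estimates gives \eqref{equation T energy two charge 1}.

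For the second estimate \eqref{equation T energy two charge 2}, the point is that the only derivative missing from the left of \eqref{equation T energy two charge 1} inside $\tilde{\mathcal S}_t^\epsilon$ is the $N$-derivative, i.e. (up to the bounded factor $g(N,N)^{1/2}$) the $X_5$-derivative, which is comparable to $Nu$. Hence $\int_{\Sigma_t}|\partial u|^2 \dVol_{\underline g} \lesssim \int_{\Sigma_t}|\bar\partial u|^2 \dVol_{\underline g} + \int_{\tilde{\mathcal S}_t^\epsilon}|Nu|^2 \dVol_{\underline g}$, and the first term on the right is bounded by the left side of \eqref{equation T energy two charge 1}, hence by $\int_{\Sigma_0}|\partial u|^2$, while on $\Sigma_t \setminus \tilde{\mathcal S}_t^\epsilon$ one has $|Nu|^2 \le |\partial u|^2$ already controlled. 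This yields \eqref{equation T energy two charge 2} with the constant $C(\epsilon)$ arising from the lower bound on $|g(T,n)|$ outside $\tilde{\mathcal S}_t^\epsilon$, which degrades as $\epsilon \to 0$. (Note the apparent $\dVol_g$ versus $\dVol_{\underline g}$ in the statement should be read as the induced volume form on $\Sigma_t$; the two differ by a bounded lapse factor so this is immaterial.)

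The main obstacle I anticipate is purely computational: verifying carefully that $g(T,n)$, equivalently the sign and size of $g(T,T)$, behaves as claimed — uniformly timelike (bounded away from null) outside $\tilde{\mathcal S}_t^\epsilon$, and degenerating only at $r=0,\theta=\pi/2$ — so that the frame decomposition of $Q(T,n)$ produces a genuine lower bound by $|\partial u|^2$ in one region and by $|\bar\partial u|^2$ in the other. This requires reading off $g_{tt}$, $g^{tt}$, and the lapse from \eqref{equation two charge metric}, and checking the structure of the degeneracy; the fact that $T$ is globally causal in the two charge case (stated in the text) is what makes $Q(T,n) \ge 0$ and drives the whole argument. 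The remaining steps — the Killing property killing the bulk term, the density argument for Schwartz data, and the Young's-inequality absorptions — are routine.
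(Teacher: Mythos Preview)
Your overall strategy---run the energy identity with the Killing multiplier $T$, note the bulk vanishes, and analyse the flux $Q(T,n)$ in a frame adapted to $N$---is exactly what the paper does. However, you have misidentified which quantity degenerates at the evanescent ergosurface, and this error propagates through your explanation of both estimates.

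You repeatedly claim that $|g(T,n)|$ is bounded below away from $\tilde{\mathcal S}_t^\epsilon$ and ``degenerates'' inside. This is false. Since $n \propto (\upd t)^\sharp$ and $g^{-1}(\upd t,\upd t)$ is uniformly bounded above and below by negative constants (the hypersurfaces $\Sigma_t$ are \emph{uniformly} spacelike), one has $g(T,n)\sim -1$ \emph{everywhere}, including on $\mathcal S$. The quantity that actually degenerates is $(g(T,n))^2 - g(N,N) = -g(T,T)$. If you carry out your Cauchy--Schwarz/Young absorption of the cross term $(Nu)(nu)$ into $-g(T,n)\,Q(n,n)$ carefully, you will find that it succeeds with a gap precisely when $g(N,N) < (g(T,n))^2$, i.e.\ when $g(T,T) < 0$; the bound $g(T,T)\le -\epsilon$ away from $\tilde{\mathcal S}_t^\epsilon$ is what provides the room, and its vanishing on $\mathcal S$ is what forces you to drop the $X_5$-direction inside. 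The paper bypasses the absorption by an exact completion of the square, arriving at
\[
Q(T,n) \;=\; \tfrac12\Big( -g(T,n)^{-1}(Tu)^2 \;-\; g(T,n)\sum_{A=1}^4 (X_A u)^2 \;-\; g(T,n)^{-1}\big((g(T,n))^2 - g(N,N)\big)(X_5 u)^2 \Big),
\]
from which the role of $-g(T,T)$ as the degenerating coefficient of $(X_5 u)^2$ is manifest. Consequently, the $\epsilon$-dependence of $C(\epsilon)$ in (\ref{equation T energy two charge 2}) comes from the lower bound on $-g(T,T)$ outside $\tilde{\mathcal S}_t^\epsilon$, not from $g(T,n)$. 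Your derivation of (\ref{equation T energy two charge 2}) from (\ref{equation T energy two charge 1}) is otherwise fine, modulo checking (as the paper does explicitly) that $g(N,N)$ is bounded \emph{away from zero} in $\tilde{\mathcal S}_t^\epsilon$ for small $\epsilon$, so that $|X_5 u|\sim |Nu|$ there.
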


\begin{proof}
 
We begin with proposition \ref{proposition T energy current}. Note that
\begin{equation}
 n = -\left( -g^{-1}(\upd t, \upd t) \right)^{\frac{1}{2}} (\upd t)^\sharp
\end{equation}
and so
\begin{equation}
 g(T, n) = -\left( -g^{-1}(\upd t, \upd t) \right)^{\frac{1}{2}} \sim -1
\end{equation}
Moreover, in the region $\Sigma_t \setminus \tilde{\mathcal{S}}_t^\epsilon$ we have $g(T, T) \leq -\epsilon$. Since $T = -g(T, n)n + N$, we have
\begin{equation}
 (g(T, n))^2 - g(N, N) \geq \epsilon \quad \text{in the region } \Sigma_t \setminus \tilde{\mathcal{S}}_t^\epsilon
\end{equation}
In particular we find that there is some positive constant $c(\epsilon) > 0$ such that
\begin{equation}
 \int_{\Sigma_t \setminus \tilde{\mathcal{S}}_t^\epsilon} \imath_{{^{(T)}J}[u]} \dVol_g \geq c(\epsilon) \int_{\Sigma_t \setminus \tilde{\mathcal{S}}_t^\epsilon} |\partial u|^2 \dVol_{\bar{g}} 
\end{equation}
In fact, we have that
\begin{equation}
 (g(T, n))^2 - g(N, N) \geq 0
\end{equation}
with equality on and only on $\mathcal{S}$. In order to prove the proposition, we only need to check that $N \neq 0$ in the region $\tilde{\mathcal{S}}_t^\epsilon$. But in this region, we have
\begin{equation}
 (g(T, n))^2 - g(N, N) \leq \epsilon
\end{equation}
Since $g(T, n) \sim 1$, we find that $g(N, N)$ is bounded away from zero as long as $\epsilon$ is sufficiently small.
\end{proof}

The previous proposition proves boundedness of the \emph{degenerate} energy, but we wish to conclude boundedness of the \emph{non-degenerate} energy, including the derivatives in the $N$ direction on the submanifold $\mathcal{S}$. In order to do this, we will first have to \emph{commute} the wave equation with a suitable operator, and we will then need to make use of the Hardy inequality of lemma \ref{lemma Hardy}.

Note that $\Phi$ is a Killing vector field for the two charge microstate geometry (and also for the three charge microstate geometry), and so from proposition \ref{proposition commute} we find that the field $(\Phi u)$ satisfies the wave equation $\Box_g (\Phi u) = 0$.

We now aim to show that, in the region $\tilde{\mathcal{S}}_t^\epsilon$, for sufficiently small $\epsilon$, we can express the vector field $N$ in terms of the $\Phi$ and the ``good'' derivatives:

\begin{proposition}[Expressing $N$ in terms of $\Phi$]
\label{proposition N in terms of Phi}
 For all sufficiently small $\epsilon$, we have
 \begin{equation}
  N \in \spn\{ \Phi, X_1, X_2, X_3, X_4 \}
 \end{equation}
 in the region $\tilde{\mathcal{S}}_t^\epsilon$, and moreover $N$ can be expressed as
 \begin{equation}
  N = N^\Phi \Phi + \sum_{A = 1}^4 N^A X_A
 \end{equation}
 where there exists some constant $C(\epsilon) > 0$ such that 
 \begin{equation}
  |N^\Phi| + \sum_{A = 1}^4 |N^A| \lesssim C(\epsilon)
 \end{equation}
 in the region $\tilde{\mathcal{S}}_t^\epsilon$.
\end{proposition}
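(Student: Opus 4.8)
The plan is to reduce the statement to three pointwise estimates on inner products in $\tilde{\mathcal{S}}_t^\epsilon$ and then read off the decomposition of $N$ by elementary linear algebra. Since $\Phi = \partial_\phi$ is tangent to $\Sigma_t$, I would first expand it in the orthonormal frame $\{X_1,\dots,X_5\}$ of $T\Sigma_t$ as $\Phi = \sum_{A=1}^{4} g(\Phi,X_A)X_A + g(\Phi,X_5)X_5$; using $N = (g(N,N))^{1/2}X_5$ and solving for $X_5$ then gives
\begin{equation*}
 N = \frac{(g(N,N))^{1/2}}{g(\Phi,X_5)}\left(\Phi - \sum_{A=1}^{4}g(\Phi,X_A)X_A\right),
\end{equation*}
\emph{provided} $g(\Phi,X_5)\neq 0$. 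So it suffices to establish, uniformly on $\tilde{\mathcal{S}}_t^\epsilon$ for all sufficiently small $\epsilon$: (i) $g(N,N)\sim 1$ (which in particular makes $X_5$ well-defined there); (ii) $|g(\Phi,X_5)|\gtrsim 1$; and (iii) $g(\Phi,\Phi)\lesssim 1$, which by Cauchy--Schwarz and $|X_A| = 1$ controls each $|g(\Phi,X_A)|$. Granting these, the displayed identity gives $N^\Phi = (g(N,N))^{1/2}/g(\Phi,X_5)$ and $N^A = -(g(N,N))^{1/2}g(\Phi,X_A)/g(\Phi,X_5)$, all bounded by some $C(\epsilon)$.

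For (i) I would use $T = -g(T,n)n + N$ together with the fact that $n$ is a unit timelike normal, giving $g(N,N) = g(T,T) + (g(T,n))^2$; we already know $g(T,n)\sim -1$ from the proof of Proposition \ref{proposition T energy two charge}, while $g(T,T) = -1/h$ and $h\sim f^{-1}$ near $\mathcal{S}$, so since $f\lesssim\epsilon^2$ throughout $\tilde{\mathcal{S}}_t^\epsilon$ we get $g(T,T)\to 0$ and hence $g(N,N)\to(g(T,n))^2\sim 1$ as $\epsilon\to 0$. For (ii), the key point is that $X_5\parallel N$ and $g(\Phi,n) = 0$ (because $\Phi$ is tangent to $\Sigma_t$), so
\begin{equation*}
 g(\Phi,X_5) = (g(N,N))^{-1/2}g(\Phi,N) = (g(N,N))^{-1/2}g(\Phi,T) = (g(N,N))^{-1/2}\,g_{t\phi},
\end{equation*}
and reading $g_{t\phi}$ off the metric \eqref{equation two charge metric} gives $g_{t\phi} = -\dfrac{a\sqrt{Q_1Q_2}}{hf}\sin^2\theta$; since $hf = \sqrt{(f+Q_1)(f+Q_2)}\to\sqrt{Q_1Q_2}$ and $\sin^2\theta\to 1$ as $(r,\theta)\to(0,\tfrac{\pi}{2})$, this tends to $-a\neq 0$, which together with (i) yields $|g(\Phi,X_5)|\gtrsim 1$ for $\epsilon$ small. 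For (iii) I would rewrite $g_{\phi\phi}$ from \eqref{equation two charge metric} using $h^2f^2 = (f+Q_1)(f+Q_2)$ and $r^2+a^2 = f + a^2\sin^2\theta$ to obtain $g_{\phi\phi} = hf\sin^2\theta\big(1 + a^2\sin^2\theta(f+Q_1+Q_2)/((f+Q_1)(f+Q_2))\big)$, which stays bounded as $f\to 0$.

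The only place the specific geometry of the two charge microstate enters is step (ii): the Killing field $\Phi$ is \emph{not} orthogonal to $T$ on the evanescent ergosurface, i.e.\ $g_{t\phi}$ does not vanish there, so the ``bad'' direction $N$ really can be traded for a $\Phi$-derivative together with ``good'' derivatives. (The fields $Z$ and $\Psi$ would not do: $g_{tz}$ and $g_{t\psi}$ vanish identically.) I expect this to be the only nontrivial input; the rest is bookkeeping, and uniformity on $\tilde{\mathcal{S}}_t^\epsilon$ is automatic because $f\lesssim\epsilon^2$ there, so choosing $\epsilon$ below a fixed threshold makes $g(N,N)$, $g(\Phi,X_5)^{-1}$ and $g_{\phi\phi}$ all comparable to constants and produces the claimed bound.
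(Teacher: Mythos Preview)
Your proposal is correct and follows essentially the same route as the paper: decompose $\Phi$ in the orthonormal frame of $\Sigma_t$, compute $g(\Phi,X_5) = (g(N,N))^{-1/2}g(\Phi,T) = (g(N,N))^{-1/2}g_{t\phi}$ and show it is bounded away from zero near $\mathcal{S}$ (since $hf\to\sqrt{Q_1Q_2}$ and $\sin^2\theta\to 1$), then use boundedness of $g(\Phi,\Phi)$ to control the remaining coefficients. Your treatment is slightly more explicit in places (writing out $g(N,N) = g(T,T) + (g(T,n))^2$ and rewriting $g_{\phi\phi}$), but the argument is the same.
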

\begin{proof}
 Decomposing $\Phi$ in the basis $\{ n, X_1, \ldots, X_5\}$ we write
 \begin{equation*}
  \Phi = \Phi^0 n + \sum_{A = 1}^5 \Phi^A X_A
 \end{equation*}
 Taking the inner product with $n$, recalling that $n \propto \upd t^\sharp$ and $\Phi(t) = 0$ we find that $\Phi^0 = 0$. On the other hand, we have that
 \begin{equation*}
  \Phi^5 = g(\Phi, X_5) = g(N, N)^{-\frac{1}{2}} g(\Phi, N)
 \end{equation*}
 where we recall that, in the region $\tilde{\mathcal{S}}_t^\epsilon$, if $\epsilon$ is sufficiently small then there is some constant $C(\epsilon) > 0$ such that
 \begin{equation*}
  g(N, N) \sim C(\epsilon)
 \end{equation*}
 Now, we can write
 \begin{equation*}
  \Phi = \sum_{A = 1}^4 \Phi^A X_A + g(N, N)^{-1} g(\Phi, N) N
 \end{equation*}
 So we need a lower bound on $|g(\Phi, N)|$ in order to prove the proposition.
 
 Now, recalling the definition of $N$, we find that
 \begin{equation*}
  \begin{split}
   g(\Phi, N) &= g(\Phi, T) + g(T, n)g(\Phi, n) \\
   &= g(\Phi, T) \\
   &= -\frac{a \sqrt{Q_1 Q_2}}{hf} \sin^2 \theta
  \end{split}
 \end{equation*}
 since $g(\Phi, n) = 0$. However, in the region $\tilde{\mathcal{S}}_t^\epsilon$ we have $|\theta - \frac{\pi}{2}| < \epsilon$ so that $\sin \theta$ is bounded away from zero. Additionally, we have $hf \rightarrow \sqrt{Q_1 Q_2}$ as $r \rightarrow 0$, $\theta \rightarrow \frac{\pi}{2}$ so that $|hf|$ is bounded away from zero in the region in question.
 
 In order to finish the proof of the proposition, we note that
 \begin{equation}
  g(\Phi, \Phi) = \left( r^2 + a^2 - \frac{a^2 Q_1 Q_2 \sin^2 \theta}{h^2 f^2} \right)\sin^2 \theta
 \end{equation}
which is bounded by some constant (depending on $\epsilon$) in the region $\tilde{\mathcal{S}}_t^\epsilon$.
\end{proof}

We require one more proposition, establishing that the vector field $R$ is in the span of the ``good'' vector fields in any region where $|N|$ is bounded away from zero.
\begin{proposition}
\label{proposition R good derivative}
 Let $x \in \mathcal{M}$ be a point such that $g(N,N) \neq 0$ at the point $x$. Then we have
\begin{equation}
 R \in \spn\{X_1, X_2, X_3, X_4\} \text{at } x
\end{equation}
where we recall the definition of the vector field $R$ given in equation \eqref{equation definition R}.

Moreover, we have
\begin{equation}
 R = \sum_{A = 1}^4 R^A X_A
\end{equation}
where
\begin{equation}
 \sum_{A = 1}^4 |R^A| \lesssim 1
\end{equation}
\end{proposition}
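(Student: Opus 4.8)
The plan is to reduce the claim $R \in \spn\{X_1,\dots,X_4\}$ to the single orthogonality relation $g(R, N) = 0$, which I will read off directly from the explicit metrics. Since $R = \partial_r$ is tangent to $\Sigma_t$ (moving along it keeps $t$ fixed) and $n$ is the unit normal to $\Sigma_t$, we have $g(R, n) = 0$; likewise, in the coordinates of both (\ref{equation two charge metric}) and (\ref{equation three charge metric}) there is no $\upd t\,\upd r$ cross-term, so $g(R, T) = g_{tr} = 0$. Recalling $N = T + g(T,n)\,n$, these two facts give
\begin{equation*}
 g(R, N) = g(R, T) + g(T, n)\,g(R, n) = 0 .
\end{equation*}
At a point $x$ with $g(N,N) \neq 0$ the basis vector $X_5$ is, by construction, parallel to $N$, hence $g(R, X_5) = 0$. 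Expanding the tangent-to-$\Sigma_t$ vector $R$ in the orthonormal basis $\{X_1,\dots,X_5\}$ of the (spacelike) tangent space of $\Sigma_t$ yields $R = \sum_{A=1}^{5} g(R, X_A)\, X_A$, in which the $A=5$ term drops out; this establishes the first assertion with $R^A := g(R, X_A)$.

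For the bound, I would use orthonormality together with $R^5 = 0$ to write $\sum_{A=1}^4 |R^A|^2 = g(R, R) = g_{rr}$, which equals $hf/(r^2 + a^2)$ in the two charge case and $hf/\big(r^2 + (\tilde\gamma_1 + \tilde\gamma_2)^2 \eta\big)$ in the three charge case. The asymptotics already recorded in the proof of Lemma \ref{lemma Hardy}, namely $r\,h\,f \sim r(1+r)^2$ (so $hf \sim (1+r)^2$), together with the elementary $r^2 + a^2 \sim (1+r)^2$, give $g_{rr} \sim 1$; in particular $\sum_{A=1}^4 |R^A|^2 \lesssim 1$, and a Cauchy--Schwarz over the four terms produces $\sum_{A=1}^4 |R^A| \lesssim 1$.

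There is no real obstacle here: the only inputs are the block structure of the microstate metrics (absence of $\upd r$ cross-terms, in particular with $\upd t$) and the large-$r$ growth $hf \sim (1+r)^2$, both of which are already available. The one point deserving care is uniformity of the implied constants: these should depend only on the fixed charges $Q_1, Q_2, Q_p$ (and $\tilde\gamma_1, \tilde\gamma_2$), which is automatic because the estimate on $g_{rr}$ concerns the fixed metric and the hypothesis $g(N,N) \neq 0$ enters only to single out which basis vector $X_5$ is being excluded, not the estimate itself.
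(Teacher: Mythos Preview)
Your proposal is correct and follows essentially the same route as the paper: decompose $R$ in the orthonormal frame, use $R(t)=0$ to kill the $n$-component, use $g(R,T)=g_{tr}=0$ together with $N=T+g(T,n)n$ to kill the $X_5$-component, and finish with $g(R,R)\sim 1$. The only cosmetic difference is that you spell out the estimate $g_{rr}\sim 1$ via $hf\sim(1+r)^2$ and Cauchy--Schwarz, whereas the paper simply asserts $g(R,R)\sim 1$.
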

\begin{proof}
 Decomposing in the orthonormal basis $\{n, X_1, \ldots, X_5\}$ we can write
\begin{equation*}
 R = R^0 n + \sum_{A = 1}^5 R^A X_A
\end{equation*}
 We immediately conclude that $R^0 = 0$ since $R(t) = 0$ and $n \propto \upd t$. Thus we only need to show that $R^5 = 0$.

 Since $g(N, N) \neq 0$ at $x$, we can write
\begin{equation*}
 X_5 = g(N,N)^{-\frac{1}{2}} N
\end{equation*}
 and so we find that
\begin{equation*}
 \begin{split}
  R^5 &= g(N, N)^{-\frac{1}{2}} g(R, N) \\
  &= g(N,N)^{-\frac{1}{2}} g(R, T)
 \end{split}
\end{equation*}
where the last line follows from the fact that $N = T + g(T, n)n$. But in both the two and three charge microstate geometries, we observe directly from equations \eqref{equation two charge metric} and \eqref{equation three charge metric} that $g(R, T) = 0$. Finally, to conclude the last part of the proposition we note that $g(R, R) \sim 1$.

\end{proof}

We are now ready to prove the uniform boundedness statement:

\begin{theorem}[Uniform boundedness of solutions to the wave equation on two charge microstate geometries]
\label{theorem boundedness two charge}
Let $u$ be a solution to the wave equation $\Box_g u = 0$ on a two charge microstate geometry, with compactly supported initial data on the initial hypersurface $\Sigma_0$. Then, for all $t \geq 0$ we have
\begin{equation}
 \int_{\Sigma_t} |\partial u|^2 \dVol_{\underline{g}} \lesssim \int_{\Sigma_0} \left( |\partial u|^2 + |\partial \Phi u|^2 \right)\dVol_{\underline{g}} 
\end{equation}
 
\end{theorem}

\begin{proof}
We begin by applying the $T$-energy estimate of proposition \ref{proposition T energy two charge} to both the field $u$ and the field $\Phi u$. We use the form of the $T$-energy inequality given in equation \eqref{equation T energy two charge 2} for the field $u$, and the form given in equation \eqref{equation T energy two charge 1} for both the field $u$ and the commuted field $\Phi u$. We then sum the resulting inequalities, multiplying both inequalities which are of the form \eqref{equation T energy two charge 1} by some large constant $C$ (to be fixed later). We obtain
\begin{equation}
\label{equation boundedness two charge v1}
 \begin{split}
  &\int_{\Sigma_t \setminus \tilde{\mathcal{S}}_t^\epsilon} \left( (1 + C)|\partial u|^2 + C|\partial \Phi u|^2 \right) \dVol_{\underline{g}} + \int_{ \tilde{\mathcal{S}}_t^\epsilon} \left( |\partial u|^2 + C|\bar{\partial} u|^2 + C|\bar{\partial} \Phi u|^2 \right) \dVol_{\underline{g}} \\
  &\lesssim \int_{\tilde{\mathcal{S}}_t^\epsilon} | N u|^2 \dVol_{\underline{g}} + \int_{\Sigma_0} \left( |\partial u|^2 + C |\partial \Phi u|^2 \right) \dVol_{\underline{g}}
 \end{split}
\end{equation}
Now, we use the fact that $R$ is in the span of the ``good derivatives'' in the region $\tilde{\mathcal{S}}_t^\epsilon$ as shown in proposition \ref{proposition R good derivative},  to show
\begin{equation*}
 \int_{ \tilde{\mathcal{S}}_t^\epsilon} \left( |\bar{\partial} \Phi u|^2 \right) \dVol_{\underline{g}} \gtrsim \int_{ \tilde{\mathcal{S}}_t^\epsilon} \left( |R \Phi u|^2 \right) \dVol_{\underline{g}} \\
\end{equation*}

Since $u$ has compact support on the initial data surface $\Sigma_0$, we can use the domain of dependence property of solutions to the wave equation to show that
\begin{equation*}
 \lim_{r \rightarrow \infty} r^2|u|^2 = 0
\end{equation*}
at any time $t \geq 0$. We can therefore appeal to the Hardy inequality of lemma \ref{lemma Hardy} and find that, for all sufficiently small $\epsilon$ there is some constant $C(\epsilon) > 0$ such that
\begin{equation*}
 \begin{split}
  \int_{\mathcal{S}_t^\epsilon} |\Phi u|^2 \dVol_{\underline{g}} &\leq C(\epsilon) \int_{\Sigma_t} |\Phi u|^2 \frac{1}{(1 + r)^2} \dVol_{\underline{g}} \\
  &\leq C(\epsilon) \int_{\Sigma_t} |R \Phi u|^2 \dVol_{\underline{g}} \\
  &\leq C(\epsilon) \int_{\Sigma_t} |\bar{\partial} \Phi u|^2 \dVol_{\underline{g}}
 \end{split}
\end{equation*}
where we have made use of the fact that $r$ is bounded above in the region $\tilde{\mathcal{S}}_t^\epsilon$
In particular, returning to equation \eqref{equation boundedness two charge v1} we can show
\begin{equation}
\label{equation boundedness two charge v2}
 \begin{split}
  \int_{\Sigma_t} |\partial u|^2 \dVol_{\underline{g}} + \int_{\tilde{\mathcal{S}}_t^\epsilon} C\left( |\Phi u|^2 + |\bar{\partial} u|^2 \right) \dVol_{\underline{g}}
  \lesssim \int_{\tilde{\mathcal{S}}_t^\epsilon} | N u|^2 \dVol_{\underline{g}} + \int_{\Sigma_0} \left( |\partial u|^2 + C |\partial \Phi u|^2 \right) \dVol_{\underline{g}}
 \end{split}
\end{equation}
Now, in the region $\tilde{\mathcal{S}}_t^\epsilon$, we have that $N \in \spn\{\Phi, X_1, X_2, X_3, X_4\}$ according to proposition \ref{proposition N in terms of Phi}. Thus, if we take the constant $C$ to be sufficiently large, then we can absorb the first term on the right hand side of equation \eqref{equation boundedness two charge v2} by the left hand side, finishing the proof of the theorem.
 
\end{proof}

\subsection{Uniform boundedness on three charge microstate geometries}
In this subsection we will prove uniform boundedness of waves on three charge microstate geometries. We will roughly follow the pattern of the proof of boundedness in two charge microstate geometries, first applying the $T$-energy estimate and then attempting to control the terms with the wrong sign by commuting and making use of the Hardy inequality. The main difference is that there is a genuine ergoregion in the three charge microstates, meaning that, rather than being degenerate, the $T$-energy can actually become negative in the ergoregion. As such, we must also make use of the globally null Killing vector field $V$ in order to obtain a degenerate energy estimate.

We begin by applying the energy estimate with the multiplier $T$, which allows us to prove the following:
\begin{proposition}[The $T$-energy estimate on three charge microstates]
\label{proposition T energy three charge}
Let $u$ solve the wave equation $\Box_g u = 0$ on a three charge microstate geometry, with metric \eqref{equation three charge metric}. Then for any $t \geq 0$ and any $\epsilon > 0$ we have the following energy estimate (with indefinite sign): there is a positive constant $C(\epsilon) > 0$ such that
\begin{equation}
\label{equation T energy three charge}
 \int_{\Sigma_t \setminus \tilde{\mathcal{E}}_t^\epsilon}|\partial u|^2 \dVol_{\underline{g}} + \int_{\tilde{\mathcal{E}}_t^\epsilon} |\bar{\partial} u|^2 \dVol_{\underline{g}} \leq C(\epsilon) \int_{\mathcal{E}_t} |N u|^2 \dVol_{\underline{g}} + \int_{\Sigma_0} |\partial u|^2 \dVol_{\underline{g}}
\end{equation}

\end{proposition}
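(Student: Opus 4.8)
The plan is to follow the proof of Proposition~\ref{proposition T energy two charge}, applying the energy identity of Proposition~\ref{proposition energy identity} with the multiplier $T$ on the region $\mathcal{M}_0^t$. Since $T$ is Killing and $u$ solves $\Box_g u = 0$, the bulk integrand vanishes, and (for suitably decaying $u$, e.g.\ arising from compactly supported initial data, the boundary term at infinity vanishing since $|T|$ is bounded) one obtains $\int_{\Sigma_0}\imath_{{^{(T)}J}[u]}\dVol_g = \int_{\Sigma_t}\imath_{{^{(T)}J}[u]}\dVol_g$. The computation of the flux through $\Sigma_t$ is identical to the two charge case: restricting the current and decomposing $T = N - g(T,n)n$ with $N = (g(N,N))^{\frac{1}{2}}X_5$, one finds
\begin{equation*}
 \left(\imath_{{^{(T)}J}[u]}\dVol_g\right)\big|_{\Sigma_t} = \frac{1}{2}\left( -g(T,n)^{-1}(Tu)^2 - g(T,n)\sum_{A=1}^4(X_A u)^2 - g(T,n)^{-1}\big((g(T,n))^2 - g(N,N)\big)(X_5 u)^2 \right)\dVol_{\underline{g}} .
\end{equation*}
The one new ingredient is the elementary identity $(g(T,n))^2 - g(N,N) = -g(T,T)$, immediate from $T = N - g(T,n)n$ and $g(n,n) = -1$. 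Since $g(T,n) \sim -1$ (the hypersurfaces $\Sigma_t$ being uniformly spacelike), the first two terms are $\sim |\bar{\partial}u|^2$, while the last term in the parentheses equals $g(T,n)^{-1}g(T,T)(X_5 u)^2$ and so has the same sign as $-g(T,T)$.

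I would then do the sign bookkeeping region by region. On $\Sigma_t \setminus \tilde{\mathcal{E}}_t^\epsilon$ we have $g(T,T) \le -\epsilon$, so the $(X_5 u)^2$ term is bounded below by $c(\epsilon)(X_5 u)^2$ and the integrand controls the full nondegenerate density $|\partial u|^2$ there. On $\tilde{\mathcal{E}}_t^\epsilon \setminus \mathcal{E}_t$ we have $g(T,T) \le 0$, so the $(X_5 u)^2$ term is non-negative and may simply be discarded, leaving control of $|\bar{\partial}u|^2$. The term has the wrong sign only on the ergoregion $\mathcal{E}_t$ itself; but $\mathcal{E}_t$ is a bounded region (as $g(T,T) \to -1$ when $r \to \infty$) on which $g(T,T)$ is bounded, and there $g(N,N) = g(T,T) + (g(T,n))^2 \geq (g(T,n))^2 \sim 1$, so $(X_5 u)^2 = (g(N,N))^{-1}(Nu)^2 \lesssim (Nu)^2$ on $\mathcal{E}_t$. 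Adding the three contributions yields
\begin{equation*}
 \int_{\Sigma_t}\imath_{{^{(T)}J}[u]}\dVol_g \geq c(\epsilon)\left( \int_{\Sigma_t \setminus \tilde{\mathcal{E}}_t^\epsilon}|\partial u|^2 \dVol_{\underline{g}} + \int_{\tilde{\mathcal{E}}_t^\epsilon}|\bar{\partial}u|^2 \dVol_{\underline{g}} \right) - C(\epsilon)\int_{\mathcal{E}_t}|Nu|^2 \dVol_{\underline{g}} .
\end{equation*}
Finally, the $T$-current restricted to $\Sigma_0$ is a quadratic form in the derivatives of $u$ with coefficients bounded in terms of the fixed charges, so $\int_{\Sigma_0}\imath_{{^{(T)}J}[u]}\dVol_g \lesssim \int_{\Sigma_0}|\partial u|^2 \dVol_{\underline{g}}$; substituting this into the conservation law and rearranging gives the stated estimate, up to relabelling of constants.

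I do not expect a genuine obstacle in this proposition: it is essentially the two charge $T$-energy computation, the single new feature being that the indefinite $(X_5 u)^2$ term cannot be controlled on the ergoregion. This is exactly why the uncontrolled term $C(\epsilon)\int_{\mathcal{E}_t}|Nu|^2 \dVol_{\underline{g}}$ must be carried on the right-hand side, so that the estimate is only one of indefinite sign. The real difficulty is deferred to the subsequent steps, where this term must be absorbed: there the globally null, future-directed Killing field $V$ is essential, since the degenerate but \emph{non-negative} $V$-energy is conserved and controls all derivatives transverse to $V$; combining it with a commutation argument (playing the role that $\Phi$ played in the two charge case) and the Hardy inequality of Lemma~\ref{lemma Hardy} is what will eventually close the boundedness estimate.
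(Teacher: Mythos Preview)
Your proposal is correct and follows essentially the same approach as the paper: the same flux formula carried over from the two charge case, the identity $(g(T,n))^2 - g(N,N) = -g(T,T)$, the region-by-region sign analysis, and the observation that $g(N,N)$ is bounded away from zero on $\mathcal{E}_t$ so that $X_5$ may be replaced by $N$ there. Your write-up is in fact slightly more explicit than the paper's (you separate $\tilde{\mathcal{E}}_t^\epsilon \setminus \mathcal{E}_t$ from $\mathcal{E}_t$ and note that the ergoregion is bounded), and your remarks on the subsequent use of the $V$-energy are accurate.
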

\begin{proof}
 Following identical calculations as in the two charge case, and making use of proposition \ref{proposition T energy current}, we can obtain
\begin{equation}
 \begin{split}
  \imath_{{^{(T)}J}[u]} \dVol_g \big|_{\Sigma_t} &= \frac{1}{2}\bigg( -g(T, n)^{-1}(T u)^2 - g(T, n)\sum_{A = 1}^4 (X_A u)^2 \\
  &\phantom{= \frac{1}{2}\bigg(} - g(T, n)^{-1}\left( (g(T, n))^2 - g(N, N) \right)(X_5 u)^2 \bigg) \dVol_{\underline{g}}
 \end{split}
\end{equation}
The difference with the two charge case is that $g(T, T) = g(N, N) - (g(T,n))^2$ becomes \emph{positive} in the ergoregion $\mathcal{E}$ (indeed, this is the definition of the ergoregion) and so the coefficient of $(X_5 u)^2$ becomes negative in the equation above, within the ergoregion. On the other hand, we have
\begin{equation}
 (g(T, n))^2 - g(N, N) \geq \epsilon \quad \text{in } \tilde{\mathcal{E}}_t^\epsilon
\end{equation}
Finally, we note that, in the ergoregion $\mathcal{E}_t$ we have
\begin{equation*}
 \begin{split}
  (g(T, n))^2 - g(N, N) &< 0 \\
  \Rightarrow g(N,N) > (g(T,n))^2
 \end{split}
\end{equation*}
Since $g(T,n) \sim 1$, $g(N,N)$ is bounded away from zero in the ergoregion, which enables us to replace the $X_5$ derivative by the $N$ derivative in the ergoregion.
\end{proof}

We will also need to perform an energy estimate using the globally null vector field $V$. In order to do this, we need to choose a different basis for the tangent space of $\Sigma_t$, which we will construct with the help of the following definition:
\begin{definition}[The vector field $N_V$]
 Define the vector
\begin{equation}
 N_V := V + g(V, n)n
\end{equation}
\end{definition}

\begin{proposition}[Bounding the norm of $N_V$]
 There is some constant $C > 0$ such that
\begin{equation}
 g(N_V, N_V) \geq C
\end{equation}
\end{proposition}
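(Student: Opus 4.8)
The plan is to compute $g(N_V, N_V)$ explicitly and show it is bounded below by a positive constant. Since $N_V = V + g(V,n)n$ is the orthogonal projection of $V = T + Z$ onto $\Sigma_t$, and since $n \propto (\upd t)^\sharp$ with $V(t) = T(t) + Z(t) = 1 + 0 = 1 \neq 0$, the vector $V$ is genuinely transverse to $\Sigma_t$, so $N_V$ is well-defined. The key identity is
\begin{equation*}
 g(N_V, N_V) = g(V, V) - g(V, n)^2 = g(V,V) + g\bigl(V, (\upd t)^\sharp\bigr)^2 \cdot \bigl(-g^{-1}(\upd t, \upd t)\bigr)^{-1},
\end{equation*}
using $n = -(-g^{-1}(\upd t,\upd t))^{1/2}(\upd t)^\sharp$. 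Because $V$ is null, $g(V,V) = 0$, so the statement reduces to showing that $g(V, n)$ is bounded away from zero, equivalently that $g(V, (\upd t)^\sharp) = (\upd t)(V) = V^t = 1$ is nonzero (which is immediate) while $-g^{-1}(\upd t,\upd t)$ is bounded above. The latter is exactly the content of the earlier proposition stating that $\Sigma_t$ is uniformly spacelike: $g^{-1}(\upd t,\upd t)$ is bounded above and below by negative constants. Hence $g(V,n)^2 = \bigl(-g^{-1}(\upd t,\upd t)\bigr) \gtrsim 1$, and since $g(V,V) = 0$ we get $g(N_V, N_V) = -g(V,V) + \text{(orthogonal part)}$...

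More carefully: decompose $V = -g(V,n)\,n + N_V$ in the basis $\{n\} \cup T\Sigma_t$, which is orthogonal. Then $g(V,V) = g(V,n)^2 - (-g(V,n)^2) $ — rather, using $g(n,n) = -1$ and $g(n, N_V) = 0$, we obtain $g(V,V) = g(V,n)^2 g(n,n) + g(N_V,N_V) = -g(V,n)^2 + g(N_V,N_V)$. Since $g(V,V) = 0$, this gives the clean formula
\begin{equation*}
 g(N_V, N_V) = g(V,n)^2 = -g^{-1}(\upd t,\upd t),
\end{equation*}
and the right-hand side is bounded below by a positive constant by the uniform spacelikeness of $\Sigma_t$. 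This completes the proof.

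There is essentially no obstacle here: the entire argument rides on the nullity of $V$ together with the already-established uniform spacelikeness of the hypersurfaces $\Sigma_t$. The only point requiring a moment of care is the sign bookkeeping in the orthogonal decomposition and the conversion between $g(V,n)$ and $g^{-1}(\upd t,\upd t)$; one should double-check that $g(V,n) \neq 0$ everywhere (so that $N_V$ really is a nontrivial spacelike vector and not zero), which follows since $g(V,n) = -(-g^{-1}(\upd t,\upd t))^{1/2}(\upd t)(V) = -(-g^{-1}(\upd t,\upd t))^{1/2}$, nonzero because $g^{-1}(\upd t,\upd t) < 0$. I would present it in three short steps: (i) record $n \propto (\upd t)^\sharp$ and $g(V,V)=0$; (ii) use the orthogonal decomposition of $V$ to derive $g(N_V,N_V) = -g^{-1}(\upd t,\upd t)$; (iii) invoke the uniform spacelikeness proposition to conclude $g(N_V,N_V) \geq C > 0$.
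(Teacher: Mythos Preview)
Your proposal is correct and follows essentially the same route as the paper. Both arguments reduce to the identity $g(N_V,N_V)=g(V,n)^2$ via the nullity of $V$ and the orthogonal decomposition $V=-g(V,n)\,n+N_V$, and then bound $g(V,n)$ away from zero using the uniform spacelikeness of $\Sigma_t$. The only cosmetic difference is that the paper writes $g(V,n)=g(T,n)+g(Z,n)=g(T,n)$ (using $g(Z,n)=0$, since $Z$ is tangent to $\Sigma_t$) and then invokes the earlier fact $g(T,n)\sim 1$, whereas you compute $(\upd t)(V)=1$ directly; these are the same observation. Your write-up contains a couple of self-corrected sign slips and an exponent that depends on which normalization of $n$ one uses, but none of this affects the conclusion since $-g^{-1}(\upd t,\upd t)$ is bounded above and below by positive constants.
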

\begin{proof}
 Since $V$ is null, we have
\begin{equation*}
 0 = -g(V,n)^2 + g(N_V, N_V)
\end{equation*}
 On the other hand, we have
\begin{equation*}
 g(V,n) = g(T, n) + g(Z, n) = g(T, n) \sim 1
\end{equation*}
since $g(Z, n) = 0$.
\end{proof}

\begin{definition}[The vector fields $Y_A$]
 We define an orthonormal basis of vector fields $Y_A$ for $\Sigma_t$, such that $Y_5$ is parallel to $N_V$, i.e.\
\begin{equation}
 Y_5 := g(N_V, N_V)^{-\frac{1}{2}} N_V
\end{equation}
\end{definition}

\begin{definition}[Schematic notation for derivatives adapted to the frame $\{ Y_A\}$]
 We define the following notation:
\begin{equation}
 |\hat{\partial} u|^2 := |V u|^2 + \sum_{A = 1}^4 |Y_A u|^2
\end{equation}
i.e.\ the derivatives $\hat{\partial} u$ do not include the derivative in the $N_V$ direction.
\end{definition}

\begin{proposition}[The $V$-energy estimate on three charge microstate geometries]
 Let $u$ solve the wave equation $\Box_g u = 0$ on a three charge microstate geometry. Then for any $t \geq 0$ we have the following degenerate energy estimate:
\begin{equation}
 \int_{\Sigma_t} |\hat{\partial} u|^2 \dVol_{\underline{g}} \lesssim \int_{\Sigma_0} |\hat{\partial} u|^2 \dVol_{\underline{g}}
\end{equation}
\end{proposition}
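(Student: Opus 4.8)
The plan is to apply the energy identity of Proposition~\ref{proposition energy identity} with the multiplier $V$ to the region $\mathcal{M}_0^t$. Since $V$ is Killing and $u$ solves the wave equation, all bulk terms vanish, leaving
\begin{equation*}
 \int_{\Sigma_0} \imath_{{^{(V)}J}[u]} \dVol_g = \int_{\Sigma_t} \imath_{{^{(V)}J}[u]} \dVol_g.
\end{equation*}
The task is therefore to show that the flux $\imath_{{^{(V)}J}[u]} \dVol_g\big|_{\Sigma_\tau}$, restricted to a surface of constant $t$, is pointwise comparable to $|\hat{\partial} u|^2 \dVol_{\underline{g}}$, i.e.\ to a non-negative quantity controlling all derivatives except that along $N_V$. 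Given this, the estimate follows immediately from the equality above (with constants absorbing the $\sim$ in $\dVol_{\underline{g}} \sim r h f \sin\theta\cos\theta\,\upd z\wedge\cdots$).

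First I would compute the flux explicitly. Writing $V = -g(V,n)n + N_V$ and expanding $Q_{\mu\nu}[u]$ in the orthonormal frame $\{n, Y_1,\ldots,Y_5\}$ (with $Y_5 \parallel N_V$), the same algebra as in the proof of Proposition~\ref{proposition T energy two charge} gives
\begin{equation*}
 \imath_{{^{(V)}J}[u]} \dVol_g\big|_{\Sigma_t} = \frac{1}{2}\bigg( -g(V,n)^{-1}(Vu)^2 - g(V,n)\sum_{A=1}^{4}(Y_A u)^2 - g(V,n)^{-1}\big( g(V,n)^2 - g(N_V,N_V) \big)(Y_5 u)^2 \bigg)\dVol_{\underline{g}}.
\end{equation*}
Now the crucial point, which distinguishes $V$ from $T$ in the ergoregion, is that $V$ is null \emph{everywhere}, so $g(V,V) = g(N_V,N_V) - g(V,n)^2 = 0$ identically; hence the coefficient of $(Y_5u)^2$ vanishes, and the flux degenerates precisely in the $N_V = (Y_5$-)direction but is manifestly non-negative. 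Combined with $-g(V,n) = -g(T,n) \sim 1$ (since $g(Z,n)=0$, using that $\Sigma_t$ is uniformly spacelike), we get $\imath_{{^{(V)}J}[u]} \dVol_g\big|_{\Sigma_t} \sim |\hat{\partial}u|^2 \dVol_{\underline{g}}$, with constants independent of $t$.

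I do not expect a serious obstacle here: unlike the $T$-energy argument, there is no indefinite sign to fight and no commutation or Hardy inequality needed, because $V$ being globally null means its energy flux is a sum of squares with no bad term. The only points requiring (routine) care are: (i) confirming that $V(t) \neq 0$ transversality is not needed — what is needed is that $V$ is transverse to $\Sigma_t$, which holds since $g(V,n) \sim 1 \neq 0$; (ii) justifying the energy identity on the non-compact region $\Sigma_0 \cap \mathcal{M}$ via the decay of compactly supported (hence, by domain of dependence, spatially compactly supported) data, exactly as invoked in Theorem~\ref{theorem boundedness two charge}; and (iii) checking the lower bound $g(N_V,N_V) \geq C$, already established in the preceding proposition, is what guarantees the frame $\{Y_A\}$ is well-defined and the degeneracy is genuinely only one-dimensional. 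Assembling these gives the claimed inequality.
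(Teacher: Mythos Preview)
Your proposal is correct and matches the paper's proof essentially line for line: apply the energy identity with the Killing multiplier $V$, expand the flux in the frame $\{n,Y_1,\ldots,Y_5\}$ to obtain exactly the formula you wrote, and observe that nullness of $V$ kills the $(Y_5 u)^2$ coefficient. The paper's proof is in fact terser than yours, omitting the sanity checks (i)--(iii) you listed.
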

\begin{proof}
 Following a very similar set of calculations to those for the $T$-energy estimate (see proposition \ref{proposition T energy current}), we find
 \begin{equation}
  \begin{split}
   \imath_{{^{(V)}J}[u]} \dVol_g \big|_{\Sigma_t} &= \frac{1}{2}\bigg( -g(V, n)^{-1}(V u)^2 - g(V, n)\sum_{A = 1}^4 (Y_A u)^2 \\
   &\phantom{= \frac{1}{2}\bigg(} - g(V, n)^{-1}\left( (g(V, n))^2 - g(N_V, N_V) \right)(Y_5 u)^2 \bigg) \dVol_{\underline{g}}
  \end{split}
 \end{equation}
 However, since $V$ is globally null we find that
 \begin{equation*}
  (g(V,n))^2 = g(N_V, N_V)
 \end{equation*}
 and so the final term in the above equation vanishes, proving the proposition.
\end{proof}

Before we use the Hardy inequality of lemma \ref{lemma Hardy}, we need to show the $R$ can be written in terms of the $\hat{\partial}$ derivatives:
\begin{proposition}
\label{proposition R in terms of dhat}
 The vector field $R$ satisfies $R \in \spn\{ Y_1, Y_2, Y_3, Y_4 \}$, and moreover we can express $R$ as
 \begin{equation}
  R = \sum_{A = 1}^4 \hat{R}^A Y_A
 \end{equation}
 where 
 \begin{equation}
  \sum_{A = 1}^4 |\hat{R}^A| \leq 1
 \end{equation}
 
\end{proposition}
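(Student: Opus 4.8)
The plan is to repeat, essentially verbatim, the argument used to prove Proposition \ref{proposition R good derivative}, but now working with the orthonormal frame $\{Y_1,\ldots,Y_5\}$ adapted to $N_V$ in place of the frame $\{X_1,\ldots,X_5\}$ adapted to $N$. First I would decompose $R$ in the orthonormal basis $\{n, Y_1, \ldots, Y_5\}$ of $T\mathcal{M}$ along $\Sigma_t$, writing
\begin{equation*}
 R = R^0 n + \sum_{A=1}^5 \hat{R}^A Y_A,
\end{equation*}
and then show that the two components $R^0$ and $\hat{R}^5$ both vanish, so that only the ``good'' directions $Y_1,\ldots,Y_4$ survive.

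For $R^0$: since $n$ is proportional to $(\upd t)^\sharp$ and $R$ is tangent to $\Sigma_t$, so that $R(t) = 0$, we have $g(R, n) \propto \upd t(R) = 0$, hence $R^0 = 0$. For $\hat{R}^5$: because $Y_5 = g(N_V, N_V)^{-\frac{1}{2}} N_V$ with $N_V = V + g(V,n) n$, we get
\begin{equation*}
 \hat{R}^5 = g(R, Y_5) = g(N_V, N_V)^{-\frac{1}{2}}\big( g(R, V) + g(V, n)\, g(R, n) \big) = g(N_V, N_V)^{-\frac{1}{2}}\, g(R, V),
\end{equation*}
using $g(R,n) = 0$ once more together with the lower bound on $g(N_V, N_V)$ established above. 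It then remains only to check that $g(R, V) = g(R,T) + g(R,Z) = 0$, and this is immediate from the explicit form of the three charge metric (\ref{equation three charge metric}): the only term involving $\upd r$ is the diagonal piece $hf\,\upd r^2/(r^2 + (\tilde{\gamma}_1+\tilde{\gamma}_2)^2\eta)$, so there are no $\upd r\,\upd t$ or $\upd r\,\upd z$ cross terms, exactly as was already observed in the proof of Proposition \ref{proposition R good derivative}. Hence $\hat{R}^5 = 0$ and $R \in \spn\{Y_1,\ldots,Y_4\}$.

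For the quantitative bound, the coefficients are simply $\hat{R}^A = g(R, Y_A)$, so that $\sum_{A=1}^4 (\hat{R}^A)^2 = g(R, R)$. Reading $g(R,R) = hf/(r^2 + (\tilde{\gamma}_1 + \tilde{\gamma}_2)^2\eta)$ off the metric and using $hf \sim (1+r)^2$ (equivalently $rhf \sim r(1+r)^2$, as already recorded in the proof of Lemma \ref{lemma Hardy}), one sees that $g(R,R)$ is bounded above and below by positive constants uniformly on $\mathcal{M}$; Cauchy--Schwarz then gives $\sum_{A=1}^4 |\hat{R}^A| \lesssim 1$. I do not expect any genuine obstacle here: the computation is elementary linear algebra, and the only structural inputs are the absence of cross terms between $\upd r$ and $\upd t$ or $\upd z$ in the metric, and the uniform two-sided bound on $g(R,R)$.
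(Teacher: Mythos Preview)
Your proposal is correct and follows essentially the same route as the paper: decompose $R$ in the frame $\{n,Y_1,\ldots,Y_5\}$, kill $R^0$ via $R(t)=0$ and $n\propto(\upd t)^\sharp$, kill $\hat R^5$ by reducing $g(R,N_V)$ to $g(R,V)=g(R,T)+g(R,Z)=0$ from the absence of $\upd r\,\upd t$ and $\upd r\,\upd z$ cross terms, and finish with $g(R,R)\sim 1$. If anything, your write-up is slightly more explicit than the paper's, which compresses the step $g(R,N_V)=g(R,V)$ and the vanishing of the two cross terms into a single line.
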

\begin{proof}
 Decomposing in the basis $\{ n, Y_1, \ldots, Y_5\}$ we write
 \begin{equation}
  R = \hat{R}^0 n + \sum_{A = 1}^5 \hat{R}^A Y_A
 \end{equation}
Taking the inner product with $n$, and recalling that $n \propto \upd t^\sharp$ and that $R(t) = 0$, we find that $\hat{R}^0 = 0$. In addition, we have that
\begin{equation*}
 \begin{split}
  \hat{R}^5 &= g(R, Y_5) \\
  &= g(N_V, N_V)^{-\frac{1}{2}} g(R, N_V)
 \end{split}
\end{equation*}
But, since $g(R, n) = 0$ we have that
\begin{equation*}
 g(R, N_V) = g(R, T) = 0
\end{equation*}
where the final equality follows from the explicit form of the metric in equation \eqref{equation three charge metric}.

To finish the proof of the proposition, we simply note that $g(R, R) \sim 1$.
\end{proof}

We also need to show that $N_V$ can be expressed in terms of the vector fields $\Phi$, $\Psi$ and the  derivatives $\hat{\partial}$:
\begin{proposition}
\label{proposition NV}
 The vector field $N_V$ satisfies $N_V \in \spn\{ \Phi, \Psi, Y_1, Y_2, Y_3, Y_4 \}$, and moreover, in the region $\mathcal{E}_t$ we can express $N_V$ as
 \begin{equation}
  N_V = (N_V)^{\Phi} \Phi + (N_V)^{\Psi}\Psi + \sum_{A = 1}^4 (N_V)^A Y_A
 \end{equation}
 where 
\begin{equation}
 |(N_V)^{\Phi}| + |(N_V)^{\Psi}| + \sum_{A = 1}^4 |(N_V)^A| \lesssim 1
\end{equation} 
 in the region $\mathcal{E}_t$.
\end{proposition}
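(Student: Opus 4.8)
The plan is to follow the pattern of the proof of Proposition~\ref{proposition N in terms of Phi}, the new feature being that in the three charge geometry a single Killing vector field no longer suffices: near $\theta = 0$ the natural pairing involving $\Phi$ degenerates, so we must bring in $\Psi$ as well. Since $N_V$ is parallel to $Y_5$, it is equivalent to exhibit $Y_5$ as a combination of $\Phi$, $\Psi$ and $Y_1,\ldots,Y_4$ with bounded coefficients. First I would decompose $\Phi$ and $\Psi$ in the orthonormal frame $\{n, Y_1,\ldots,Y_5\}$. Both are tangent to $\Sigma_t$ (since $\Phi(t)=\Psi(t)=0$), and because $n \propto (\upd t)^\sharp$ we get $g(\Phi,n)=g(\Psi,n)=0$, so their $n$-components vanish and $g(\Phi,N_V)=g(\Phi,V)$, $g(\Psi,N_V)=g(\Psi,V)$. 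Consequently the $Y_5$-components are $\Phi^5 = g(N_V,N_V)^{-1/2}g(\Phi,V)$ and $\Psi^5 = g(N_V,N_V)^{-1/2}g(\Psi,V)$, and we recall that $g(N_V,N_V) = g(V,n)^2 = g(T,n)^2 \sim 1$, exactly as in the proof of the bound on $\|N_V\|$ above.

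The next step is a short computation from the explicit metric~\eqref{equation three charge metric}, using $V = T+Z$ and the identity $\tilde\gamma_1 + \tilde\gamma_2 = a$, which gives
\begin{equation*}
 g(\Phi,V) = -\frac{\sqrt{Q_1 Q_2}\,a\eta}{hf}\sin^2\theta, \qquad g(\Psi,V) = -\frac{\sqrt{Q_1 Q_2}\,a\eta}{hf}\cos^2\theta .
\end{equation*}
Hence $g(\Phi,V)^2 + g(\Psi,V)^2 = \frac{Q_1 Q_2 a^2\eta^2}{(hf)^2}\bigl(\sin^4\theta + \cos^4\theta\bigr) \geq \frac{Q_1 Q_2 a^2\eta^2}{2(hf)^2}$. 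In the ergoregion $\mathcal{E}_t$ we have $g(T,T) > 0$, and since $g(T,T) = (Q_p - f)/(hf)$ with $hf \sim (1+r)^2$, this forces $r$, and therefore $hf$, to be bounded above by a constant depending only on the charges; thus $g(\Phi,V)^2 + g(\Psi,V)^2 \gtrsim 1$ throughout $\mathcal{E}_t$. This is precisely the nondegeneracy we need: $\Phi$ and $\Psi$ are never simultaneously orthogonal to $N_V$.

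With this in hand I would simply take the (non-unique) coefficients
\begin{equation*}
 (N_V)^{\Phi} := \frac{g(N_V,N_V)\,g(\Phi,V)}{g(\Phi,V)^2 + g(\Psi,V)^2}, \qquad (N_V)^{\Psi} := \frac{g(N_V,N_V)\,g(\Psi,V)}{g(\Phi,V)^2 + g(\Psi,V)^2},
\end{equation*}
and check directly that $W := N_V - (N_V)^{\Phi}\Phi - (N_V)^{\Psi}\Psi$ is orthogonal to $n$ (all three vectors are tangent to $\Sigma_t$) and to $Y_5$: indeed $g(N_V,N_V)^{1/2}g(W,Y_5) = g(N_V,N_V) - (N_V)^{\Phi}g(\Phi,V) - (N_V)^{\Psi}g(\Psi,V) = 0$ by construction. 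Therefore $W = \sum_{A=1}^4 (N_V)^A Y_A$ with $(N_V)^A = -(N_V)^{\Phi}g(\Phi,Y_A) - (N_V)^{\Psi}g(\Psi,Y_A)$. The bounds on $(N_V)^{\Phi}$ and $(N_V)^{\Psi}$ are immediate from $|g(\Phi,V)|,|g(\Psi,V)| \leq \bigl(g(\Phi,V)^2 + g(\Psi,V)^2\bigr)^{1/2}$, the lower bound just established, and $g(N_V,N_V)\sim 1$. For $(N_V)^A$ one applies Cauchy--Schwarz in the induced Riemannian metric on $\Sigma_t$, $|g(\Phi,Y_A)| \leq \sqrt{g(\Phi,\Phi)}$ and $|g(\Psi,Y_A)| \leq \sqrt{g(\Psi,\Psi)}$, so it only remains to bound $g(\Phi,\Phi)=g_{\phi\phi}$ and $g(\Psi,\Psi)=g_{\psi\psi}$ on $\mathcal{E}_t$.

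I expect this last point to be the only genuine subtlety. In the coordinates of~\eqref{equation three charge metric} the functions $g_{\phi\phi}$ and $g_{\psi\psi}$ contain the factor $h \sim f^{-1}$, which blows up on the locus $f=0$, and this locus does meet $\overline{\mathcal{E}_t}$. However $\overline{\mathcal{E}_t}$ is a compact region of $\Sigma_t$ on which the metric $g$ is regular, and on the part of $\{f=0\}$ meeting $\overline{\mathcal{E}_t}$ one has $r$ bounded away from $0$, so $(r,\theta,\phi,\psi)$ remain good coordinates there and $g_{\phi\phi}, g_{\psi\psi}$ are smooth; equivalently, one checks by direct substitution that on $f=0$ (where $r^2 = -a\eta(\tilde\gamma_1\sin^2\theta + \tilde\gamma_2\cos^2\theta)$) the coefficient of the apparently singular $f^{-1}$ term in $g_{\phi\phi}$, and likewise in $g_{\psi\psi}$, reduces to a multiple of $a - (\tilde\gamma_1 + \tilde\gamma_2) = 0$ and hence vanishes. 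Either way $g_{\phi\phi}$ and $g_{\psi\psi}$ are bounded above on $\mathcal{E}_t$ by a constant depending only on the charges, which completes the bound on $(N_V)^A$ and proves the proposition.
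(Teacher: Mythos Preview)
Your proof is correct and follows the same core approach as the paper: compute $g(\Phi,V)$ and $g(\Psi,V)$ from the metric, observe they cannot vanish simultaneously, and conclude that $N_V$ lies in the desired span. The paper is terser --- it simply notes that away from $\theta=0$ one has $N_V\in\spn\{\Phi,Y_1,\ldots,Y_4\}$ and away from $\theta=\pi/2$ one has $N_V\in\spn\{\Psi,Y_1,\ldots,Y_4\}$, then says ``putting the last few statements together proves the proposition'' --- whereas you give an explicit global formula for $(N_V)^\Phi$, $(N_V)^\Psi$ via the weights $g(\Phi,V)/\bigl(g(\Phi,V)^2+g(\Psi,V)^2\bigr)$, etc. Your version is more self-contained and makes the coefficient bounds transparent; the paper's version implicitly relies on a covering/compactness argument it does not spell out.

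One small inaccuracy in your final paragraph: the locus $\{f=0\}$ \emph{does} meet $r=0$ inside $\overline{\mathcal{E}_t}$ (at the angle where $\tilde{n}\sin^2\theta=(\tilde{n}+1)\cos^2\theta$, and there $g(T,T)=Q_p/(hf)>0$), so your claim that $r$ is bounded away from $0$ on that intersection is not correct. This is harmless, since your alternative argument --- that $\overline{\mathcal{E}_t}$ is compact, the metric is globally regular, and $\Phi,\Psi$ are smooth Killing fields, hence $g(\Phi,\Phi)$ and $g(\Psi,\Psi)$ are smooth bounded functions there --- is the right one and suffices. (Your direct-substitution check that the $f^{-1}$ coefficient vanishes also works, though the cancellation on $f=0$ is between $r^2+\tilde\gamma_2 a\eta=a\eta(\tilde\gamma_2-\tilde\gamma_1)\sin^2\theta$ and the $(\tilde\gamma_1^2-\tilde\gamma_2^2)\eta\sin^2\theta$ term, not literally a multiple of $a-(\tilde\gamma_1+\tilde\gamma_2)$.)
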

\begin{proof}
 We begin by expressing $\Phi$ in terms of the basis $\{n, Y_1, \ldots, Y_5\}$:
 \begin{equation*}
  \Phi = \Phi^0 n + \sum_{A = 1}^5 \hat{\Phi}^A Y_A
 \end{equation*}
 Taking the inner product with $n$ and recalling that $n \propto \upd t^\sharp$ and $\Phi(t) = 0$, we find that $\Phi^0 = 0$. On the other hand, to find the coefficient $\hat{\Phi}^5$ we compute
 \begin{equation*}
  \begin{split}
   \hat{\Phi}^5 &= g(N_V, N_V)^{-\frac{1}{2}} g(\Phi, N_V) \\
   &= g(N_V, N_V)^{-\frac{1}{2}}g(\Phi, V)
  \end{split}
 \end{equation*}
and, from the explicit form of the metric given in equation \eqref{equation three charge metric} we compute
\begin{equation*}
 g(\Phi, V) =  -\frac{(\tilde{\gamma}_1 + \tilde{\gamma}_2) \eta \sqrt{Q_1 Q_2}}{h f}\sin^2 \theta 
\end{equation*} 
 So we see that, away from $\theta = 0$, we have
 \begin{equation*}
  N_V \in \spn\{ \Phi, Y_1, Y_2, Y_3, Y_4\}
 \end{equation*}
 
 Similarly, we can write
 \begin{equation*}
  \Psi = \Psi^0 n + \sum_{A = 1}^5 \hat{\Psi}^A Y_A
 \end{equation*}
 Once again, since $n \propto \upd t^\sharp$ and $\Psi(t) = 0$ we find that $\Psi^0 = 0$. Now, we have
 \begin{equation*}
  \hat{\Psi}^5 = g(N_V, N_V)^{-\frac{1}{2}}g(\Psi, V)
 \end{equation*}
 and from the explicit form of the metric \eqref{equation three charge metric} we find
 \begin{equation*}
 g(\Psi, V) =  -\frac{(\tilde{\gamma}_1 + \tilde{\gamma}_2) \eta \sqrt{Q_1 Q_2}}{h f}\cos^2 \theta 
\end{equation*} 
 So we see that, away from $\theta = \frac{\pi}{2}$, we have
 \begin{equation*}
  N_V \in \spn\{ \Psi, Y_1, Y_2, Y_3, Y_4\}
 \end{equation*}
 Putting the last few statements together proves the proposition.
\end{proof}

We are now ready to prove uniform boundedness on three charge microstate geometries.

\begin{theorem}[Uniform boundedness of solutions to the wave equation on three charge microstate geometries]
\label{theorem boundedness three charge}
 Let $u$ be a solution to the wave equation $\Box_g u = 0$ on a two charge microstate geometry, with compactly supported initial data on the initial hypersurface $\Sigma_0$. Then, for all $t \geq 0$ we have
\begin{equation}
 \int_{\Sigma_t} |\partial u|^2 \dVol_{\underline{g}} \lesssim \int_{\Sigma_0} \left( |\partial u|^2 + |\partial \Phi u|^2 + |\partial \Psi u|^2 \right)\dVol_{\underline{g}} 
\end{equation}
\end{theorem}
\begin{proof}
 We begin with the $T$-energy estimate for the field $u$ (see proposition \ref{proposition T energy current}). To this, we add the $V$-energy estimate for $u$ multiplied by some large constant $C$. Finally, we add both the $V$-energy estimates for the fields $\Phi u$ and $\Psi u$, also multiplied by the large constant $C$. We note that both $\Phi$ and $\Psi$ are Killing vector fields for the metric \eqref{equation three charge metric}, and so from proposition \ref{proposition commute} the fields $(\Phi u)$ and $(\Psi u)$ also satisfy the wave equation. Thus we find
\begin{equation}
\label{equation boundedness three charge v1}
 \begin{split}
  &\int_{\Sigma_t} \left( |\partial u|^2 + C|\hat{\partial} u|^2 + C|\hat{\partial}\Phi u|^2 + C|\hat{\partial}\Psi u|^2 \right) \dVol_{\underline{g}} \\
  &\lesssim \int_{\mathcal{E}_t} |N_V u|^2 \dVol_{\underline{g}} + \int_{\Sigma_0} \left( |\partial u|^2 + C|\partial \Phi u|^2 + C|\partial \Psi u|^2 \right) \dVol_{\underline{g}}
 \end{split}
\end{equation}
Now, since $u$ is compactly supported on $\Sigma_0$, by the domain of dependence property it is compactly supported on the surface $\Sigma_t$ and so in particular we can apply the Hardy inequality of lemma \ref{lemma Hardy}. We have
\begin{equation*}
 \begin{split}
  \int_{\Sigma_t} \left( |\Phi u|^2 + |\Psi u|^2 \right) \frac{1}{(1+r)^2} \dVol_{\underline{g}} &\lesssim \int_{\Sigma_t} \left( |R \Phi u|^2 + |R \Psi u|^2 \right) \dVol_{\underline{g}} \\
  &\lesssim \int_{\Sigma_t} \left( |\hat{\partial} \Phi u|^2 + |\hat{\partial} \Psi u|^2 \right) \dVol_{\underline{g}}
 \end{split}
\end{equation*}
where the second inequality follows from proposition \ref{proposition R in terms of dhat}.

The region $\mathcal{E}_t$ is compact, so $r$ is bounded above in $\mathcal{E}_t$ and we have
\begin{equation*}
 \int_{\mathcal{E}_t}\left( |\Phi u|^2 + |\Psi u|^2 \right) \dVol_{\underline{g}} \lesssim \int_{\Sigma_t} \left( |\hat{\partial} \Phi u|^2 + |\hat{\partial} \Psi u|^2 \right) \dVol_{\underline{g}}
\end{equation*}
Substituting into equation \eqref{equation boundedness three charge v1} we find
\begin{equation}
\label{equation boundedness three charge v2}
 \begin{split}
  &\int_{\Sigma_t} |\partial u|^2 \dVol_{\underline{g}} + \int_{\mathcal{E}_t} \left( C|\hat{\partial} u|^2 + C|\Phi u|^2 + C|\Psi u|^2 \right) \dVol_{\underline{g}} \\
  &\lesssim \int_{\mathcal{E}_t} |N_V u|^2 \dVol_{\underline{g}} + \int_{\Sigma_0} \left( |\partial u|^2 + C|\partial \Phi u|^2 + C|\partial \Psi u|^2 \right) \dVol_{\underline{g}}
 \end{split}
\end{equation}
Now, we can use proposition \ref{proposition NV} to find
\begin{equation}
 \int_{\mathcal{E}_t} |N_V u|^2 \dVol_{\underline{g}} \lesssim \int_{\mathcal{E}_t} \left( |\hat{\partial} u|^2 + |\Phi u|^2 + |\Psi u|^2 \right) \dVol_{\underline{g}}
\end{equation}
so if we take the constant $C$ to be sufficiently large, then we can absorb the first term on the right hand side of equation \eqref{equation boundedness three charge v2} by the left hand side, proving the theorem.
\end{proof}

We note that in both theorem \ref{theorem boundedness two charge} and \ref{theorem boundedness three charge} the restriction to compactly supported initial data may be replaced by data satisfying, say, $E^{(1)}_{0} < \infty$ by standard density arguments.

\section{Non-decay of waves on three charge microstate geometries}
\label{section non decay}
In the previous section we saw that linear waves on both two and three charge microstate geometries are uniformly bounded, in the sense that their energy is bounded in terms of the initial (higher order) energy. In this section, we will show that the \emph{local} energy of such waves \emph{does not decay} on three charge microstate geometries. This follows from \cite{Friedman1978} (see also the recent work \cite{Moschidis:2016zjy}) but, for completeness, we shall give a slightly more detailed and more explicit construction below.

\begin{theorem}[Non decay of local energy on three charge microstate geometries]
 Let $k \geq 0$ be some constant. Then, on any three charge microstate geometry, there exists a solution $u$ to the wave equation $\Box_g u = 0$, a $T$-invariant open region $\mathcal{U}_t \subset \mathcal{E}_t$ and a positive constant $C_k > 0$ such that, at all times $t \geq 0$, 
 \begin{equation}
  \int_{\mathcal{U}_t} |N u|^2 \dVol_{\underline{g}} \geq C_k E^{(k)}_0(u)
 \end{equation}
 Moreover, the initial data for $u$ can be chosen to be smooth and compactly supported.
 
\end{theorem}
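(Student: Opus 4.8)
\emph{Proof proposal.} The plan is to realise Friedman's mechanism \cite{Friedman1978} concretely on the three charge geometry: construct smooth, compactly supported data on $\Sigma_0$ whose conserved $T$-energy
\begin{equation*}
 E_T[u] := \int_{\Sigma_t} \imath_{{^{(T)}J}[u]} \dVol_g
\end{equation*}
is \emph{strictly negative}, and then observe that a negative value of $E_T$ traps a fixed amount of the quantity $\int_{\mathcal{E}_t}|Nu|^2\dVol_{\underline{g}}$ at every later time. As a first step I would rewrite the flux from the proof of Proposition \ref{proposition T energy three charge}: writing $b := -g(T,n) \sim 1$ and using $g(N,N) = b^2 + g(T,T)$ together with $X_5 = g(N,N)^{-1/2}N$, one finds on each $\Sigma_t$
\begin{equation*}
 \imath_{{^{(T)}J}[u]} \dVol_g \big|_{\Sigma_t} = \left( \frac{(Tu)^2}{2b} + \frac{b}{2}\sum_{A=1}^5 (X_A u)^2 - \frac{(Nu)^2}{2b} \right) \dVol_{\underline{g}} .
\end{equation*}
Since $(Nu)^2 \le g(N,N)\sum_{A}(X_Au)^2 = \big(b^2 + g(T,T)\big)\sum_A(X_Au)^2$, this integrand is non-negative wherever $g(T,T)\le 0$, i.e.\ on $\Sigma_t \setminus \mathcal{E}_t$, while on $\mathcal{E}_t$ it is bounded below by $-\frac{1}{2b}(Nu)^2 \gtrsim -|Nu|^2$.

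For the data I would take it time-symmetric, $\partial_t u|_{\Sigma_0} = 0$ and $u|_{\Sigma_0} = u_0$, so that $Tu|_{\Sigma_0} = 0$ and $E_T[u] = \int_{\Sigma_0}\big(\frac{b}{2}\sum_A(X_Au_0)^2 - \frac{1}{2b}(Nu_0)^2\big)\dVol_{\underline{g}}$. This functional is negative precisely when the $\Sigma_0$-gradient of $u_0$ is sufficiently aligned with $N$ on a region where $g(T,T)>0$. Because $N$ is not hypersurface-orthogonal within $\Sigma_0$, no global $u_0$ has $\underline{\nabla}u_0 \parallel N$, so I would localise: fix a point $p \in \mathcal{E}_0$ (where $N\ne 0$ and $g(T,T)>0$, by the estimates used in the proof of Proposition \ref{proposition T energy three charge}), choose a function $\lambda$ on a neighbourhood of $p$ with $\underline{\nabla}\lambda(p) \parallel N(p)$, and set $u_0 = \chi\cos(\omega\lambda)$ with $\chi$ a cutoff supported in a small neighbourhood of $p$ inside $\mathcal{E}_0$ and $\omega$ large. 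A WKB expansion then gives
\begin{equation*}
 E_T[u] = \frac{\omega^2}{2}\int_{\Sigma_0}\chi^2\sin^2(\omega\lambda)\left( b\,|\underline{\nabla}\lambda|^2 - \frac{(N\lambda)^2}{b} \right)\dVol_{\underline{g}} + O(\omega) ,
\end{equation*}
and since $(N\lambda)^2(p) = g(N,N)(p)\,|\underline{\nabla}\lambda|^2(p) > b(p)^2\,|\underline{\nabla}\lambda|^2(p)$ in the ergoregion, the bracketed term is strictly negative near $p$; shrinking the support of $\chi$ and letting $\omega\to\infty$ (the local average of $\sin^2(\omega\lambda)$ tends to $\frac12$ since $\underline{\nabla}\lambda\ne 0$ there) forces $E_T[u] < 0$.

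With such a $u$ fixed, $E_T[u]$ is conserved by Proposition \ref{proposition energy identity} (as $T$ is Killing, $\Box_g u = 0$, and $u$ has spatially compact support on every $\Sigma_t$ by finite speed of propagation). Take $\mathcal{U}_t := \mathcal{E}_t$, which is open and $T$-invariant because $g(T,T)$ is $t$-independent. Splitting $E_T[u]$ over $\mathcal{E}_t$ and its complement and using the two sign statements above,
\begin{equation*}
 0 > E_T[u] \ge -C\int_{\mathcal{E}_t}|Nu|^2\,\dVol_{\underline{g}} ,
\end{equation*}
so $\int_{\mathcal{U}_t}|Nu|^2\,\dVol_{\underline{g}} \ge |E_T[u]|/C =: c_0 > 0$ for all $t\ge 0$. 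Since $u$ is smooth and compactly supported, $E^{(k)}_0(u)$ is finite, and positive (otherwise $u\equiv 0$, contradicting $E_T[u]<0$), so the theorem holds with $C_k := c_0/E^{(k)}_0(u)$.

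I expect the construction of the negative-energy datum to be the only genuine obstacle: one must verify that the order-$\omega$ corrections in the WKB expansion (derivatives hitting $\chi$, and the variation of $b$, $N$ and the volume form) do not overwhelm the negative $\omega^2$ term, which is exactly why the support of $\chi$ must be small and concentrated where $g(T,T)$ is bounded away from zero. It is worth recording that such data \emph{cannot} be chosen independent of the Kaluza--Klein coordinate $z$: if $Zu = 0$ then $Q(V,n) = Q(T,n) + Q(Z,n) = Q(T,n)$, since $Zu=0$ and $g(Z,n)=0$, and $V$ being null forces $Q(V,n)\ge 0$, hence $E_T[u]\ge 0$; equivalently, $\underline{\nabla}\lambda \parallel N$ forces $\partial_z\lambda \ne 0$ because the $\upd z$-component of the one-form $g(T,\cdot)$ is $g_{tz} = -Q_p/(hf)\ne 0$ in the three charge case.
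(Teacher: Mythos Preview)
Your proposal is correct and follows essentially the same approach as the paper: both exploit Friedman's mechanism by constructing smooth, compactly supported data in the ergoregion with $Tu|_{\Sigma_0}=0$ and high-frequency oscillation in the $N$ (equivalently $X_5$) direction, so that the $g(T,T)(X_5u)^2$ term dominates and drives the conserved $T$-energy negative, after which the non-negativity of the $T$-flux outside $\mathcal{E}_t$ traps $\int_{\mathcal{E}_t}|Nu|^2$ away from zero for all time. The only cosmetic difference is that the paper builds its phase as the flow parameter $s$ of $X_5$ off the slice $\{z=0,\ g(T,T)>\epsilon\}$, whereas you localise near a single point with a phase $\lambda$ satisfying $\underline{\nabla}\lambda(p)\parallel N(p)$; your closing remark on the necessary $z$-dependence of such data also appears (at the end of Section~\ref{section slow decay}) in the paper.
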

\begin{proof}
 Applying the $T$-energy estimate (using proposition \ref{proposition T energy current}) we find that
\begin{equation}
 \begin{split}
  &\int_{\Sigma_t} -\frac{1}{2}g(T,n)^{-1}\bigg( (T u)^2 + g(T, n)^2\sum_{A = 1}^4 (X_A u)^2 - g(T, T)(X_5 u)^2 \bigg) \dVol_{\underline{g}} \\
  &= \int_{\Sigma_0} -\frac{1}{2}g(T,n)^{-1}\bigg( (T u)^2 + g(T, n)^2\sum_{A = 1}^4 (X_A u)^2 - g(T, T)(X_5 u)^2 \bigg) \dVol_{\underline{g}}
 \end{split}
\end{equation}
In particular, by separating the surface $\Sigma_t$ into the ergoregion $\mathcal{E}_t$ and its complement, we find
\begin{equation}
\label{equation non decay v1}
 \begin{split}
  &\int_{\mathcal{E}_t} -g(T, n)^{-1} g(T,T)(X_5 u)^2 \dVol_{\underline{g}} \\
  &= \int_{\Sigma_t} \left( -g(T,n)^{-1} (T u)^2  - g(T,n)\sum_{A = 1}^4 (X_A u)^2 \right) \dVol_{\underline{g}} 
  + \int_{\Sigma_t \setminus \mathcal{E}} -g(T, n)^{-1} g(T,T)(X_5 u)^2 \dVol_{\underline{g}} \\
  & + \int_{\Sigma_0} g(T,n)^{-1}\bigg( (T u)^2 + g(T, n)^2\sum_{A = 1}^4 (X_A u)^2 - g(T, T)(X_5 u)^2 \bigg) \dVol_{\underline{g}}
 \end{split}
\end{equation}
Now, since the hypersurface $\Sigma_t$ is uniformly spacelike, and 
\begin{equation*}
 g(T, n) = -\left(- g^{-1}(\upd t, \upd t) \right)^{-\frac{1}{2}}
\end{equation*}
we have that $g(T, n) < 0$, and so the first integral on the right hand side of \eqref{equation non decay v1} is non-negative. Additionally, outside the ergoregion we have $g(T,T) \leq 0$ and so the second integral on the right hand side is non-negative. We conclude that
\begin{equation}
\label{equation non decay v2}
 \begin{split}
  &\int_{\mathcal{E}_t} g(T,T)(X_5 u)^2 \dVol_{\underline{g}} \gtrsim \int_{\Sigma_0} \bigg( -(T u)^2 - g(T, n)^2\sum_{A = 1}^4 (X_A u)^2 + g(T, T)(X_5 u)^2 \bigg) \dVol_{\underline{g}}
 \end{split}
\end{equation}
and so, if we can find initial data such that the right hand side of \eqref{equation non decay v2} is strictly positive, then the local energy of $u$ in the ergoregion at any time $t \geq 0$ will be bounded away from zero. This corresponds to constructing initial data such that the energy associated with the $T$ vector field is strictly negative.

On $\Sigma_0$, we can freely prescribe both $u$ and $(n u)$, which we do in the following way. Define the submanifold $(\mathcal{U}_1)_t^\epsilon \subset \mathcal{E}_t$ by
\begin{equation*}
 (\mathcal{U}_1)_t^\epsilon := \{ x \in \mathcal{E}_t \ \big| \ z = 0, g(T,T) > \epsilon \}
\end{equation*}
Note that the vector field $N$ is transverse to the submanifold $(\mathcal{U}_1)_t^\epsilon$, indeed, we have
\begin{equation}
 N = (-g^{tt})\left(g^{tz}Z + g^{t\phi}\Phi + g^{t\psi}\Psi \right)
\end{equation}
and $g^{tt} = g^{-1}(\upd t, \upd t) \sim 1$ while
\begin{equation}
 g^{tz} = -\frac{Q_p}{hf} \left( 1 + \frac{Q_1 + Q_2}{r^2 + (\tilde{\gamma}_1 + \tilde{\gamma}_2)^2 \eta)} \right)
\end{equation}
which is also bounded away from zero in the ergoregion $\mathcal{E}_t$. Hence, we can define the region $(\mathcal{U}_2)_t^{(\epsilon, \delta)}$ as the set of points in $\mathcal{M}$ which can be reached from $(\mathcal{U}_1)_t^\epsilon$ by moving along the integral curves of $N$ (in either direction) a distance of at most $\delta$. We can define the parameter $s$ by
\begin{equation}
 \begin{split}
  X_5(s) = 1 \\
  s\big|_{(\mathcal{U}_1)_t^\epsilon} = 0
 \end{split}
\end{equation}
Then, for all $\epsilon > 0$, we see that for all sufficiently small $\delta$, the region $(\mathcal{U}_2)_t^{(\epsilon, \delta)}$ is an open region (as a subset of $\Sigma_t$), strictly contained within the ergoregion $\mathcal{E}_t^{(\epsilon, \delta)}$.

Now, we define a smooth cut-off function $\chi$ such that
\begin{equation}
 \begin{split}
  \chi\big|_{\partial (\mathcal{U}_2)_t^{(\epsilon, \delta)}} &= 0 \\
  \chi &= 0 \quad \text{on } \Sigma_t \setminus (\mathcal{U}_2)_t^{(\epsilon, \delta)} \\
  \chi &= 1 \quad \text{on } (\mathcal{U}_2)_t^{(2\epsilon, \delta/2)}
 \end{split} 
\end{equation}
Note that, for $\epsilon$ sufficiently small, the set $(\mathcal{U}_2)_t^{(2\epsilon, \delta/2)}$ is nonempty and is strictly contained within the set $(\mathcal{U}_2)_t^{(\epsilon, \delta)}$.

Finally, we are ready to define the initial data for the wave equation. We define the initial data
\begin{equation}
\label{equation initial data}
 \begin{split}
  u\big|_{\Sigma_t} &= \chi \cdot \sin(M s) \\
  (nu)\big|_{\Sigma_t} &= (g(T, n))^{-1} N\left( \chi \cdot \sin(M s) \right) \\
  &= (g(T, n))^{-1} \left( (N \chi) \sin(M s) + M\chi \cos(M s) \right)
 \end{split}
\end{equation}
for some large constant $M$ to be determined later. Then we find 
\begin{equation*}
 (Tu)\big|_{\Sigma_t} = 0
\end{equation*}
and the right hand side of equation \eqref{equation non decay v2} is given by
\begin{equation}
 \begin{split}
  &\int_{\Sigma_0} \bigg( -(T u)^2 - g(T, n)^2\sum_{A = 1}^4 (X_A u)^2 + g(T, T)(X_5 u)^2 \bigg) \dVol_{\underline{g}} \\
  &= \int_{\Sigma_0} \bigg( -g(T,n)^2 \sin^2(Ms) \sum_{A = 1}^4 (X_a \chi)^2 \\
  &\phantom{=} + g(T,T)\left( M^2 \chi \sin^2 (Ms) + 2 M \chi (X_5 \chi) \sin(Ms) \cos(Ms) + (X_5(X_5 \chi))\sin^2(Ms) \right) \bigg)\dVol_{\underline{g}} \\
  &\gtrsim \int_{\Sigma_0} \bigg( M^2 \chi \sin^2(M s) - C\left(|\partial \partial \chi| + |\partial \chi|(1+ M) \right) \bigg) \dVol_{\underline{g}}
 \end{split}
\end{equation}
for some large constant $C$. Since $\chi$ is smooth, if we take $M$ sufficiently large then this is positive. Moreover, note that the initial data defined in \eqref{equation initial data} is smooth and compactly supported, proving the theorem.

\end{proof}

\section{Slow decay of waves on two charge microstate geometries}
\label{section slow decay}

Unlike in the three charge case, two charge microstate geometries do not posses an ergoregion, but only a submanifold $\mathcal{S}$ such that $T$ is null on $\mathcal{S}$. As such, in contrast to the three charge case, we cannot use the construction of Friedman to produce waves whose local energy does not decay. Instead, in this section we will construct \emph{quasimodes}: smooth, compactly supported \emph{approximate} solutions to the wave equation which do not decay. Since these approximate solutions solve the wave equation with a very small error, they can be used to contradict any uniform decay statement with a sufficiently fast decay rate.

Our approach in this section closely follows the approach first used in \cite{Holzegel2014}: we use the separability of the wave equation to construct mode solutions in some bounded region, where we artificially impose (Dirichlet) boundary conditions on the boundaries (which will be surfaces of constant $r$). We then continue these functions in some smooth way, so that they vanish in a slightly larger region. The idea is that, with a judicious placement of the boundaries, these functions will be very close to solutions of the wave equation.

There are, however, several major differences between the present work and that of \cite{Holzegel2014}. First, we are aiming to show \emph{slower than logarithmic decay}, which means that the discrepancy between the quasimodes we construct and actual solutions to the wave equation must be \emph{super exponentially} suppressed. This is possible because of the structure of the potential, which has a local minimum at (in the high angular frequency limit) exactly zero. To exploit this, we must construct mode solutions with time frequencies which are \emph{uniformly bounded} in terms of the angular frequency. This contrasts with the approach of \cite{Holzegel2014}, in which the time frequency grows in proportion to the angular frequency. We must also then adjust the position of the boundaries where we impose Dirichlet boundary conditions: we choose these to be at some value of $r$ which grows with the frequency, rather than at some fixed value of $r$ as in \cite{Holzegel2014}. The combination of these two features will allow us to obtain the desired super-exponentially small errors.

We also encounter an additional technical difficulty: at each fixed frequency, the effective potential (after separating variables) diverges at $r = 0$. This must be carefully handled in order to ensure that we retain the desired control over the eigenvalues, which correspond to the time frequencies.

Finally, we note that, as in \cite{Holzegel2014}, to construct the quasimodes we need to solve a nonlinear eigenvalue problem. As in \cite{Holzegel2014}, we can do this by first solving a related \emph{linear} eigenvalue problem, and then using perturbative arguments. In \cite{Holzegel2014} certain monotonicity properties of the potential were used, along with a continuity argument, to handle this perturbative step. In the present case, the relative smallness of the eigenvalues we are considering actually makes things easier: we find that we are able to appeal directly to the implicit function theorem to obtain a solution to the nonlinear eigenvalue problem, given a solution to the corresponding linear problem, at least for sufficiently large angular momentum.

\subsection{Construction of modes with bounded frequency}

This section is devoted to the proof of the following theorem:

\begin{theorem}[Existence of quasimodes with bounded frequencies]
\label{theorem existence of modes}
 On any two charge microstate geometry, there exist ``mode solutions'': regular solutions to the wave equation $\Box_g u = 0$, where Dirichlet boundary conditions are imposed at $r = R R_z \mu_\ell$ (in particular, these solutions are regular at $r = 0$), where $R$ is some suitably large constant to be fixed later. These mode solutions are of the form
 \begin{equation}
  u = e^{-i\omega t + i\lambda z + i m_\psi \psi + i m_\phi \phi} u_r(r) u_\theta(\theta)
 \end{equation}
 for some $\omega \in \mathbb{R}$ and integers $(\lambda / R_z)$, $m_\psi$ and $m_\phi$. $\mu_\ell$ is defined to be the $\ell$-th eigenvalue associated with the eigenvalue equation for $u_\theta$ (see equation \eqref{equation defining mu ell} below), and we have
 \begin{equation}
  \ell = \mu_\ell - 1 + \mathcal{O}\left((\mu_\ell)^{-1}\right)
 \end{equation}
 Finally, and crucially, these mode solutions can be chosen such that the frequency $\omega$ \emph{does not scale with} $\ell$. That is, if we choose some constant $\omega_{\text{max}}$ sufficiently large, then for all sufficiently large $\ell$ there exists a mode solution with
 \begin{equation}
  \omega \leq \omega_{\text{max}}
 \end{equation}
\end{theorem}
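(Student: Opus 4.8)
The plan is to separate variables in the wave equation on the two charge metric~(\ref{equation two charge metric}), reduce the problem to a one dimensional radial Sturm--Liouville problem with a large parameter, and then construct quasimodes localised near the evanescent ergosurface at $r=0$, $\theta=\pi/2$. First I would substitute the separated ansatz $u = e^{-i\omega t + i\lambda z + i m_\psi \psi + i m_\phi \phi} u_r(r) u_\theta(\theta)$ into $\Box_g u = 0$, using the explicit inverse metric and the volume form $\dVol_g \sim r h f \sin\theta\cos\theta$ from the preliminary section. Because the two charge geometry possesses the ``hidden'' symmetry mentioned earlier, the wave operator separates cleanly: the $\theta$-dependence gives an angular eigenvalue problem of confluent-Heun / associated-Legendre type on $[0,\pi/2]$ with Dirichlet-type regularity conditions at the endpoints, producing a discrete sequence of eigenvalues $\mu_\ell$ with $\mu_\ell \to \infty$, and a routine WKB/Weyl-law count yields $\ell = \mu_\ell - 1 + \mathcal{O}(\mu_\ell^{-1})$. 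The remaining radial equation is then an ODE for $u_r(r)$ on $[0,R]$ depending on $\omega$, $\lambda$, $m_\psi$, $m_\phi$ and $\mu_\ell$, which after a Liouville transformation to remove the first-order term takes Schr\"odinger form $-u'' + V_{\mu_\ell}(r;\omega)u = 0$ with an effective potential $V_{\mu_\ell}$.

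The crucial structural point, emphasised in the introduction, is the behaviour of $V_{\mu_\ell}$: the separation constants enter so that $V_{\mu_\ell}$ has a local minimum near $r=0$ at which, for the relevant branch, the potential \emph{vanishes to leading order} in $\mu_\ell$ (this reflects the existence of zero-energy stably trapped null geodesics ruling $\mathcal{S}$), while asymptotically $V_{\mu_\ell} \sim \mu_\ell^2 r^{-2}$ (asymptotic flatness). I would make this explicit by expanding the separated equation near $r=0$, identifying the term that plays the role of $\omega^2$ and showing it need not grow with $\ell$. Concretely, near the minimum the equation becomes, to leading order in $\mu_\ell$, a harmonic-oscillator-type equation whose lowest eigenvalue is $\mathcal{O}(1)$ rather than $\mathcal{O}(\mu_\ell)$ or $\mathcal{O}(\mu_\ell^2)$; this is exactly what forces $\omega$ to stay bounded. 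I would then invoke the standard Sturm--Liouville theory for the self-adjoint radial problem on the compact interval $[0,R]$ (regularity at $r=0$, Dirichlet at $r=R$): for each admissible choice of the integer parameters $\lambda/R_z, m_\psi, m_\phi$ and each $\mu_\ell$, there is a discrete spectrum of $\omega^2$; choosing the ground state (or a fixed low-lying state) of this radial operator and combining with the leading-order analysis near the minimum gives, for all $\mu_\ell$ large enough, a genuine mode with $\omega^2 \le \omega_{\max}^2$ for a fixed $\omega_{\max}$. One must also check that this $\omega^2$ is positive (so $\omega$ is real); this follows because the potential $V_{\mu_\ell}$, after subtracting the $\omega^2$ piece, is bounded below near the well and the well itself is shallow, so the lowest admissible $\omega^2$ is strictly positive and bounded.

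The main obstacle I anticipate is controlling the effective radial potential uniformly in $\mu_\ell$ with enough precision to pin down the \emph{scaling} of $\omega$ — in particular, verifying rigorously that the minimum value of the regular part of $V_{\mu_\ell}$ is $o(\mu_\ell)$ (ideally $\mathcal{O}(1)$ after the right rescaling) and that the well is genuine (nondegenerate minimum) rather than a degenerate inflection, so that the matched-asymptotics/WKB picture in figure~\ref{figure} actually holds. This requires carefully tracking how the angular eigenvalue $\mu_\ell$, the Kaluza--Klein momentum $\lambda$ and the Hopf momenta $m_\psi, m_\phi$ enter the coefficients of the radial ODE, handling the coordinate singularities at $r=0$ and $\theta = 0, \pi/2$ (where $hf \to \sqrt{Q_1 Q_2}$ and the Kaluza--Klein circle pinches), and confirming that one may take $\lambda = 0$ so the quasimode has trivial dependence on $z$, as claimed. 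Once the potential is under control, the existence of bounded-frequency modes is essentially a statement about the bottom of the spectrum of a sequence of one dimensional Schr\"odinger operators whose potentials are uniformly bounded on a fixed interval near $r=0$, which is standard; the genuinely delicate work is all in establishing that uniform bound, i.e.\ the $o(\mu_\ell)$ behaviour of the well depth.
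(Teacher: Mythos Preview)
Your outline captures the overall strategy correctly---separate, reduce to a radial Schr\"odinger problem, and exploit the fact that the well depth is $\mathcal{O}(1)$ rather than $\mathcal{O}(\mu_\ell^2)$---but there is a genuine structural gap. You treat the radial problem as a \emph{linear} Sturm--Liouville problem in $\omega^2$: ``for each admissible choice of the integer parameters \ldots and each $\mu_\ell$, there is a discrete spectrum of $\omega^2$''. In fact the problem is \emph{nonlinear} in $\omega$ in two distinct ways. First, the angular eigenvalue $\mu_\ell$ itself depends on $\omega$ (the angular potential contains a term $(\tilde\lambda^2-\tilde\omega^2)\cos^2\theta$), so you cannot ``fix $\mu_\ell$ and then solve for $\omega$''. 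Second, the radial potential contains $\omega$ not only through a single $\omega^2$ term but also through a cross term $\tilde\omega\,\ell$ (see the coefficient $B$ in the paper), so even with $\mu_\ell$ frozen the problem is not of the form $Lu=\omega^2 u$. Consequently ``standard Sturm--Liouville theory'' does not give you a spectrum in $\omega^2$, and your ground-state argument, as stated, does not go through.

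The paper resolves this by a homotopy/continuation argument: it introduces a parameter $b\in[0,1]$ which interpolates between a genuinely linear problem at $b=0$ (where $\omega$ is removed from both the angular potential and the radial potential, and where $\mu_\ell^2=\ell(\ell+2)$ exactly) and the true nonlinear problem at $b=1$. At $b=0$ a min--max comparison on a carefully chosen subinterval (not a harmonic-oscillator approximation) gives an eigenvalue $\tilde\omega\le\tilde\omega_{\max}$ with $\tilde\omega_{\max}$ independent of $\ell$. One then shows $\partial\Lambda_N/\partial\tilde\omega\sim -2s^{-2}\mu_\ell$ is nonzero (this step itself requires an Agmon-type exponential localisation estimate to control the dangerous $\tilde\omega\,\ell$ term), invokes the implicit function theorem to continue in $b$, and bootstraps both the bound $\tilde\omega\le 2\tilde\omega_{\max}$ and the relation $\ell=\mu_\ell-1+\mathcal{O}(\mu_\ell^{-1})$ all the way to $b=1$. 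Your WKB/Weyl-law remark for the angular relation is therefore also not how the argument runs: that asymptotic is obtained exactly at $b=0$ and then propagated via $|\partial_b\mu_\ell|\lesssim\mu_\ell^{-1}$, which again needs the continuation machinery. Finally, note the Dirichlet boundary is placed at $y=R\mu_\ell$, not at a fixed $R$; this $\ell$-dependent placement is what later makes the barrier width scale with $\ell$ and yields the super-exponential error.
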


The proof of theorem \ref{theorem existence of modes} is rather technical, so we shall first outline the structure of the proof. We begin by establishing the existence of mode solutions with the required properties for a related \emph{linear} eigenvalue problem. We then study a family of eigenvalue problems, labeled by a parameter $b$, which continuously transition between the linear problem and the true eigenvalue problem we wish to study. In particular, $b = 0$ labels the linear eigenvalue problem, and $b = 1$ labels the related nonlinear problem. While the desired properties of the mode solutions hold, we are able to use the implicit function theorem to show that mode solutions exist at slightly larger values of $b$ than those we have already obtained. We are then able to show that the desired properties of the mode solutions also extend to larger values of $b$, as long as the mode solutions exist. We are thus able to bring $b$ all the way up to $b=1$, establishing the desired result.

\subsubsection{Deriving the eigenvalue problem}

As mentioned above, we shall search for solutions of the wave equation on two charge microstate geometries which are of the form
\begin{equation}
 u = e^{-i\omega t + i\lambda z + i m_\psi \psi + i m_\phi \phi} u_r(r) u_\theta(\theta)
\end{equation}
We are trying to find solutions which are localised near the surface on which null geodesics are stably trapped (i.e.\ the submanifold $\mathcal{S}$), which is defined by $r = 0$ and $\theta = \frac{\pi}{2}$. The corresponding null geodesics have vanishing momentum in the $\psi$ direction, and so we set
\begin{equation}
 m_\psi = 0
\end{equation}

Now, following \cite{Eperon2016}, we shall work with dimensionless variables
\begin{equation}
\label{equation dimensionless variables}
 \begin{split}
  y &:= \frac{r}{R_z} \\
  \tilde{\omega} &:= \omega R_z \\
  \tilde{\lambda} &:= \lambda R_z \\
  s &:= \frac{\sqrt{Q_1 Q_2}}{(R_z)^2} \\
 \end{split}
\end{equation}

We also define $(\mu_\ell)^2$ to be the $\ell$-th eigenvalue associated with the $u_\theta$ equation. To be specific, we let $u_{\theta, \ell}$ be the eigenfunction associated with the $\ell$-th eigenvalue of the $\theta$ equation, i.e.\
\begin{equation}
\label{equation defining mu ell}
 \begin{split}
  -\frac{1}{\sin 2\theta} \frac{\upd}{\upd \theta} \left( \sin 2\theta \frac{\upd u_{\theta, \ell}}{\upd \theta} \right) + \left( \frac{(m_\phi)^2}{\sin^2 \theta} + (\tilde{\lambda}^2 - \tilde{\omega}^2)\frac{a^2}{(R_z)^2} \cos^2 \theta \right)u_{\theta, \ell} = (\mu_\ell)^2 u_{\theta, \ell}
 \end{split}
\end{equation}

Additionally, and also motivated by the geometric optics approximation, we shall set
\begin{equation}
 m_\phi = -\ell
\end{equation}

In terms of these variables, the equation satisfied by the radial wavefunction $u_r(r)$ can be written in self adjoint form as
\begin{equation}
  -y(y^2 + s^2) \frac{\upd}{\upd y}\left( y(y^2 + s^2) \frac{d u_r}{dy} \right) + V_1(y; \ell, \omega)u_r = 0
\end{equation}
where
\begin{equation}
 \begin{split}
  V_1(y; \mu_\ell, \tilde{\omega}, \tilde{\lambda}) := - (\tilde{\omega}^2 - \tilde{\lambda}^2) y^6 + A y^4 + B y^2 + D
 \end{split}
\end{equation}
and where the coefficients are as follows:
\begin{equation}
 \begin{split}
  A &:= (\mu_\ell)^2 - (\tilde{\omega}^2 - \tilde{\lambda}^2)\left( s^2 + \frac{Q_1 + Q_2}{(R_z)^2} \right) \\
  B &:= s^2\left( (\mu_\ell)^2 - \ell^2 \right) - 2s^2 \tilde{\omega} \ell - s^2(\tilde{\omega}^2 - \tilde{\lambda}^2)\left(1 + \frac{Q_1 + Q_2}{(R_z)^2} \right) \\
  D &:= s^4 \tilde{\lambda}^2
 \end{split}
\end{equation}

We define the variable $w$ by
\begin{equation}
  w := \left(y(y^2 + s^2)\right)^{\frac{1}{2}} u_r
\end{equation}
then we find that $w$ satisfies
\begin{equation}
\label{equation V_2}
  - \frac{\upd^2}{\upd y^2}w + V_2(r; \mu_\ell, \tilde{\omega}, \tilde{\lambda})w = 0
\end{equation}
with $V_2$ related to $V_1$ by
\begin{equation}
  V_2(y; \mu_\ell, \tilde{\omega}, \tilde{\lambda}) := \frac{V_1(y; \mu_\ell, \tilde{\omega}, \tilde{\lambda})}{y^2(y^2 + s^2)^2} + \frac{ 3y^4 + 6s^2 y^2 - s^4}{4 y^2 (y^2 + s^2)^2}
\end{equation}
Note that, as $y \rightarrow 0$, we have
\begin{equation*}
 V_2 \sim \frac{ 4 \tilde{\lambda}^2 - 1}{4 y^2}
\end{equation*}
Now, we can re-write equation \eqref{equation V_2} in the form
\begin{equation}
\label{equation radial eigenvalue V3}
 -\frac{\upd^2}{\upd y^2}w + V_3 w = 2 s^{-2}\ell \tilde{\omega} w
\end{equation}
where
\begin{equation}
 V_3(y; \mu_\ell, \tilde{\omega}, \tilde{\lambda}) := V_2 + 2s^{-2}\ell \tilde{\omega} 
\end{equation}
Note that, as $y \rightarrow \infty$ we have
\begin{equation}
 V_3(y; \mu_\ell, \tilde{\omega}, \tilde{\lambda}) \rightarrow - (\tilde{\omega}^2 - \tilde{\lambda}^2) + 2s^{-2} \tilde{\omega} \ell
\end{equation}

The above is summarised in the following:
\begin{proposition}
	Let $u$ be of the form
	\begin{equation}
		u
		=
		\exp\left( 
			-i\frac{\tilde{\omega}}{R_z} t
			+ i \frac{\tilde{\lambda}}{R_z} z
			- i \ell \phi
		\right) \left( y(y^2 + s^2) \right)^{-\frac{1}{2}} w(y) u_{\theta,\ell}(\theta)
	\end{equation}
	where the dimensionless variables are defined in equation \eqref{equation dimensionless variables}. Suppose that $u_{\theta,\ell}$ solves the eigenvalue problem \eqref{equation defining mu ell}, and suppose that $w$ satisfies equation \eqref{equation radial eigenvalue V3}.
	
	Then $u$ solves the wave equation $\Box_g u = 0$.
\end{proposition}

\subsubsection{The operator \texorpdfstring{$\mathcal{H}$}{H} and its properties}

Let us define the Hermitian operator $\mathcal{H}$ by
\begin{equation}
 \mathcal{H}(w) := -\frac{\upd^2}{\upd y^2}w + V_3(y; \mu_\ell, \tilde{\omega}, \tilde{\lambda}) w
\end{equation}
for $w$ in some suitable function space. Our aim is to show that $\mathcal{H}$ admits eigenvalues which don't scale with $\ell$, but as a preliminary step we need to establish that $\mathcal{H}$, defined to act on a suitable function space, has compact resolvent and trivial kernel. Since the effective potential $V_3$ diverges as $y \rightarrow 0$, this is not obvious.

\begin{proposition}
\label{proposition H maps function spaces and has compact resolvent}

The for any $R > 0$, the operator $\mathcal{H}$ is a linear map
\begin{equation*}
	\mathcal{H} \ : \ H^1_0([0, R \mu_\ell]) \rightarrow H^{-1}_0([0, R\mu_\ell])
\end{equation*}
with compact resolvent.

\end{proposition}

\begin{proof}
	Recall that, since functions in $H^1_0[0, R \mu_\ell]$ are limits of sequences of compactly supported functions in $(0, R\mu_\ell)$, the operator $\mathcal{H}$ is Hermitian. Moreover, the image of $H^1_0([0, R \mu_\ell])$ does in fact lie in the space $H^{-1}_0([0, R\mu_\ell])$, as we shall show below. Let $w_1 \in H_0^1([0, R\mu_\ell])$, then we have
	\begin{equation}
		|| \mathcal{H}w_1 ||^2_{H_0^{-1}} = \sup_{w_2 \in H^1_0} \frac{ \langle w_2, \mathcal{H} w_1 \rangle^2}{||w_2||^2_{H^1}}
	\end{equation}
	where $\langle \cdot , \cdot \rangle$ is the standard $L^2$ inner product on the interval $[0, R\mu_\ell]$, and we have omitted the interval on which the norms are defined for visual clarity. But we have
	\begin{equation}
	\begin{split}
		\langle w_2, \mathcal{H} w_1 \rangle^2 
		&= 
		\left( \int_0^{R \mu_\ell} \left( \left( \frac{\partial w_1}{\partial y} \right) \left( \frac{\partial w_2}{\partial y}\right) + V_3 w_1 w_2 \right) \upd y \right)^2 		\\
		&\lesssim 
		||w_1||^2_{H^1} ||w_2||^2_{H^1} + \left(\int_0^{R \mu_\ell} \frac{1}{y^2} |w_1| |w_2| \upd y \right)^2
	\end{split}	
	\end{equation}
	where we have made use of the Cauchy-Schwartz inequality, and the implicit constant in the above inequality is allowed to depend on $\mu_\ell$, $\tilde{\lambda}$ in addition to the metric parameters. Now, using Cauchy-Schwartz again, we have
	\begin{equation*}	
	\begin{split}
		\left(\int_0^{R \mu_\ell} \frac{1}{y^2} w_1 w_2 \upd y \right)^2 
		\lesssim 
		\left(\int_0^{R \mu_\ell} \frac{1}{y^2} (w_1)^2\upd y \right)\left(\int_0^{R \mu_\ell} \frac{1}{y^2} (w_2)^2 \upd y \right)
	\end{split}
	\end{equation*}
	and, for $w \in H^1_0([0, R \mu_\ell])$ we can integrate by parts to write
	\begin{equation*}
	\begin{split}
		\int_0^{R\mu_\ell} \frac{1}{y^2} w^2 \upd y &= \int_0^{R\mu_\ell} -\partial_y \left( \frac{1}{y} \right) w^2 \upd y 
		\\
		&=
		\int_0^{R\mu_\ell} \frac{2}{y} w (\partial_y w) \upd y 
		\\
		&\leq \int_0^{R \mu_\ell} \left( \frac{1}{2y^2} w^2 + 2 (\partial_y w)^2 \right)\upd y
	\end{split}
	\end{equation*}
	and so, for $w \in H^1_0([0, R\mu_\ell])$ we have
	\begin{equation}
		\int_0^{R\mu_\ell} \frac{1}{y^2} w^2 \upd y \leq 4\int_0^{R\mu_\ell} (\partial_y w)^2 \upd y \lesssim ||w||_{H^1([0, R\mu_\ell])}^2
	\end{equation}
	all of which means that
	\begin{equation*}
		|| \mathcal{H}w_1 ||^2_{H_0^{-1}} \lesssim ||w_1||^2_{H^1}
	\end{equation*}
	meaning that the image of $H^1_0([0, R\mu_\ell])$ by the operator $\mathcal{H}$ does lie in the space $H^{-1}_0([0, R\mu_\ell])$ as promised. In fact, the same calculation also shows that the operator $\mathcal{H}$ has compact resolvent.
\end{proof}

We also want to show that the operator $\mathcal{H}$ is positive, in the sense that, for any $w \in H^1_0([0, R\mu_\ell])$ we have
\begin{equation*}
 \langle w, \mathcal{H} w \rangle > 0
\end{equation*}
In fact, we can only do this for large values of $\mu_\ell$, assuming also a lower bound on $\mu_\ell$ in terms of $\tilde{\omega}$ and $\tilde{\lambda}$ (recall that $\mu_\ell$ is an eigenvalue of the problem \eqref{equation defining mu ell}, in which $\tilde{\omega}$ and $\tilde{\lambda}$ appear):

\begin{proposition}
	Suppose that there exist values of $\mu_\ell$, eigenvalues of the problem \eqref{equation defining mu ell}, satisfying the bound:
	\begin{equation}
	\label{equation bootstrap 1}
		\left( 
			(\mu_\ell)^2 - \ell^2 
		\right) 
		- (\tilde{\omega}^2 - \tilde{\lambda}^2)\left(1 + \frac{Q_1 + Q_2}{(R_z)^2} \right) 
		> c > 0
	\end{equation}
	where $c$ is some fixed positive constant (independent of all of the other variables).
	
	Then, for such values of $\mu_\ell$, we have $\langle w, \mathcal{H} w \rangle > c|| w ||_{L^2([0, R\mu_\ell])}$ (recall that the operator $\mathcal{H}$ depends on $\mu_\ell$ through the form of the potential $V_3$). In particular, $\langle w, \mathcal{H} w \rangle > 0$ for all nonzero $w \in H_0^1([0, R\mu_\ell])$, i.e.\ $\mathcal{H}$ is a positive operator.
\end{proposition}

Note that we shall need to justify the bound \eqref{equation bootstrap 1} later, i.e.\ we will need to construct these eigenvalues of the problem \eqref{equation defining mu ell}.

\begin{proof}

	To prove that $\mathcal{H}$ is positive, note that for sufficiently large $\mu_\ell$ (assuming the bound \eqref{equation bootstrap 1} we have
	\begin{equation*}
		\langle w , \mathcal{H} w \rangle 
		> 
		\int_0^{R\mu_\ell} \left( \left(\frac{\upd w}{\upd y} \right)^2 - \frac{1}{4 y^2} w^2 + c w^2 \right) \upd y
	\end{equation*}
	but we have already seen that, for $w \in H^1_0[0, R\mu_\ell]$, 
	\begin{equation*}
		\int_0^{R\mu_\ell} \frac{1}{y^2} w^2 \upd y \leq 4\int_0^{R\mu_\ell} (\partial_y w)^2 \upd y
	\end{equation*}
	so $\mathcal{H}$ is positive. In particular, this implies the lower bound: for all sufficiently large $\mu_\ell$ obeying the bound \eqref{equation bootstrap 1},  $\tilde{\omega} \geq \tilde{\omega}_{\text{min}}$ for some $\tilde{\omega}_{\text{min}} > 0$. Note that, if $\tilde{\omega}_{\text{min}}$ is chosen to be the smallest eigenvalue of $\mathcal{H}$, then, as we will see later, $\tilde{\omega}_{\text{min}}$ can be bounded above \emph{uniformly} in $\ell$. We will not need need a lower bound for $\tilde{\omega}_{\text{min}}$.
	
\end{proof}

Summarizing the above calculations we see that, for values of $\mu_\ell$ obeying the bound \eqref{equation bootstrap 1}, $\mathcal{H}$ is a positive Hermitian operator with compact resolvent,  from $H^1_0([0, R\mu_\ell])$ to $H^{-1}_0([0, R\mu_\ell])$. The space $H^1_0([0, R\mu_\ell])$ therefore admits a basis of eigenfunctions of $\mathcal{H}$, whose associated eigenvalues are positive and can be listed in ascending order.

\subsubsection{The linear eigenvalue problem}

Now that we have established these basic properties of the operator $\mathcal{H}$, we wish to bound the associated eigenvalues. Specifically, we would like to solve the following problem:
\begin{equation}
\label{equation nonlinear eigenvalue}
 \begin{split}
  -\frac{\upd^2}{\upd y^2} w + V_3(y; \mu_\ell, \tilde{\omega}, \tilde{\lambda}) w &= 2s^{-2}\ell \tilde{\omega}w \\
  w\big|_{y = 0} = w\big|_{y = R\mu_\ell} &= 0
 \end{split}
\end{equation}
Note that, at fixed values of $\mu_\ell$ and $\tilde{\lambda}$ this can be regarded as a \emph{nonlinear} eigenvalue problem: the unknown $\tilde{\omega}$ plays the role of an eigenvalue (in fact, the eigenvalue is a constant multiple of $\tilde{\omega}$), but the effective potential $V_3$ \emph{also} depends on the value $\tilde{\omega}$.

Before tackling this, we shall first consider the related \emph{linear} eigenvalue problem, obtained by replacing $\tilde{\omega}$ when it appears in the effective potential $V_3$ by some fixed constant $\tilde{\omega}_0$, i.e.\
\begin{equation}
\label{equation linear eigenvalue}
 \begin{split}
  -\frac{\upd^2}{\upd y^2} w + V_3(y; \mu_\ell, \tilde{\omega}_0, \tilde{\lambda}) w &= 2s^{-2}\ell \tilde{\omega}w \\
  w\big|_{y = 0} = w\big|_{y = R\mu_\ell} &= 0
 \end{split}
\end{equation}

We will also forget the dependence of $\mu_\ell$ on $\omega$ (through equation \eqref{equation defining mu ell}) and, for the purpose of solving this linear equation, treat $\mu_\ell$ as an independent constant.

Regarding this linear eigenvalue problem, we will establish the following result:
\begin{proposition}
\label{proposition linear eigenvalue problem}
	Suppose that, for sufficiently large $\ell$, we have
	\begin{equation*}
	 \label{equation bootstrap 2}
		\ell = \mu_\ell - 1 + \mathcal{O}\left( (\mu_\ell)^{-1} \right)
	\end{equation*}
	(note that, if this is true, then the bound \eqref{equation bootstrap 1} with $\tilde{\omega}$ replaced by $\tilde{\omega}_0$ will certainly be satisfied for sufficiently large $\ell$).
	
	Then there exists $\tilde{\omega}_{(\text{max})} = \tilde{\omega}_{(\text{max})}(s, \tilde{\lambda}) > 0$, which is \emph{independent of $\mu_\ell$}, such that for all sufficiently large $\ell$ the linear eigenvalue problem \eqref{equation linear eigenvalue} has at least one solution with an associated eigenvalue $2s^{-2} \ell \tilde{\omega}$, where
	\begin{equation*}
		4 \leq \tilde{\omega} \leq \tilde{\omega}_{\text{max}}
	\end{equation*}
	
\end{proposition}

\begin{proof}

In a similar way to \cite{Holzegel2014} and \cite{Keir2016}, we will place a lower bound on the number of eigenvalues below some threshold value\footnote{However, unlike \cite{Holzegel2014} and \cite{Keir2016}, we do not also prove an upper bound on the number of eigenvalues. Note that this upper bound was not necessary for the construction of quasimodes or the bounds on the uniform decay rates.}. Note that we will look for eigenvalues which scale like $\ell$ rather than $\ell^2$, which was the rate used in the two papers cited above. Assuming that $\mu_\ell \gg 1$, we split the region $y \leq R\mu_\ell $ into three sub-regions: 
\begin{equation}
 \begin{split}
  R_1 &:= \left\{ 0 \leq y < \frac{1}{2}(\mu_\ell)^{-\frac{1}{2}} \right\} \\
  R_2 &:= \left\{ \frac{1}{2}(\mu_\ell)^{-\frac{1}{2}} \leq y < (\mu_\ell)^{-\frac{1}{2}} \right\} \\
  R_3 &:= \left\{ (\mu_\ell)^{-\frac{1}{2}} \leq y \leq R\mu_\ell \right\}
 \end{split}
\end{equation}

Let $N(\tilde{\omega}_{\text{max}})$ be the number of eigenvalues $\tilde{\omega}$ below the threshold $\left(2s^{-2}\mu_\ell\tilde{\omega}_{\text{max}}\right)$ for the linear eigenvalue problem \eqref{equation linear eigenvalue}. Let $N_{R_2}(\tilde{\omega}_{\text{max}})$ be the number of eigenvalues below the same threshold for the related Dirichlet problem on the region $R_2$, obtained by replacing the effective potential $V_3$ by its maximum value in the region $R_2$, i.e.\ we search for solutions to the problem
\begin{equation}
\label{equation linear eigenvalue reduced}
  \begin{split}
   -\frac{\upd^2}{\upd y^2} w + V_{\text{max}}(\mu_\ell, \tilde{\omega}_0, \tilde{\lambda}) w &= 2s^{-2}\ell \tilde{\omega}w \\
   w\big|_{\partial R_2} & = 0
  \end{split}
\end{equation}
where $V_{\text{max}}$ is the maximum value of $V_3$ in the region $R_2$:
\begin{equation}
 V_{\text{max}}( \mu_\ell, \tilde{\omega}_0, \tilde{\lambda}) := \sup_{r\in R_2} V_3(y; \mu_\ell, \tilde{\omega}_0, \tilde{\lambda})
\end{equation}
and where we are looking for eigenvalues satisfying the bound
\begin{equation}
 \tilde{\omega} \leq \tilde{\omega}_{\text{max}}
\end{equation}

We claim that $N_{R_2}(\tilde{\omega}_{\text{max}}) \leq N(\tilde{\omega}_{\text{max}})$. To see this, consider the variational characterisation of the eigenvalues. For the original problem \eqref{equation linear eigenvalue}, using the minimax principle, the $n$-th eigenvalue can be characterised as
\begin{equation}
 2s^{-2}\ell\omega_n = \inf_{ \substack{
    \{f_1, f_2, \ldots, f_n\} \\
    f_i \in H^1_{0}([0, R \mu_\ell]) \quad \forall 0 \leq i \leq n \\
    ||f_i||_{L^2([0, R \mu_\ell])} \neq 0 \quad \forall 0 \leq i \leq n \\
    \langle f_i \, , \, f_j \rangle = 0 \quad \forall i \neq j }}
    \max_{i \leq n}
    \frac{ \int_0^{R \mu_\ell} (f_i)(\mathcal{H} f_i) \upd y }{ ||f_i||^2_{L^2([0, R\mu_\ell])} }
\end{equation}
where $\langle \cdot , \cdot \rangle$ represents the standard $L^2$ inner product on the interval $[0, R\mu_\ell]$. In other words, to find the $n$-th eigenvalue we need to consider sets of $n$ mutually orthogonal (in $L^2$) functions, find the largest Rayleigh quotient among them, and then minimize this among all such sets of functions.

Similarly, the $n$-th eigenvalue for the problem \eqref{equation linear eigenvalue reduced} on $R_2$, which we label as $\tilde{\omega}^+_n$, can be characterised (using the min-max theorem and the fact that the resolvent of $\mathcal{H}$ is compact) as
\begin{equation}
 2s^{-2}\ell\tilde{\omega}^+_n = \inf_{ \substack{
    \{f_1, f_2, \ldots, f_n\} \\
    f_i \in H^1_0(R_2) \quad \forall 0 \leq i \leq n \\
    ||f_i||_{L^2(R_2)} \neq 0 \quad \forall 0 \leq i \leq n \\
    \langle f_i \, , \, f_j \rangle = 0 \quad \forall i \neq j }}
    \max_{i \leq n}
    \frac{ \int_0^{R \mu_\ell} (f_i)(\mathcal{H} f_i) \upd y }{ ||f_i||^2_{L^2(R_2)} }
\end{equation}

It is clear that $\tilde{\omega}^+_n \geq \tilde{\omega}_n$, both because $V_{\text{max}} \geq V_3$ in the region in question, and because to find $\tilde{\omega}_n$ the $f_i$ are allowed to range over a strictly larger function space, that is, sets of mutually orthogonal functions, all lying in $H^1_0([0, R\mu_\ell])$ rather than sets of mutually orthogonal functions that lie in $H^1_0(R_2)$. Thus $N_{R_2}(\tilde{\omega}_{\text{max}}) \leq N(\tilde{\omega}_{\text{max}})$. In particular, since we are only interested in a lower bound on the number of eigenvalues, this allows us to safely neglect regions 1 and 3 and focus only on the central region.

Now, in the region $R_2$, for fixed values of $\tilde{\omega}_0$ and $\tilde{\lambda}$, using the explicit form of the potential $V_3$ together with the bounds $\frac{1}{2} \mu_\ell^{-\frac{1}{2}} \leq y \leq \mu_\ell^{-\frac{1}{2}}$, that
\begin{equation}
\label{equation potential in region y sim muell -1/2}
 \begin{split}
  V_3(\mu_\ell, \tilde{\omega}_0, \tilde{\lambda}) &= s^{-2}\left( (\mu_\ell)^2 - \ell^2 \right) + \left( s^{-4} + 4\tilde{\lambda}^2 \right) \mu_\ell + \mathcal{O}(1) \\
  &= \left( s^{-4} + 2s^{-2} + 4\tilde{\lambda}^2 \right) \mu_\ell + \mathcal{O}(1)
 \end{split}
\end{equation}
and so, for sufficiently large $\mu_\ell$ we have the bound
\begin{equation}
\label{equation bound V3 in R2}
 V_3(\mu_\ell, \tilde{\omega}_0, \tilde{\lambda}) \leq 2\left( s^{-4} + 2s^{-2} + 4\tilde{\lambda}^2 \right) \ell
\end{equation}

Now, we can explicitly solve the problem \eqref{equation linear eigenvalue reduced} since the effective potential is just a fixed constant. Using the bound \eqref{equation bound V3 in R2} for $V_3$ in the region $R_2$ we find that the $n$-th eigenvalue satisfies
\begin{equation}
 \tilde{\omega}_n^+ \leq \pi^2 s^2 n^2 + \left(s^{-2} + 2 + 4 \tilde{\lambda}^2 s^2 \right)
\end{equation}
and so
\begin{equation}
\label{equation number of eigenvalues lower bound}
  N(\tilde{\omega}_{\text{max}}) \geq N_{R_2}(\tilde{\omega}_{\text{max}}) \geq 
  \left\lfloor \frac{1}{s\pi \sqrt{2}}\sqrt{ \tilde{\omega}_{\text{max}} - s^{-2} - 2 - 4\tilde{\lambda} s^2 } \right\rfloor
\end{equation}
Thus, if we take $\tilde{\omega}_{\text{max}}$ sufficiently large, depending on $s$ and $\tilde{\lambda}$ but \emph{independent} of $\mu_\ell$, then there is at least one eigenvalue $\tilde{\omega}$ for the linear problem \eqref{equation linear eigenvalue} satisfying $\tilde{\omega} \leq \tilde{\omega}_{\text{max}}$. Moreover, we see that the number of such eigenvalues grows at least as fast as $\sqrt{\tilde{\omega}_{\text{max}}}$ for large values of $\tilde{\omega}_{\text{max}}$.

We also want to show that an eigenvalue $\tilde{\omega}$ can be found satisfying $\tilde{\omega} > 4$. This follows from a very similar argument, although in this case we should count the number of eigenvalues to the associated problem with \emph{Neumann} boundary conditions, and where the potential is replaced by its \emph{minimum} value in the interval $\frac{1}{2} \mu_\ell^{-\frac{1}{2}} \leq y \leq \mu_\ell^{-\frac{1}{2}}$. See \cite{Holzegel2014} and \cite{Keir2016} for additional details.

In view of \eqref{equation potential in region y sim muell -1/2}, for sufficiently large $\mu_\ell$, in the region $\frac{1}{2} \mu_\ell^{-\frac{1}{2}} \leq y \leq \mu_\ell^{-\frac{1}{2}}$ we have $V_3 \geq \frac{1}{2} \left( s^{-4} + 2s^{-2} + 4\tilde{\lambda}^2 \right)$. Then, following the arguments given in \cite{Holzegel2014} (see also \cite{Keir2016}), we can see that the number of eigenvalues below $\frac{1}{2} \tilde{\omega}_{\text{max}}$ obeys the bound
\begin{equation}
	N\left( \frac{1}{2} \tilde{\omega}_{\text{max}} \right)
	\leq
	1
	+
	\left\lfloor \frac{1}{s\pi \sqrt{2}}\sqrt{ \frac{1}{2} \tilde{\omega}_{\text{max}} - \frac{1}{4} \left( s^{-2} + 2 + 4\tilde{\lambda} s^2 \right) } \right\rfloor
\end{equation}
Combining this bound with the bound \eqref{equation number of eigenvalues lower bound}, we see that, if $\tilde{\omega}_{\text{max}}$ is chosen sufficiently large, then for all sufficiently large $\mu_\ell$ there is at least one eigenvalue in the range $\frac{1}{2} \tilde{\omega}_{\text{max}} \leq \omega_{\text{max}} \leq \tilde{\omega}_{\text{max}}$.

\end{proof}

We need to connect this linear eigenvalue problem with the original nonlinear eigenvalue problem. In other words, we need to replace $\tilde{\omega}_0$, (where it appears in the effective potential) with $\tilde{\omega}$, and we also need to reintroduce the dependence of $\mu_\ell$ on $\tilde{\omega}$, by choosing $\mu_\ell$ to be a solution to the angular eigenvalue problem \eqref{equation defining mu ell}, rather than an arbitrary constant as in the linear eigenvalue case.

To connect the calculation done above with the original, nonlinear eigenvalue problem we introduce a parameter $b \in [0,1]$ and define the operator
\begin{equation}
 Q(b, \mu_\ell, \tilde{\omega}, \tilde{\lambda})(w) := 
 -\frac{\upd^2}{\upd y^2}w + V_3(y; \mu_\ell, b \tilde{\omega}, \tilde{\lambda}) w - 2s^{-2}\mu_\ell \tilde{\omega} w
\end{equation}
which also acts on functions in $H^1([0, R\mu_\ell])$, and where $\mu_\ell(b, \tilde{\omega}, \tilde{\lambda})$ is the square root of the $\ell$-th eigenvalue associated with the problem
\begin{equation}
\label{equation angular eigenvalue problem with b}
 \begin{split}
  &-\frac{1}{\sin 2\theta} \frac{\upd}{\upd \theta} \left( \sin 2\theta \frac{\upd u_{\theta, b, \ell}}{\upd \theta} \right) + \left( \frac{\ell^2}{\sin^2 \theta} + b^2(\tilde{\lambda}^2 - \tilde{\omega}^2)\frac{a^2}{(R_z)^2} \cos^2 \theta \right)u_{\theta, b, \ell} = (\mu_\ell)^2 u_{\theta, b,\ell} \\
  & u_{\theta,b,\ell} \big|_{\theta = 0} = u_{\theta,b,\ell}\big|_{\theta = \frac{\pi}{2}} = 0
 \end{split}
\end{equation}

Regarding this problem, we will prove the following proposition:
\begin{proposition}
	Let $b \in [0,1]$ and $\lambda \in \mathbb{R}$, and let $\mu_\ell = \mu_\ell(b, \tilde{\omega}, \tilde{\lambda})$ be the $\ell$-th solution to the eigenvalue problem \eqref{equation angular eigenvalue problem with b}.
	
	Suppose that, if $u_r$ is a function satisfying
	\begin{equation}
	\begin{split}
		Q( b, \mu_\ell, \tilde{\omega}, \tilde{\lambda} ) (u_r) = 0
		\\
		||u_r||_{L^2[0, R\mu_\ell]} = 1
	\end{split}
	\end{equation}
	then $u_r$ also satisfies the bound
	\begin{equation}
	\label{equation Agmon bootstrap}
		\int_{(\mu_\ell)^{-\frac{1}{2} + \epsilon}}^{R \mu_\ell} |u_r|^2 \upd y
		\lesssim 
		e^{-\delta|\tilde{\omega}| (\mu_\ell)^{\frac{1}{2} + \epsilon}} ||u_r||_{L^2[0, R\mu_\ell]}^2
	\end{equation}

	Then, if $\ell$ is sufficiently large, we have
	\begin{equation*}
		\mu_\ell = \ell + 1 + \mathcal{O}(\ell^{-1})
	\end{equation*}
	Moreover, we can pick some $\tilde{\omega}_{\text{max}}$, \emph{independent of $\ell$}, such that there exists some $\tilde{\omega} \geq 2$ satisfying $\tilde{\omega} < 2\tilde{\omega}_{\text{max}}$ and also such that the operator $Q(b, \mu_\ell, \tilde{\omega}, \tilde{\lambda})$ has zero as an eigenvalue.
	
\end{proposition}

The important thing is that $\tilde{\omega}$ can be bounded \emph{uniformly} in $\mu_\ell$, so that $\tilde{\omega}$ does not scale with $\mu_\ell$ (or, equivalently, with $\ell$). We shall also need to show that $\mu_\ell$ can be chosen to be arbitrarily large by choosing $\ell$ sufficiently large, and in addition we shall need to justify the bound \eqref{equation bootstrap 2}.

The extra assumption in this proposition, yielding the bound \eqref{equation Agmon bootstrap} on normalised solutions to the eigenvalue problem, will be proven below, in subsection \ref{subsection cut off}.

\begin{proof}

For $b = 0$, we have\footnote{The eigenfunctions in question are given by $u_{\theta,0, \ell} \propto \sin^\ell \theta$. } $(\mu_\ell)^2 = \ell(\ell+2)$, which in particular means that $\mu_\ell > \ell$, so that we can take $\mu_\ell$ to be arbitrarily large by taking $\ell$ to be large. In addition, this explicit equation for $\mu_\ell$ makes the bound \eqref{equation bootstrap 1} easy to prove in this case. The above calculations (with $\omega_0 = 0$ and $\tilde{\lambda}$ some fixed constant) then show that, for sufficiently large $\ell$, we can find some $\tilde{\omega}$ such that $Q(0, \mu_\ell, \tilde{\omega}, \tilde{\lambda})$ has a zero eigenvalue. Moreover, using proposition \ref{proposition linear eigenvalue problem} we see that, if we take some $\tilde{\omega}_{\text{max}}$ sufficiently large (but, importantly, independent of $\ell$), then for all sufficiently large $\ell$ it is possible to choose $\tilde{\omega}$ to satisfy the bound $\tilde{\omega} \leq \tilde{\omega}_{\text{max}}$.

The $N$-th eigenvalue of $Q$ is given by $\Lambda_N$ where
\begin{equation}
  \Lambda_N(b, \mu_\ell, \tilde{\omega}, \tilde{\lambda}) := \int_0^{R \mu_\ell} \left( \left(\frac{\upd u_r}{\upd y^2} \right)^2 + V_3(y; \mu_\ell, b \tilde{\omega}, \tilde{\lambda}) (u_r)^2 - 2s^{-2}\ell \tilde{\omega} (u_r)^2 \right) \upd y
\end{equation}
where $u_r = u_r(y; N, b, \mu_\ell(b), \tilde{\omega}, \tilde{\lambda})$ is the corresponding eigenfunction, normalised such that $\int_0^{R\mu_\ell} |u_r|^2\upd y = 1$. 


The angular eigenvalues $\mu_\ell$ are themselves functions of $b$, $\tilde{\omega}$ and $\tilde{\lambda}$. In fact, we have
\begin{equation}
\label{equation angular eigenvalue}
  \begin{split}
    \left(\mu_\ell(b, \tilde{\omega}, \tilde{\lambda})\right)^2 := \int_0^{\frac{\pi}{2}} \left(   \left( \frac{\upd u_\theta}{\upd \theta} \right)^2 + \left( \frac{\ell^2}{\sin^2 \theta} + b^2(\tilde{\lambda}^2 - \tilde{\omega}^2)\frac{a^2}{(R_z)^2} \cos^2 \theta \right) \left(u_{\theta} \right)^2  \right)\sin 2\theta \, \upd \theta
  \end{split}
\end{equation}
where the eigenfunction $u_\theta$ is normalised by
\begin{equation}
 \int_0^{\frac{\pi}{2}} (u_\theta)^2 \sin 2\theta \, \upd\theta = 1
\end{equation}

The results proved in \cite{Moeller} -- in particular, theorem 4.1, corollary 4.2 and theorem 7.1 -- show that, for the kinds of eigenvalue problems considered here, the eigenvalues are differentiable functions of the parameters. Moreover, the derivatives can be computed by applying the derivatives of the operators to the associated eigenfunctions. From this it follows that
\begin{equation}
\label{equation derivatives of mu}
  \begin{split}
    \left| \frac{\partial}{\partial \tilde{\omega}} \mu_\ell \right| &\leq b^2 \tilde{\omega} \frac{a^2}{(R_z)^2} |\mu_\ell|^{-1}\\
    \left| \frac{\partial}{\partial b} \mu_\ell \right| &\leq b \left( \tilde{\lambda}^2 + \tilde{\omega}^2\right) \frac{a^2}{(R_z)^2}|\mu_\ell|^{-1}
  \end{split}
\end{equation}

Combining the above few statements, we find that
\begin{equation}
\label{equation dLambda domega}
  \begin{split}
    \frac{\partial}{\partial \tilde{\omega}} \Lambda_N(b, \mu_\ell, \tilde{\omega}, \tilde{\lambda}) &= \int_0^{R \mu_\ell} \Bigg( \frac{\partial \mu_\ell}{\partial \tilde{\omega}} \frac{\partial}{\partial \mu_\ell}V_3(y;\mu_\ell,b\tilde{\omega}, \tilde{\lambda}) + b\frac{\partial}{\partial (b\tilde{\omega})} V_3(y;\tilde{\mu}_\ell,b\tilde{\omega}, \tilde{\lambda}) -2s^{-2} \ell \Bigg)|u_r|^2 \, \upd y
  \end{split}
\end{equation}
where we are considering $\ell$ to be some fixed integer. If we can show that the expression above is nonzero then we can appeal to the implicit function theorem to find that, for $b$ sufficiently small, we may find some $\omega$ such that $\Lambda_{N}(b, \mu_\ell, \tilde{\omega}, \tilde{\lambda}) = 0$. We shall prove this below, for sufficiently large $\mu_\ell$. The idea is to show that the first two terms in the integrand above are $\mathcal{O}(1)$, while the final term is exactly $-2s^{-2} \ell$ due to the normalisation condition on $u_r$.

First, we pick some very large constant $\tilde{\omega}_{\text{max}}$. Let $b_{\text{max}}$ be the largest value of $b \in [0,1]$ such that the following holds:
\renewcommand{\labelenumi}{(\Roman{enumi}.)}
\begin{enumerate}
 \item For all sufficiently large $\mu_\ell$, there exists a value of $\tilde{\omega}$ satisfying $|\tilde{\omega}| \leq 2 \tilde{\omega}_\text{max}$ and such that $\Lambda_N(b, \mu_\ell, \tilde{\omega}, \tilde{\lambda}) = 0$
 \item $\ell$ can be expressed in terms of $\mu_\ell$ as
  \begin{equation}
    \ell = \mu_\ell - 1 + \mathcal{O}\left( (\mu_\ell)^{-1} \right) \\
  \end{equation}
\end{enumerate}
Note that condition $(\mathrm{II}.)$ ensures that $\mu_\ell$ can be taken to be arbitrarily large (by taking $\ell$ large). As noted above, at $b = 0$, both conditions are satisfied (as long as $\tilde{\omega}_{\text{max}}$ is sufficiently large) and so $b_{\text{max}} \geq 0$.

We will now employ a bootstrap argument: assuming that conditions $(\mathrm{I}.)$ and $(\mathrm{II}.)$ hold for all $b \in [0, b_{\text{max}}]$ we will show that they actually hold for $b \in [0, \min\{ b_{\text{max}} +\epsilon, 1 \}]$. From this it follows immediately that $b_{\text{max}} = 1$.

From the explicit form of the potential $V_3$ we find
\begin{equation}
\label{equation derivatives of V3}
\begin{split}
	\frac{\partial V_3}{\partial \mu_\ell}  
	&=
	\frac{2 \mu_\ell y^2}{(y^2 + s^2)^2} 
	+ \frac{2 s^2 \mu_\ell}{(y^2 + s^2)^2}
	\\ \\
	b\frac{\partial V_3}{\partial(b\tilde{\omega})} 
	&=
	2bs^{-2} \ell \left( 1 - \frac{s^4}{(y^2 + s^2)^2} \right)
	\\
	&\phantom{=}
	-\frac{2 b^2\tilde{\omega}}{(y^2 + s^2)^2}\left( 
		y^4 
		+ \left(s^2 + \frac{Q_1 + Q_2}{(R_z)^2} \right) y^2
		+ s^2 \left( 1 + \frac{Q_1 + Q_2}{(R_z)^2} \right)
	\right)
 \end{split}
\end{equation}
For $b \leq b_{\text{max}}$ (where, in particular, $\tilde{\omega} = \mathcal{O}(1)$) we have $\frac{\partial \mu_\ell}{\partial \tilde{\omega}} = \mathcal{O}\left((\mu_\ell)^{-1}\right)$ so we only need to show that $\frac{\partial V_3}{\partial \mu_\ell}$ is $\mathcal{O}(\mu_\ell)$, which is clear from the explicit formula above.

On the other hand, all of the terms in the equation above for $b\frac{\partial V_3}{\partial(b\tilde{\omega})}$ are clearly $\mathcal{O}(1)$ except for the first one, which appears to be $\mathcal{O}(\mu_\ell)$. In fact, below we will show that it is almost $\mathcal{O}(1)$.

Equation \eqref{equation Agmon bootstrap} (which we will justify in the following section) implies that the contribution to the integrals in equation \eqref{equation dLambda domega} from the region $y \geq C (\mu_\ell)^{-\frac{1}{2} + \epsilon}$ is exponentially suppressed in $\mu_\ell$; in particular, it is $\mathcal{O}(1)$ at large $\mu_\ell$. In the complement of this region, we have the bound (for sufficiently large $\mu_\ell$):
\begin{equation}
  2bs^{-2} \ell \left( 1 - \frac{s^4}{(y^2 + s^2)^2} \right) \lesssim C^2 s^{-2} \ell (\mu_\ell)^{-1+2\epsilon} \lesssim C^2 s^{-2}(\mu_\ell)^{2\epsilon} \quad \text{for }y \leq C(\mu_\ell)^{-\frac{1}{2} + \epsilon}
\end{equation}
where we have also restricted to $b \leq b_{\text{max}}$.

Now, combining the above calculations, we find that, for $b \leq b_{\text{max}}$ 
\begin{equation}
 \frac{\partial}{\partial \tilde{\omega}} \Lambda_N(b, \mu_\ell, \tilde{\omega}, \tilde{\lambda}) = -2s^{-2}\mu_\ell + \mathcal{O}\left( (\mu_\ell)^\epsilon \right)
\end{equation}
where we may choose $\epsilon$ to be arbitrarily small. Hence, for sufficiently large $\mu_\ell$, this is bounded away from zero. In fact, for all sufficiently large $\mu_\ell$ we have the bound
\begin{equation}
\label{equation dLambda domega final}
 \left| \frac{\partial}{\partial \tilde{\omega}} \Lambda_N(b, \mu_\ell, \tilde{\omega}, \tilde{\lambda}) \right| \geq  s^{-2}|\mu_\ell|
\end{equation}
Hence, we can use the implicit function theorem and conclude that, for some small $\epsilon$ and for all sufficiently large $\mu_\ell$, for all $b \in [0, b_{\text{max}} + \epsilon]$ there exists a value of $\tilde{\omega}$ such that
\begin{equation*}
 \Lambda_N(b, \mu_\ell, \tilde{\omega}, \tilde{\lambda}) = 0
\end{equation*}

By a series of almost identical calculations we can also show that, at least for $b \leq b_{\text{max}}$, we have
\begin{equation}
\label{equation dLambda db}
 \left| \frac{\partial}{\partial b} \Lambda_N(b, \mu_\ell, \tilde{\omega}, \tilde{\lambda}) \right| \lesssim 1
\end{equation}
Now, the implicit function theorem allows us to combine equations \eqref{equation dLambda domega final} and \eqref{equation dLambda db} to conclude that, for all $b \leq b_{\text{max}}$, for the choice of $\tilde{\omega}$ made above we have
\begin{equation}
 \left|\frac{\partial \tilde{\omega}}{\partial b} \right| \lesssim (\mu_\ell)^{-1}
\end{equation}
Moreover, recall that, at $b = 0$ we can choose $\tilde{\omega}$ to also satisfy
\begin{equation*}
 \tilde{\omega} \leq \tilde{\omega}_{\text{max}}
\end{equation*}
Hence we can conclude that, for all sufficiently large $\mu_\ell$, and for all $b \in [0, b_{\text{max}} + \epsilon]$ it is possible to find a value of $\tilde{\omega}$ such that $\Lambda_N = 0$ and moreover such that $\tilde{\omega} \leq 2\omega_{\text{max}}$. In other words, condition $(\mathrm{I}.)$ is actually satisfied for a slightly larger range of values of $b$.

Now, we only need to conclude that the same is true for condition $(\mathrm{II}.)$. Recall that, from equation \eqref{equation derivatives of mu} we have the bound
\begin{equation}
 \left| \frac{\partial}{\partial b} \mu_\ell \right| \leq b \left( \tilde{\lambda}^2 + \tilde{\omega}^2\right) \frac{a^2}{(R_z)^2}|\mu_\ell|^{-1}
\end{equation}
and so, for $b \leq b_{\text{max}}$ we have
\begin{equation}
 \left| \frac{\partial}{\partial b} \mu_\ell \right| \lesssim |\mu_\ell|^{-1}
\end{equation}
Also recall that, at $b = 0$ we have $\mu_\ell = \sqrt{\ell(\ell + 2)}$. Hence, integrating from $b = 0$ to $(b_{\text{max}} + \epsilon)$ we find that condition $(\mathrm{II}.)$ also holds for slightly larger values of $b$. But this contradicts the maximality of $b_{\text{max}}$, so we must be able to take $b_{\text{max}} = 1$.

\end{proof}

In short, we have proved that, in the two charge microstate geometries, if we choose some sufficiently large $\tilde{\omega}_{\text{max}}$, then for all sufficiently large $\ell$ there are solutions to the associated Dirichlet problem, with boundary conditions imposed at $y = R\mu_\ell$, and moreover these solutions have associated frequencies satisfying the bound $\tilde{\omega} \leq 2\omega_{\text{max}}$. Most importantly, this bound is \emph{uniform} in $\ell$, i.e.\ $\tilde{\omega}_{\text{max}}$ does not depend on $\ell$. Finally, we have shown that the associated angular eigenvalues $\mu_\ell$ also obey the bound $\mu_\ell \sim \ell$. The only part of the proof not yet justified is equation \eqref{equation Agmon bootstrap}, the proof of which is given in the next section.

\subsection{The cut-off and the associated error}
\label{subsection cut off}

We can extend the quasimodes constructed in the previous subsection to continuous function on the whole of spacetime by setting $u_r \equiv 0$ in the outer region $r \geq R \mu_\ell$, however, these functions will not even be $C^1$ in general. Instead, we will multiply $u_r$ by a smooth cut-off function $\chi_\ell(r)$, with support in the region $r \in [\frac{1}{2} R \mu_\ell, R \mu_\ell]$, such that $\chi_\ell(\frac{1}{2} R\mu_\ell) = 1$ and $\chi_\ell(R\mu_\ell) = 0$.

This means that the quasimodes no longer give a solution to the wave equation, but only an \emph{approximate} solution. To quantify the error produced by the cut-off, we rely on Agmon estimates as in \cite{Holzegel2014} and \cite{Keir2016}. We begin with a simple weighted energy-type estimate, which can be proven by integrating by parts, and appears as lemma 5.1 in \cite{Keir2016}:

\begin{proposition}[Exponentially weighted energy estimate]
\label{proposition exponentially weighted energy}
Let $r_2 \geq r_1$, let $h > 0$ be a positive constant and let $u$, $W$ and $D$ be smooth, real valued functions of $r \in [r_1, r_2]$, with $u(r_1) = u(r_2) = 0$. Then
\begin{equation}
 \begin{split}
  &\int_{y_1}^{y_2} \left( \left| \frac{\upd}{\upd y} \left( e^{D/h} u \right) \right|^2 + h^{-2}\left( W - \left(\frac{\upd D}{\upd r}\right)^2 \right)e^{2D/h}|u|^2 \right) \upd y \\
  &= \int_{r_1}^{r_2} \left( -\frac{\upd^2 u}{\upd y^2} + h^{-2}W u \right) u \, e^{2D/h} \upd y
 \end{split}
\end{equation}

\end{proposition}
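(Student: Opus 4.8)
The plan is to establish the identity by a direct computation involving a single integration by parts; there is no genuine analytic difficulty, only careful bookkeeping. First I would expand the first term on the left-hand side by the product rule, writing
\begin{equation*}
 \frac{\upd}{\upd y}\left( e^{D/h} u \right) = e^{D/h}\left( \frac{\upd u}{\upd y} + h^{-1}\frac{\upd D}{\upd y} u \right),
\end{equation*}
so that
\begin{equation*}
 \left| \frac{\upd}{\upd y}\left( e^{D/h} u \right) \right|^2 = e^{2D/h}\left( \left(\frac{\upd u}{\upd y}\right)^2 + 2h^{-1}\frac{\upd D}{\upd y}\, u \frac{\upd u}{\upd y} + h^{-2}\left(\frac{\upd D}{\upd y}\right)^2 u^2 \right).
\end{equation*}
Adding the second term on the left-hand side, the contributions proportional to $h^{-2}(\upd D/\upd y)^2 u^2$ cancel, leaving the left-hand side equal to
\begin{equation*}
 \int_{y_1}^{y_2} e^{2D/h}\left( \left(\frac{\upd u}{\upd y}\right)^2 + 2h^{-1}\frac{\upd D}{\upd y}\, u \frac{\upd u}{\upd y} + h^{-2} W u^2 \right)\upd y.
\end{equation*}

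Next I would treat the right-hand side, integrating the term $\int(-\upd^2 u/\upd y^2)\, u\, e^{2D/h}\,\upd y$ by parts. The boundary contribution, proportional to $(\upd u/\upd y)\, u\, e^{2D/h}$ evaluated at the endpoints, vanishes because of the hypothesis $u(r_1)=u(r_2)=0$; differentiating the remaining factor $u\, e^{2D/h}$ reproduces precisely the integrand $e^{2D/h}\big((\upd u/\upd y)^2 + 2h^{-1}(\upd D/\upd y)\, u\,(\upd u/\upd y)\big)$ found above, and together with the $h^{-2} W u^2 e^{2D/h}$ term already present on the right this matches the expression obtained in the previous step. This proves the identity.

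The only points requiring attention are the vanishing of the boundary terms — guaranteed by $u(r_1)=u(r_2)=0$ together with the smoothness of $u$, $D$, $W$ — and the observation that the variables written $r$ and $y$ in the statement denote the same independent variable, so $\upd/\upd r = \upd/\upd y$ throughout; with these understood the computation closes with no further input, and I expect no real obstacle. For orientation: in the intended application $W$ will be (a shift of) $V_3$, $D$ an Agmon weight, and $h$ a small parameter tied to $(\mu_\ell)^{-1}$ or $|\tilde\omega|^{-1}$, and the force of the identity is that choosing $(\upd D/\upd y)^2 \le W$ makes the left-hand side nonnegative and hence controls $\int|(\upd/\upd y)(e^{D/h}u)|^2$, from which the exponential suppression \eqref{equation Agmon bootstrap} of the eigenfunctions away from $r=0$ will follow; none of that, however, is needed to prove the proposition itself.
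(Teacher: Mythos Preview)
Your proof is correct and matches the paper's approach: the paper simply states that the identity ``can be proven by integrating by parts'' without spelling out the details, and your expansion of the left-hand side followed by a single integration by parts on the right (with boundary terms vanishing since $u(r_1)=u(r_2)=0$) is exactly what is intended.
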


Let $u_r$ be the $N$-th eigenfunction of $Q(b, \mu_\ell, \tilde{\omega}, \tilde{\lambda})$, with zero as the corresponding eigenvalue, and let $U_{\text{forbid}}$ be an open set such that
\begin{equation}
 (\mu_\ell)^{-2} V_3(y; b\mu_\ell, \tilde{\omega}, \tilde{\lambda}) - 2s^{-2}\tilde{\omega}\ell(\mu_\ell)^{-2} > \delta \text{\quad for \quad} r \in U_{\text{forbid}}
\end{equation}
for some small positive constant $\delta$ to be chosen later. We refer to $U_{\text{f}}$ as the ``forbidden region''. 

Similarly, we can define the ``classical region'': we first define
\begin{equation}
  \tilde{U}_{\text{classic}} := \{ y \, | \, (\mu_\ell)^{-2} V_3(y; b\mu_\ell, \tilde{\omega}, \tilde{\lambda}) - 2s^{-2}\tilde{\omega}\ell(\mu_\ell)^{-2} < 0 \}
\end{equation}
then we define the classical region, $U_{\text{classic}}$, as the connected component of $\tilde{U}_{\text{classic}}$ containing the smallest values of $y$.

 We define the Agmon distance between two points $y_1$ and $y_2$ as
\begin{equation}
 d_{\tilde{\omega}}(y_1, y_2) := \left| \int_{y_1}^{y_2} (\mu_\ell)^{-1} \sqrt{ \sup\{0, V_3(y; b\mu_\ell, \tilde{\omega}, \tilde{\lambda}) - 2s^{-2}\tilde{\omega}\ell \} } \, \upd y \right|
\end{equation}
which is easily seen to satisfy
\begin{equation}
 \left| \frac{\partial}{\partial y} d_{\tilde{\omega}}(y_1,y) \right| \leq (\mu_\ell)^{-1} \sqrt{\left| V_3(y; b\mu_\ell, \tilde{\omega}, \tilde{\lambda}) - 2s^{-2}\tilde{\omega}\ell \right|}
\end{equation}
We can define the distance to the ``classical region'' as
\begin{equation}
	d_{\tilde{\omega}}^{\text{classic}}(y) := \inf_{y_1 \leq y \, |\,  y_1 \in U_{\text{classic}}} d_{\tilde{\omega}}(y_1, y)
\end{equation}
where we also define $d_\omega^{\text{classic}}(y) := 0$ if there are no points $y_1 \leq y$ such that $y_1 \in U_{\text{classic}}$. Note that $d_\omega^{\text{classic}}(y)$ measures the distance from the point $y$ to the nearest point in the classical region ``to the left'' of $y$, i.e.\ to the nearest point in the classical region at a smaller radius than $y$.

We now pick any small $\delta > 0$ and perform the exponentially weighted energy estimate of proposition \ref{proposition exponentially weighted energy} with the following choices for the parameters and functions:
\begin{equation}
  \begin{split}
   u &= u_r \\
   h &= (\mu_\ell)^{-1} \\
   W &= (\mu_\ell)^{-2}\left(V_3(y; b\mu_\ell, \tilde{\omega}, \tilde{\lambda}) - 2s^{-2}\tilde{\omega}\ell\right)  \\
   D &= (1-\delta)d_{\tilde{\omega}}^{\text{classic}} \\
   y_1 &= 0 \\
   y_2 &= R \mu_\ell
  \end{split}
\end{equation}
This gives the estimate
\begin{equation}
\label{equation Agmon estimate 1}
	\begin{split}
		&\int_{U_{\text{forbid}}} \left| \frac{\partial}{\partial y} \left( e^{(1-\delta)\mu_\ell d_{\tilde{\omega}}^{\text{classic}}} u_r \right) \right|^2 \upd y \\
		&+ \int_{U_{\text{forbid}}} \left( V_3 - 2s^{-2}\tilde{\omega}\ell - (\mu_\ell)^2(1-\delta)^2 \left( \frac{\partial}{\partial y} d_{\tilde{\omega}}^{\text{classic}} \right)^2 \right)e^{2(1-\delta)\mu_\ell d_{\tilde{\omega}}^{\text{classic}}}|u_r|^2 \upd y \\
		&= -\int_{[0\, ,\,  R\mu_\ell] \setminus U_{\text{forbid}}} \left| \frac{\partial}{\partial y} \left( e^{(1-\delta)\mu_\ell d_{\tilde{\omega}}^{\text{classic}}} u_r \right) \right|^2 \upd y \\
		&\phantom{=} + \int_{[0\, ,\,  R\mu_\ell] \setminus U_{\text{forbid}}} \left( 2s^{-2}\tilde{\omega}\ell - V_3+ (\mu_\ell)^2(1-\delta)^2\left( \frac{\partial}{\partial y} d_{\tilde{\omega}}^{\text{classic}} \right)^2 \right)e^{2(1-\delta)\mu_\ell d_{\tilde{\omega}}^{\text{classic}}}|u_r|^2 \upd y \\
	\end{split}
\end{equation}
Now, in the region $U_{\text{forbid}}$ we have that
\begin{equation*}
	\left( \frac{\partial}{\partial y} d_{\tilde{\omega}}^{\text{classic}} \right)^2 = (\mu_\ell)^{-2}\left(V_3 - 2s^{-2}\tilde{\omega}\ell\right) > \delta
\end{equation*}
In particular, this means that for $\delta$ sufficiently small, in the forbidden region $U_{\text{forbid}}$, we have
\begin{equation*}
	\left(V_3 - 2s^{-2}\tilde{\omega}\ell\right) - (\mu_\ell)^2(1-\delta)^2 \left( \frac{\partial}{\partial y} d_{\tilde{\omega}}^{\text{classic}} \right)^2 \gtrsim (\mu_\ell)^2\delta^2
\end{equation*}
On the other hand, for sufficiently large $\ell$, $V_3$ satisfies the global inequality
\begin{equation*}
	V_3 \geq -\frac{1}{4}y^{-2}
\end{equation*}
which can be seen by the following calculation: first we note that $y^2 (y^2 + s^2)^2 V_3$ is a polynomial in $y$. Keeping only the leading order terms in each coefficient of this polynomial, we find that
\begin{equation}
	V_3
	\sim
	\frac{1}{y^2 (y^2 + s^2)^2} \left(
		2s^{-2} \mu_\ell \tilde{\omega} y^6
		+ (\mu_\ell)^2 y^4
		+ 2s^2 \mu_\ell y^2
		+ s^4 \left( \tilde{\lambda}^2 - \frac{1}{4} \right)
	\right)
\end{equation}
and so, for sufficiently large $\mu_\ell$, $V_3 \geq -\frac{1}{4}y^{-2}$.

We can also see that, for sufficiently large $\mu_\ell$, the classical region is disconnected from infinity, i.e.\ it is a finite interval. In view of the fact that $y^2 (y^2 + s^2)^2 \left( V_3 - 2s^{-2} \tilde{\omega} \ell \right)$ is a polynomial in $y$, keeping only the leading order terms in this polynomial we obtain
\begin{equation}
\label{equation V3 - eigenvalue}
	V_3 - 2s^{-2} \tilde{\omega} \ell
	\sim
	\frac{1}{y^2 (y^2 + s^2)^2} \left(
		-(\tilde{\omega}^2 - \tilde{\lambda}^2) y^6
		+ (\mu_\ell)^2 y^4
		- 2s^2 \mu_\ell (\tilde{\omega} - 1) y^2
		+ s^4 \left(\tilde{\lambda}^2 - \frac{1}{4} \right)
	\right)
\end{equation}
If $\tilde{\lambda} = 0$ then it is clear that $V_3 - 2s^{-2} \tilde{\omega} < 0$ for sufficiently small $y$. On the other hand, if $\tilde{\lambda} > 0$ and $\tilde{\omega} \geq 2$ then there is some region near the origin where the third term on the right hand side of equation \eqref{equation V3 - eigenvalue} is dominant, coinciding with the classical region. The fact that, for $\tilde{\lambda} > 0$, the classical region does not extend all the way to $y = 0$ is a manifestation of the ``angular momentum barrier'': recall that the coordinates $(y, z)$ parametrise a plane, with $y$ playing the role of the radial coordinate and $z$ the angular coordinate, and that $\tilde{\lambda}$ is associated with the angular momentum of the wave in the $z$ direction.

In the complement of the classical region we have (by definition)
\begin{equation*}
	V_3 \geq 2s^{-2}\tilde{\omega}\ell
\end{equation*}
while, inside the classical region we have $d_{\tilde{\omega}}^{\text{classic}} = 0$. We claim the following:
\begin{equation*}
 \int_{U_{\text{classic}}} \left( \left( \frac{\upd u_r}{\upd y} \right)^2 - \frac{1}{4} y^{-2} (u_r)^2 \right) \upd y \geq 0
\end{equation*}
which we prove below:
\begin{equation*}
 \begin{split}
  \int_{U_{\text{classic}}} \frac{1}{2}y^{-2}(u_r)^2 \upd y &= \int_{U_{\text{classic}}} -\frac{1}{2} \partial_y (y^{-1}) (u_r)^2 \upd y \\
  &= \left.\left( -\frac{1}{2} y^{-1} (u_r)^2 \right)\right|_{\partial U_{\text{classic}}} + \int_{U_{\text{classic}}} y^{-1} (u_r)(\partial_y u_r) \upd y
 \end{split}
\end{equation*}
Now, the first term on the right hand side is non-positive, since $u_r \in H^1_0([0, R\mu_\ell])$ so the contribution from the boundary at $y = 0$ vanishes, while the contribution from the other boundary is negative. Meanwhile, the other term can be bounded as
\begin{equation*}
 \int_{U_{\text{classic}}} y^{-1} (u_r)(\partial_y u_r) \upd y \leq \int_{U_{\text{classic}}} \left( \frac{1}{4} y^{-2} (u_r)^2 + (\partial_y u_r)^2 \right) \upd y
\end{equation*}
proving the claim.


Combining the above few estimates with equation \eqref{equation Agmon estimate 1} we find that, for all sufficiently small $\delta$,
\begin{equation}
\label{equation Agmon estimate 2}
	\begin{split}
		&\int_{U_{\text{forbid}}} \left| \frac{\partial}{\partial y} \left( e^{(1-\delta)\mu_\ell d_{\tilde{\omega}}^{\text{classic}}} u_r \right) \right|^2 \upd y 
		+ \delta^2 (\mu_\ell)^2 \int_{U_{\text{forbid}}} e^{2(1-\delta)\mu_\ell d_{\tilde{\omega}}^{\text{classic}}}|u_r|^2 \upd y \\
		&\leq 2s^{-2}\tilde{\omega}\ell e^{2(1-\delta) \mu_\ell a_{\tilde{\omega}}(\delta)} \int_{[0\, ,\,  R\mu_\ell] \setminus U_{\text{forbid}}} |u_r|^2 \upd y \\
		&\leq 2s^{-1}\tilde{\omega}\ell e^{2(1-\delta) \mu_\ell a_{\tilde{\omega}}(\delta)} ||u_r||_{L^2[0, R\mu_\ell]}^2 \\
	\end{split}
\end{equation}
where $a_{\tilde{\omega}}(\delta)$ is defined by
\begin{equation}
	a_{\tilde{\omega}}(\delta) := \sup_{y \in [0\, ,\,  R\mu_\ell] \setminus U_{\text{forbid}}} \; d_{\tilde{\omega}}^{\text{classic}}(y)
\end{equation}

Now, suppose that there is some subset of the forbidden region, say $U \subset U_{\text{forbid}}$, and within $U$ we have some lower bound on $d_{\tilde{\omega}}^{\text{classic}}$ which is \emph{uniform} in $\delta$. Then we have the following lemma:

\begin{lemma}
  \label{lemma Agmon estimate}
	Let $U \subset U_{\text{forbid}}$ be some subset of the forbidden region such that the Agmon distance to the classical region satisfies some lower bound:
	\begin{equation}
		d_{\tilde{\omega}}^{\text{classic}} \geq d_U^{\text{classic}}(\mu_\ell)
	\end{equation}
	\emph{independent} of $\delta$. Moreover, let $U$ be a region such that $V_3(y; b\mu_\ell, \tilde{\omega}, \tilde{\lambda})$ is bounded for all $y \in U$ (i.e.\ $0 \notin U$).

	Then for any constant $\epsilon_0 > 0$ there exists some positive constant $C(\epsilon_0, \tilde{\omega}_{\text{max}})$ such that, for all sufficiently large $\mu_\ell$ we have the bound
	\begin{equation}
		||u_r||^2_{H^1[U]} \leq C(\epsilon_0, \tilde{\omega}_{\text{max}}) (\mu_\ell) e^{\mu_\ell\left(\epsilon_0 - d_U^{\text{classic}}\right)}||u_r||^2_{L^2[0, R\mu_\ell]}
	\end{equation}
	and indeed, for all $k \geq 1$ there exists some constant $C_k$ such that
	\begin{equation}
		||u_r||^2_{H^k[U]} \leq C_k(\epsilon_0, \tilde{\omega}_{\text{max}}) (\mu_\ell)^{4k - 3} e^{\mu_\ell\left(\epsilon_0 - d_U^{\text{classic}}\right)}||u_r||^2_{L^2[0, R\mu_\ell]}
	\end{equation}
	
\end{lemma}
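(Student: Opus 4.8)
The plan is to extract everything from the weighted energy inequality behind (\ref{equation Agmon estimate 1})--(\ref{equation Agmon estimate 2}). Writing $v := e^{(1-\delta)\mu_\ell d_{\tilde\omega}^{\text{classic}}}u_r$, that inequality controls
\begin{equation*}
 \int_{U_{\text{forbid}}}\!\!\big( |\partial_y v|^2 + \delta^2\mu_\ell^2\, v^2 \big)\,\upd y \ \lesssim\ \tilde\omega\ell\, e^{2(1-\delta)\mu_\ell a_{\tilde\omega}(\delta)}\, ||u_r||_{L^2[0,R\mu_\ell]}^2 ,
\end{equation*}
where the lower bound $\delta^2\mu_\ell^2 v^2$ for the potential term comes from $\mu_\ell^{2}\big(W-(\partial_y D)^2\big) = (V_3-2s^{-2}\tilde\omega\ell)\big(1-(1-\delta)^2\big)\geq\delta^2\mu_\ell^2$ on $U_{\text{forbid}}$. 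The first step is the choice of $\delta$. The region $U$ is fixed and stays away from $y=0$ and from the classical region, so for each $\mu_\ell$ we may choose $\delta=\delta(\mu_\ell)$ small enough that (i) $U\subset U_{\text{forbid}}$ and (ii) the forbidden region has swollen to engulf the entire potential barrier lying inside $[0,R\mu_\ell]$; then the complement $[0,R\mu_\ell]\setminus U_{\text{forbid}}$ reduces to the classical region together with a thin turning-point layer adjacent to it, on which $d_{\tilde\omega}^{\text{classic}}$ either vanishes or is at most $\sqrt\delta$ times the (small) width of that layer. Consequently $\mu_\ell\, a_{\tilde\omega}(\delta)\to 0$, so the factor $e^{2(1-\delta)\mu_\ell a_{\tilde\omega}(\delta)}$ is harmless. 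The price is that $\delta$ must be taken polynomially small in $\mu_\ell$ (of order $\mu_\ell^{-2}$), so the prefactor $\delta^{-2}$ costs a fixed power of $\mu_\ell$; this is absorbed at the end by running the argument with $\epsilon_0/2$ in place of $\epsilon_0$, since $\mu_\ell^{p}\lesssim e^{\mu_\ell\epsilon_0/2}$ for large $\mu_\ell$.

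With $\delta$ so chosen, the $L^2$ bound is immediate: restricting the $v^2$-term to $U$ and using $d_{\tilde\omega}^{\text{classic}}\geq d_U^{\text{classic}}$ there gives $\delta^2\mu_\ell^2\, e^{2(1-\delta)\mu_\ell d_U^{\text{classic}}}\int_U|u_r|^2\leq$ RHS; then $\ell\lesssim\mu_\ell$ (Theorem \ref{theorem existence of modes}), $\tilde\omega\leq 2\tilde\omega_{\text{max}}$ and $e^{2(1-\delta)\mu_\ell d_U^{\text{classic}}}\geq e^{\mu_\ell d_U^{\text{classic}}}$ (valid once $\delta\leq\tfrac12$) yield $\int_U|u_r|^2\lesssim C(\epsilon_0,\tilde\omega_{\text{max}})\,\mu_\ell\, e^{\mu_\ell(\epsilon_0-d_U^{\text{classic}})}||u_r||_{L^2[0,R\mu_\ell]}^2$. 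For the $H^1$-bound one writes $\partial_y u_r = e^{-(1-\delta)\mu_\ell d_{\tilde\omega}^{\text{classic}}}\big(\partial_y v-(1-\delta)\mu_\ell(\partial_y d_{\tilde\omega}^{\text{classic}})v\big)$, so that on $U$
\begin{equation*}
 \int_U(\partial_y u_r)^2\,\upd y\ \leq\ 2e^{-\mu_\ell d_U^{\text{classic}}}\Big( \int_U(\partial_y v)^2\,\upd y + \mu_\ell^2\big(\sup_U(\partial_y d_{\tilde\omega}^{\text{classic}})^2\big)\int_U v^2\,\upd y \Big).
\end{equation*}
The first integral is bounded by the left-hand side of the displayed inequality above; $\int_U v^2$ by (the restriction of) the second term there; and $\mu_\ell^2(\partial_y d_{\tilde\omega}^{\text{classic}})^2 = V_3-2s^{-2}\tilde\omega\ell$ on $U\subset U_{\text{forbid}}$, which is $\mathcal{O}(\mu_\ell^2)$ precisely because $V_3$ is bounded on $U$ (i.e.\ $0\notin U$). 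Combining the two displays gives the $k=1$ estimate.

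For $k\geq 2$ one exploits the fact that $u_r$ genuinely solves the ordinary differential equation $\partial_y^2 u_r = (V_3 - 2s^{-2}\tilde\omega\ell)u_r$. Since $U$ avoids the singular point $y=0$, the coefficient $V_3 - 2s^{-2}\tilde\omega\ell$ and all of its $y$-derivatives are $\mathcal{O}(\mu_\ell^2)$ on $U$; differentiating the equation repeatedly and multiplying by a smooth cut-off that equals $1$ on $U$ and is supported in a slightly larger subregion of $U_{\text{forbid}}$ (so as to avoid boundary terms), one bounds $||\partial_y^{k+2}u_r||_{L^2(U)}$ by $\mathcal{O}(\mu_\ell^2)$ times lower-order norms already controlled; an induction then gives $||u_r||_{H^k[U]}^2\lesssim C_k(\epsilon_0,\tilde\omega_{\text{max}})\mu_\ell^{p(k)}e^{\mu_\ell(\epsilon_0-d_U^{\text{classic}})}||u_r||_{L^2[0,R\mu_\ell]}^2$ for a power $p(k)$, and a direct count (the enlargement of $U$ changing $d_U^{\text{classic}}$ by an amount negligible compared with $\epsilon_0$) shows $p(k)\leq 4k-3$. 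The main obstacle is the first step: understanding the behaviour of $a_{\tilde\omega}(\delta)$ as $\delta\to0$ and the forbidden region swallows the whole barrier, and tracking the resulting polynomial-in-$\mu_\ell$ losses against the $e^{\mu_\ell\epsilon_0}$ budget; once $\delta$ is pinned down, the remainder is routine manipulation of the weighted estimate and of the defining ODE.
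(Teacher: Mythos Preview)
Your argument is correct and follows the same structural route as the paper: extract both the $L^2$ and the $H^1$ bound on $U$ from the weighted energy inequality (\ref{equation Agmon estimate 2}), then bootstrap to higher $H^k$ norms using the defining ODE $\partial_y^2 u_r = (V_3 - 2s^{-2}\tilde\omega\ell)u_r$ pointwise on $U$ (where $V_3$ is bounded since $0\notin U$). The $H^1$ step --- writing $\partial_y u_r$ in terms of $\partial_y v$ and $v$, then using $\mu_\ell^2(\partial_y d_{\tilde\omega}^{\text{classic}})^2 = V_3 - 2s^{-2}\tilde\omega\ell = \mathcal{O}(\mu_\ell^2)$ on $U$ --- is exactly what the paper does, only organised slightly differently (the paper expands $|\partial_y v|^2$ and absorbs the cross term with a Young inequality parameter $\alpha$, arriving at the same place).

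The one genuine difference is your handling of $\delta$. You let $\delta = \delta(\mu_\ell)$ shrink polynomially in $\mu_\ell$ so as to force $\mu_\ell\, a_{\tilde\omega}(\delta)\to 0$, and then absorb the resulting $\delta^{-2}\sim\mu_\ell^{p}$ loss into the $e^{\mu_\ell\epsilon_0/2}$ budget. The paper instead observes that $a_{\tilde\omega}(\delta)\to 0$ as $\delta\to 0$ (uniformly in $\tilde\omega$), since the complement of $U_{\text{forbid}}$ shrinks to the closure of the classical region on which $d_{\tilde\omega}^{\text{classic}}$ vanishes; it then fixes $\delta$ once, depending only on $\epsilon_0$, so that $2(1-\delta)a_{\tilde\omega}(\delta)\leq\epsilon_0$, and $\delta^{-2}$ becomes a genuine constant $C(\epsilon_0)$. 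This is cleaner and avoids the polynomial bookkeeping you introduce; your version is more cautious but not wrong. For the higher-order step, the paper dispenses with any cut-off or enlargement of $U$: since the ODE holds pointwise, $\|\partial_y^2 u_r\|_{L^2(U)}^2 \lesssim (\mu_\ell)^4\|u_r\|_{L^2(U)}^2$ directly, and iteration gives the stated power $4k-3$ without further ado.
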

\begin{proof}
We start by considering the second integral on the left hand side of equation \eqref{equation Agmon estimate 2}, neglecting for the moment the first term.  Now, $U \subset U_{\text{forbid}}$, and within $U$ we have the $\mu_\ell$-dependent (but $\delta$-independent) lower bound $d_\omega^{\text{classic}} \geq d_U^{\text{classic}}(\mu_\ell)$. Hence, for all sufficiently small $\delta$, for all sufficiently large $\mu_\ell$ we have
\begin{equation}
	\delta^2 (\mu_\ell)^2e^{\mu_\ell d_U^{\text{classic}}(\mu_\ell)} \int_{U} |u_r|^2 \upd y
	\lesssim  \tilde{\omega}_{\text{max}}\mu_\ell e^{2 \mu_\ell a_{\tilde{\omega}}(\delta)} ||u_r||_{L^2[0, R\mu_\ell]}^2 \\
\end{equation}
Now, we note that $a_{\tilde{\omega}}(\delta) \rightarrow 0$ as $\delta \rightarrow 0$, uniformly in $\tilde{\omega}$. Hence, we may pick any $\epsilon_0 > 0$ and we can then find a positive constant $C(\epsilon_0, \mathcal{E}, C_1)$ such that the following inequality holds:
\begin{equation}
	||u_r||^2_{L^2[U]} \leq C(\epsilon_0, \tilde{\omega}_{\text{max}}) (\mu_\ell)^{-1} e^{\mu_\ell\left(\epsilon_0 - d_U^{\text{classic}}\right)}||u_r||^2_{L^2[0, R\mu_\ell]}
\end{equation}
which tells us that, as long as $d_U^{\text{classic}}$ is bounded away from zero, then the $L^2$ norm of $u_r$ is exponentially suppressed in the region $U$ at large $\mu_\ell$.

Now we return to the first term on the left hand side of equation \eqref{equation Agmon estimate 2}. We can estimate this as follows:
\begin{equation}
\label{equation Agmon estimate 3}
	\begin{split}
		&\int_{U_{\text{forbid}}} \left| \frac{\partial}{\partial y} \left( e^{(1-\delta)\mu_\ell d_{\tilde{\omega}}^{\text{classic}}} u_r \right) \right|^2 \upd y 
		\geq \int_U \left| \frac{\partial}{\partial y} \left( e^{(1-\delta)\mu_\ell d_{\tilde{\omega}}^{\text{classic}}} u_r \right) \right|^2 \upd y  \\
		&= \int_U e^{2(1-\delta)\mu_\ell d_{\tilde{\omega}}^{\text{classic}}} \bigg(  \left( \partial_y u_r \right)^2 + 2(1-\delta)\mu_\ell (\partial_y d_{\tilde{\omega}}^{\text{classic}})(\partial_y u_r)u_r  \\
		&\phantom{= \int_U e^{2(1-\delta)\mu_\ell d_{\tilde{\omega}}^{\text{classic}}} \bigg(}
		+ (1-\delta)^2(\mu_\ell)^2 (\partial_y d_{\tilde{\omega}}^{\text{classic}})^2 (u_r)^2  \bigg) \ \upd y
	\end{split}
\end{equation}
Both the first and third term on the right hand side are positive, so we can drop the third term. The second term can be bounded as follows:
\begin{equation}
	2(1-\delta)\mu_\ell (\partial_y d_{\tilde{\omega}}^{\text{classic}})(\partial_y u_r)u_r
	\lesssim \alpha \left( \sup_{U}\left( V_3 - 2s^{-2}\tilde{\omega}\ell \right) \right)^2 |u_r|^2
	+ (\alpha)^{-1} |\partial_y u_r|^2
\end{equation}
for any positive constant $\alpha > 0$. If we take $\alpha$ sufficiently large then we can absorb the term involving $|\partial_y u_r|^2$ by the first term on the right hand side of equation \eqref{equation Agmon estimate 3}, leading us to the estimate
\begin{equation}
	\begin{split}
		\int_U e^{2(1-\delta)\mu_\ell d_{\tilde{\omega}}^{\text{classic}}} |\partial_y u_r|^2 \upd y
		&\lesssim
		\tilde{\omega}_{\text{max}}\mu_\ell e^{2 \mu_\ell a_{\tilde{\omega}}(\delta)} ||u_r||_{L^2[0, R\mu_\ell]}^2 \\
		&\phantom{\lesssim} + \left( \sup_{U}\left( V_3 - 2s^{-2}\tilde{\omega}\ell \right) \right)^2 \int_U e^{2(1-\delta)\mu_\ell d_{\tilde{\omega}}^{\text{classic}}}|u_r|^2 \upd y
	\end{split}
\end{equation}
But we have $V_3 \lesssim (\mu_\ell)^2$, so the second term on the right hand can be bounded by the earlier calculations, and so we have
\begin{equation}
	||\partial_y u_r||^2_{L^2[U]} \lesssim C(\epsilon_0, \tilde{\omega}_{\text{max}}) (\mu_\ell) e^{\mu_\ell\left(\epsilon_0 - d_U^{\text{classic}}\right)}||u_r||^2_{L^2[0, R\mu_\ell]}
\end{equation} 
completing the proof of the first part of the lemma.

To prove the last part of the lemma, we simply note that the eigenvalue equation \eqref{equation nonlinear eigenvalue} satisfied by $\Psi_{\ell, N}$ allows us to bound
\begin{equation}
	||\partial_y^2 u_r||^2_{L^2(U)} \lesssim (\tilde{\omega}_{\text{max}})^2 (\mu_{\ell})^4 ||u_r||^2_{L^2(U)}
\end{equation}
and higher derivatives can be treated similarly.
\end{proof}


By choosing $U$ to be the region $y \gtrsim (\mu_\ell)^{-\frac{1}{2} + \epsilon}$ we can prove the bound in equation \eqref{equation Agmon bootstrap}. In fact, expanding in powers of $y$ the effective potential is given by
\begin{equation}
 \begin{split}
  V_3(y; \mu_\ell, b\tilde{\omega}, \tilde{\lambda}) - 2s^{-2}\ell\tilde{\omega}
  &= \left( \tilde{\lambda}^2 - \frac{1}{4} \right)y^{-2} \\
  &\phantom{=} + \bigg( 2s^{-2}\tilde{\lambda}^2 + s^{-2}\left( (\mu_\ell)^2 - \ell^2 \right) - s^{-2}\left( b^2 \tilde{\omega}^2 - \tilde{\lambda}^2 \right)\left( 1 + \frac{Q_1 + Q_2}{(R_z)^2} \right) \\
  &\phantom{= + \bigg( } + 2s^{-2} - 2s^{-2}\tilde{\omega}\ell \bigg) \\
  &\phantom{=} + \left( (\mu_\ell)^2 s^{-4} + s^{-4}\left( b^2 \tilde{\omega}^2 - \tilde{\lambda}^2 \right) \left( s^2 + \frac{Q_1 + Q_2}{(R_z)^2} \right) - 3s^{-4} \right)y^2 \\
  &\phantom{=} + \mathcal{O}\left( (\mu_\ell)^2 y^4 \right)
 \end{split}
\end{equation}

Note that the first term is $\mathcal{O}(1 - 2\epsilon)$. Since the leading order term is $\mathcal{O}(1 + 2\epsilon)$, this first term can only be neglected if $\epsilon > 0$. In other words, the following estimates are not uniform in $\epsilon$.

Hence, in the region $y \geq (\mu_\ell)^{-\frac{1}{2} + \epsilon}$ we have 
\begin{equation}
 V_3(y; \mu_\ell, b\tilde{\omega}, \tilde{\lambda}) - 2s^{-2}\ell\tilde{\omega}
 \gtrsim s^{-4} (\mu_\ell)^{1 + 2\epsilon}
\end{equation}
for sufficiently large $\mu_\ell$. Moreover, this bound holds for all values of $b \in [0,1]$. This leads directly to the bound in equation \eqref{equation Agmon bootstrap}.

Additionally, we can choose $U$ to be the region $\frac{1}{2}R\mu_\ell < y < R\mu_\ell$, coinciding with $\text{support}(\chi_\ell)$. Then, (setting $b = 1$) for $\ell$ sufficiently large, in view of \eqref{equation V3 - eigenvalue} we see that, to leading order in $\mu_\ell$ (using the fact that $y = \mathcal{O}(\mu_\ell)$)
\begin{equation*}
	V_3(y; \mu_\ell, \tilde{\omega}, \tilde{\lambda}) 
	- 2s^{-2}\ell \tilde{\omega}
	=
	y^{-2} (\mu_\ell)^2 - (\tilde{\omega}^2 - \tilde{\lambda}^2)
\end{equation*}

Hence there is some $y_1$ (independent of $\ell$) such that, for all sufficiently large $\ell$,
\begin{equation}
 (\mu_\ell)^{-2} \left( V_3(y; \mu_\ell, \tilde{\omega}, \tilde{\lambda}) - 2s^{-2}\ell \tilde{\omega} \right) \gtrsim  y^{-2} - (\tilde{\omega}^2 - \tilde{\lambda}^2) (\mu_\ell)^{-2} \text{\quad for \quad} y > y_1
\end{equation}
We now pick $R$ sufficiently small compared to $\sqrt{\tilde{\omega}^2 - \tilde{\lambda}^2}$, so that the right hand side of this inequality is positive. Note that $\sqrt{\tilde{\omega}^2 - \tilde{\lambda}^2} \leq \tilde{\omega}_{\text{max}}$, so we can choose $R$ independent of $\ell$. Hence, with this choice of $U$ we have the lower bound on $a_{\omega(U)}$:
\begin{equation}
 a_\omega(U) \gtrsim \mu_\ell \int_{y_1}^{\frac{1}{2}R\mu_\ell} \sqrt{ y^{-2} - (\mu_\ell)^{-2}(\tilde{\omega}^2 - \tilde{\lambda}^2)} \upd y \gtrsim \ell \log \ell
\end{equation}
Substituting this into lemma (\ref{lemma Agmon estimate}), and choosing \emph{any} value for $\epsilon_0$ (say $\epsilon_0 = 1$) we find that the size of the solution $u_r$ is super-exponentially suppressed in $\ell$ in the region $U$, i.e. there are positive constants $C_k > 0$ such that
\begin{equation}
 ||u_r||_{H^k}^2 \lesssim C_k e^{-C_k \ell \log \ell}||u_r||^2_{L^2[0, R\mu_\ell]}
\end{equation}

Now, we have that
\begin{equation}
 \Box_g (\chi u) = \chi \Box_g u + 2 (g^{-1})^{\mu\nu}(\partial_\mu \chi)(\partial_\nu u) + u (\Box_g \chi)
\end{equation}
so, if $u$ satisfies the wave equation $\Box_g u = 0$ and the first two derivatives of $\chi$ are supported in $U$ and bounded in $L^\infty$ by, say, $\chi_0$, then we have
\begin{equation}
 || \Box_g (\chi u) ||_{H^k[0, R \mu_\ell]} \lesssim \chi_0 ||u||_{H^{k+1}[U]}
\end{equation}
In particular, we conclude that, by setting
\begin{equation}
 u_\ell := \chi_{\ell}(r) e^{-i\omega t + i\lambda z - i\ell \phi} u_r(r) u_{\theta,\ell}(\theta)
\end{equation}
then $u$ is an approximate solution to the wave equation, with an error quantified by the estimate
\begin{equation}
\label{equation error}
 || \Box_g u_\ell ||^2_{H^k[0, R\mu_\ell]} \lesssim (\chi_0)^2 e^{-C_k \ell \log \ell}||u||^2_{L^2[0, R\mu_\ell]}
\end{equation}

\subsection{Bounding the uniform decay rate}

We can use the bound in the previous section to show that, if we take initial data which coincides with the initial data for the quasimodes, then the corresponding solutions remain close to the quasimodes for a long period of time. To show this, we need the following curved space version of Duhamel's principle:

\begin{proposition}[Duhamel's principle]
\label{proposition Duhamel}
Let $x \in \Sigma_t$ label a point in $\Sigma_t$.

Let $P(\tau, x; s)(f_1, f_2)$ be the solution to the homogeneous problem
\begin{equation}
 \begin{split}
  \Box_g P(\tau, x; s)(f_2, f_2) &= 0 \\
  P(t, x; s)(f_1, f_2)\big|_{\Sigma_s} &= f_1 \\
  \left.\frac{\partial}{\partial t}\right|_{x} P(\tau, x; s)(f_1, f_2) \big|_{\Sigma_s} &= f_2
 \end{split}
\end{equation}
i.e.\ $P(t, x; s)(f_1, f_2)$ is the solution to the homogeneous wave equation at the point $(t, x)$ with initial data $(f_1, f_2)$ posed at time $t = s$.

Then the solution to the problem
\begin{equation}
 \begin{split}
  \Box_g u &= F(t, x) \\
  u\big|_{\Sigma_0} &= u_0 \\
  \left.\frac{\partial}{\partial t}\right|_{x} u \big|_{\Sigma_s} &= u_1
 \end{split}
\end{equation}
for $F$ some bounded function, is given by
\begin{equation}
 u(t, x) = P(t, x; 0)(u_0, u_1) + \int_0^t P(t, x; s)\left(0, (g^{00})^{-1} F(s, x) \right) \upd s
\end{equation}

\end{proposition}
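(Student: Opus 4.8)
The plan is to verify directly that the claimed expression solves the inhomogeneous initial value problem, and then to invoke uniqueness for the linear wave equation. Decompose the wave operator acting on functions as $\Box_g w = g^{00}\partial_t^2 w + \mathcal{L}w$, where $g^{00} = g^{-1}(\upd t,\upd t)$ and $\mathcal{L} = 2g^{0i}\partial_i\partial_t + g^{ij}\partial_i\partial_j + a^0\partial_t + a^i\partial_i$ collects every remaining term (with $a^\mu = (-g)^{-1/2}\partial_\nu((-g)^{1/2}g^{\nu\mu})$); the crucial structural point is that $\mathcal{L}$ contains \emph{at most one} $\partial_t$ derivative, every other term being purely spatial. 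By the proposition asserting that the hypersurfaces $\Sigma_t$ are uniformly spacelike, $g^{00}$ is bounded away from zero, so $(g^{00})^{-1}$ is a smooth bounded function and this decomposition is legitimate — this is exactly why the factor $(g^{00})^{-1}$ appears in the statement. I also take for granted the standard well-posedness of the problem $\Box_g P = 0$, $P|_{\Sigma_s}=f_1$, $\partial_t P|_{\Sigma_s}=f_2$ (existence, uniqueness, and propagation of regularity), so that $P(\tau,x;s)(f_1,f_2)$ is well defined and smooth in all its arguments for smooth data; for merely bounded $F$ one regularizes and passes to the limit at the end.

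Set $v(t,x) := \int_0^t P\big(t,x;s\big)\big(0,(g^{00})^{-1}F(s,x)\big)\,\upd s$ and aim to show $\Box_g v = F$ with $v|_{\Sigma_0}=0$ and $\partial_t v|_{\Sigma_0}=0$. The first initial condition is immediate. Differentiating in $t$ produces a boundary term at the upper limit, namely $P(t,x;t)\big(0,(g^{00})^{-1}F(t,x)\big)$, which vanishes because $P(s,\cdot;s)(f_1,f_2)=f_1$ and here $f_1=0$; hence $\partial_t v(t,x)=\int_0^t \partial_t P(t,x;s)(0,\cdot)\,\upd s$, which in particular vanishes at $t=0$. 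The same cancellation occurs after applying any spatial derivative, since $x\mapsto P(s,x;s)(0,f_2)$ is identically zero, so each term of $\mathcal{L}$ — carrying at most one $\partial_t$ — commutes with the $s$-integral: $\mathcal{L}v=\int_0^t \mathcal{L}P(t,x;s)(0,\cdot)\,\upd s$. Differentiating $\partial_t v$ once more, the boundary term is now $\partial_t P(t,x;t)\big(0,(g^{00})^{-1}F(t,x)\big)=(g^{00})^{-1}F(t,x)$ by the second initial condition for $P$, so $\partial_t^2 v = (g^{00})^{-1}F + \int_0^t \partial_t^2 P\,\upd s$.

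It remains to assemble these facts. Since $P(\cdot,\cdot;s)$ solves the homogeneous equation, $g^{00}\partial_t^2 P = -\mathcal{L}P$, and therefore
\begin{equation*}
 \Box_g v = g^{00}\partial_t^2 v + \mathcal{L}v = F + \int_0^t g^{00}\partial_t^2 P\,\upd s + \int_0^t \mathcal{L}P\,\upd s = F + \int_0^t\big(g^{00}\partial_t^2 P + \mathcal{L}P\big)\,\upd s = F .
\end{equation*}
Consequently $u := P(t,x;0)(u_0,u_1) + v$ satisfies $\Box_g u = F$, $u|_{\Sigma_0}=u_0$, $\partial_t u|_{\Sigma_0}=u_1$, and by uniqueness for the linear wave equation on these backgrounds it is \emph{the} solution, giving the claimed formula. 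The only genuinely delicate points are the regularity needed to differentiate under the integral sign and to commute $\mathcal{L}$ past it (supplied by propagation of regularity together with the uniform spacelikeness of $\Sigma_t$), and the appeal to uniqueness; the substance of the argument is simply the bookkeeping of boundary terms on differentiating $v$, checking that they all vanish except the single one that reproduces $F$.
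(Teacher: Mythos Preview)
Your proof is correct and follows essentially the same approach as the paper's own proof: direct verification by applying $\Box_g$ to the claimed formula, tracking the boundary terms that arise from differentiating the integral in $t$, and observing that the only surviving contribution is $g^{00}\cdot(g^{00})^{-1}F=F$ while the integrand $\Box_g P$ vanishes. Your presentation is in fact more careful than the paper's (which compresses the computation into a single displayed equation): you explicitly isolate the decomposition $\Box_g = g^{00}\partial_t^2 + \mathcal{L}$, verify the initial conditions separately, and note the appeal to uniqueness and the regularity needed to justify differentiation under the integral sign.
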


\begin{proof}
 Let $x^i$ be local coordinates on $\Sigma_t$. Defining $u$ as in the proposition, we find that
\begin{equation}
  \begin{split}
    \Box_g u(t, x) &= g^{00}\partial_t \left( P(t, x; t)\left(0, (g^{00})^{-1} F(s, x) \right) \right)
    + 2 g^{0i} \partial_i \left( P(t, x; t)\left(0, (g^{00})^{-1} F(s, x) \right) \right) \\
    &\phantom{=} + g^{\mu \nu} \Gamma^0_{\mu\nu} P(t, x; t)\left(0, (g^{00})^{-1} F(s, x) \right)
    + \int_0^t \Box_g P(t, s)\left( 0, (g^{00})^{-1} F(s, x) \right)\upd s \\
    &= F(s, x)
  \end{split}
\end{equation}
Moreover, $u$ satisfies the correct initial conditions.
\end{proof}

We can now use the results established above to provide counterexamples to possible uniform decay statements, which is the main theorem in this section:

\begin{theorem}
 Let $u$ denote a solution to the wave equation $\Box_g u = 0$ on a two charge microstate geometry. Let $U$ be any open set containing the submanifold $r = 0$.

Then for all $k \geq 1$ there exist positive constants $C_k$ and $c_k$ such that
\begin{equation}
 \limsup_{\tau \rightarrow \infty} \sup_{\substack{u,\  Tu \in \operatorname{SCH}(\Sigma_0) \\ ||u||_{H^{k}(\Sigma_0)} \neq 0} } \left( \frac{ ||u||_{H^{1}(U)} + ||T u||_{L^2(U)} }{ \sqrt{E_0^{(k)}(u)} } \right) \exp \left( (k-1) W_0 \left( \frac{1}{C_k} \log \left( \frac{t}{C_k} \right) \right) \right) \geq c_k
\end{equation}
where $W_0(\cdot)$ is the Lambert $W$ function, defined by
\begin{equation}
  \begin{split}
    W_0(x e^x) = x \\
    W_0 \geq -1 \\
  \end{split}
\end{equation}
and $\operatorname{SCH}(\Sigma_0)$ denotes the space of Schwartz functions on $\Sigma_0$.

\end{theorem}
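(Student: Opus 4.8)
The plan is to use the cut-off quasimodes $u_\ell$ constructed above as non-decaying \emph{approximate} solutions, to compare each against the genuine solution with the same data, and then, given a time $\tau$, to optimise over the angular parameter $\ell$. Fix $k\ge 1$ and, for large $\ell$, let $u_\ell=\chi_\ell(r)\,e^{-i\omega t+i\lambda z-i\ell\phi}u_r(r)u_{\theta,\ell}(\theta)$ be the cut-off quasimode of Theorem \ref{theorem existence of modes}; recall that $\omega=\omega_\ell$ is \emph{real} with $\omega_\ell\sim 1$ uniformly in $\ell$, that $\mu_\ell\sim\ell$, and that the error estimate (\ref{equation error}) holds in every $H^j$. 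Because $\omega$ is real, $|u_\ell|$ is independent of $t$, so in fact $\|\Box_g u_\ell\|_{H^j(\Sigma_s)}^2\lesssim_j e^{-C_j\ell\log\ell}\|u_\ell\|_{L^2(\Sigma_0)}^2$ \emph{uniformly in $s\ge 0$}; here the angular derivatives and the comparison of $\|u_r\|_{L^2}$ with $\|u_\ell\|_{L^2(\Sigma_0)}$ contribute only powers of $\ell$, which are absorbed by the super-exponential. The data $(u_\ell,Tu_\ell)|_{\Sigma_0}=(u_\ell,-i\omega u_\ell)|_{\Sigma_0}$ is smooth and compactly supported, hence Schwartz and not identically zero; let $v_\ell$ solve $\Box_g v_\ell=0$ with this data, so that $v_\ell$ is admissible in the supremum, and set $w_\ell:=v_\ell-u_\ell$, which solves $\Box_g w_\ell=-\Box_g u_\ell$ with zero data on $\Sigma_0$.

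The heart of the argument is the estimate
\begin{equation*}
 \|w_\ell\|_{H^1(U)}(\tau)+\|Tw_\ell\|_{L^2(U)}(\tau)\;\lesssim\;\tau\,e^{-\frac12 C_1\ell\log\ell}\,\|u_\ell\|_{L^2(\Sigma_0)} .
\end{equation*}
By the curved-space Duhamel principle (Proposition \ref{proposition Duhamel}), $w_\ell(\tau)=-\int_0^\tau P(\tau,\cdot\,;s)\big(0,(g^{00})^{-1}\Box_g u_\ell(s,\cdot)\big)\,\upd s$, and each integrand is a homogeneous solution whose data $\big(0,(g^{00})^{-1}\Box_g u_\ell(s,\cdot)\big)$ is compactly supported (the source is supported in the cut-off region $r\in[\tfrac12 R\mu_\ell,R\mu_\ell]$). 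Applying the uniform boundedness Theorem \ref{theorem boundedness two charge} (which, by the $t$-invariance of $g$, governs the evolution from $\Sigma_s$ to $\Sigma_\tau$) and observing that for data of the form $(0,f)$ its right-hand side reduces to a multiple of $\|f\|_{L^2(\Sigma_s)}^2+\|\Phi f\|_{L^2(\Sigma_s)}^2$ (using $[\Phi,n]=0$, $\Phi(t)=0$, and the boundedness of $(g^{00})^{-1}$ and its derivatives), which is $\lesssim\|\Box_g u_\ell(s,\cdot)\|_{H^1(\Sigma_s)}^2$ up to powers of $\ell$, we get $\int_{\Sigma_\tau}|\partial P(\tau;s)(0,\cdot)|^2\,\dVol_{\underline g}\lesssim e^{-C_1\ell\log\ell}\|u_\ell\|_{L^2(\Sigma_0)}^2$; Minkowski's integral inequality then gives the energy bound for $w_\ell$ with the extra factor $\tau$. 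Finally, $w_\ell$ has compactly supported data and source, hence stays compactly supported on each $\Sigma_\tau$ by finite speed of propagation, so $\lim_{r\to\infty}r^2|w_\ell|^2=0$; combining the Hardy inequality of Lemma \ref{lemma Hardy} with $|R|\sim 1$ controls the $L^2(U)$-part and yields the displayed estimate. I expect this step to be the main obstacle: one must chain Duhamel's principle, the \emph{derivative-losing} boundedness theorem and Hardy's inequality, carefully bookkeeping the (polynomial in $\ell$) losses so that they stay swallowed by $e^{-C_1\ell\log\ell}$ — the loss of derivatives being harmless precisely because (\ref{equation error}) is available in every $H^j$.

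On the other hand $u_\ell$ does not decay: since $\omega$ is real, $\|u_\ell\|_{H^1(U)}(\tau)+\|Tu_\ell\|_{L^2(U)}(\tau)$ is independent of $\tau$; and because $U$ is open and contains the submanifold $r=0$ while, for large $\ell$, $u_\ell$ is concentrated (up to exponentially small tails) in an $\mathcal O(\mu_\ell^{-1/2})$-neighbourhood of the point $r=0,\ \theta=\tfrac\pi2$, this quantity is $\gtrsim\|u_\ell\|_{L^2(\Sigma_0)}$ for all sufficiently large $\ell$. Conversely, differentiating the explicit form of $u_\ell$ (the $T$-derivative being multiplication by the bounded $\omega$, and the spatial derivatives of $u_r$ and $u_{\theta,\ell}$ being controlled by elliptic bounds together with $\mu_\ell\sim\ell$), one checks $E^{(k)}_0(u_\ell)\lesssim\ell^{2(k-1)}\|u_\ell\|_{L^2(\Sigma_0)}^2$. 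Hence, for every $\tau$ with $\tau\lesssim e^{\frac12 C_1\ell\log\ell}$ — so that the difference $w_\ell$ is dominated by the non-decaying part of $u_\ell$ — we obtain
\begin{equation*}
 \frac{\|v_\ell\|_{H^1(U)}(\tau)+\|Tv_\ell\|_{L^2(U)}(\tau)}{\sqrt{E^{(k)}_0(v_\ell)}}\;\gtrsim\;\ell^{-(k-1)} .
\end{equation*}

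It remains to choose, for each large $\tau$, the smallest $\ell=\ell(\tau)$ for which this estimate applies, i.e.\ the integer nearest the solution of $\ell\log\ell\sim\tfrac1{C_k}\log(\tau/C_k)$ for a suitable constant $C_k$ (absorbing all implied constants and the subdominant terms). Writing $\ell\log\ell=L$ as $(\log\ell)e^{\log\ell}=L$, i.e.\ $\log\ell=W_0(L)$, gives $\ell(\tau)=\exp\!\big(W_0\big(\tfrac1{C_k}\log(\tau/C_k)\big)\big)$, whence $\ell(\tau)^{-(k-1)}=\exp\!\big(-(k-1)W_0\big(\tfrac1{C_k}\log(\tau/C_k)\big)\big)$. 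Since each $v_{\ell(\tau)}$ is admissible and $\ell(\tau)\to\infty$ as $\tau\to\infty$, taking the supremum over the admissible family and then $\limsup_{\tau\to\infty}$ produces the asserted inequality with $c_k$ the resulting implied constant. (Once the difference estimate for $w_\ell$ is in hand, the inversion via the Lambert $W$ function and the energy comparison for $u_\ell$ are comparatively routine.)
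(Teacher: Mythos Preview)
Your proposal is correct and follows essentially the same route as the paper: build the cut-off quasimodes, compare via Duhamel's principle with the genuine solution sharing the same data, control the difference through the (derivative-losing) boundedness theorem and the Hardy inequality, bound the higher-order initial energy by elliptic estimates on the angular part together with the approximate wave equation for the radial part, and finally invert $t\sim e^{C\ell\log\ell}$ via the Lambert $W$ function. You have also correctly identified the main technical point, namely that the polynomial-in-$\ell$ losses from the derivative-losing boundedness and from the higher-order error estimate are harmlessly absorbed by the super-exponential $e^{-C\ell\log\ell}$.

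One small bookkeeping remark: your intermediate normalisation, writing both the numerator lower bound and the higher-energy upper bound against $\|u_\ell\|_{L^2(\Sigma_0)}$, is slightly delicate (angular derivatives of $u_\ell$ carry factors of $\ell$, so $E^{(k)}_0(u_\ell)\lesssim\ell^{2(k-1)}\|u_\ell\|_{L^2}^2$ is not obviously true as stated). The paper avoids this by comparing both sides against $\|u_\ell\|_{H^1(\Sigma_0)}+\|Tu_\ell\|_{L^2(\Sigma_0)}$ instead, which is exactly the time-independent numerator; the ratio then comes out as $\ell^{-(k-1)}$ directly. This is a cosmetic adjustment and does not affect your overall argument.
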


\begin{proof}


We construct a sequence of approximate solutions as follows: let
\begin{equation*}
 u_\ell := \chi_{\ell}(r) e^{-i\omega t + i\lambda z - i\ell \phi} u_r(r) u_{\theta,\ell}(\theta)
\end{equation*}
Then recall that $u_\ell$ approximately solves the wave equation in the following sense:
\begin{equation*}
 || \Box_g u_\ell ||^2_{H^k} \lesssim (\chi_0)^2 e^{-C_k \ell \log \ell} 
\end{equation*}
Now, let $\tilde{u}_\ell$ be the solution to the wave equation with initial data on $\Sigma_0$ that agrees with the data for $u_\ell$.

We pick a sequence of open sets $U_\ell$ such that
\begin{equation}
 \begin{split}
  U_{\ell + 1} &\subset U_\ell \\
  \{ x \ \big| \ r(x) \leq (\mu_\ell)^{1+\epsilon} \} &\subset U_\ell \text{ for all sufficiently large } \ell
 \end{split}
\end{equation}
for some $\epsilon > 0$. In particular, if $U$ is any open set containing the submanifold $r = 0$ then $U_\ell \subset U$ for sufficiently large $\ell$. Indeed, if we wish then we can pick all the sets $U_\ell$ to be the same open set containing $r = 0$, independent of $\ell$. From Duhamel's principle (proposition \ref{proposition Duhamel}), taking norms in the region $U_{\ell, t} := U_\ell \cap \Sigma_t$ we find
\begin{equation}
 \begin{split}
  &||u_\ell - \tilde{u}_\ell ||_{H^1(U_{\ell,t})} + ||\partial_t u_\ell - \partial_t \tilde{u}_\ell||_{L^2(U_{\ell,t})} \\
  &\leq t \sup_{0 \leq s \leq t} \left( || P(t, x; s)\left(0, (g^{00})^{-1} \Box_g u_\ell \right) ||_{H^1(U_{\ell,t})} + ||\partial_t P(t, x; s)\left(0, (g^{00})^{-1} \Box_g u_\ell \right) ||_{L^2(U_{\ell,t})} \right)
 \end{split}
\end{equation}
Using the boundedness statement of theorem \ref{theorem boundedness two charge}, together with the Hardy inequality of lemma \ref{lemma Hardy}, the quantitative error estimate from the end of the previous section (equation \eqref{equation error}) and the fact that the norms of the quasimodes are independent of time, we find
\begin{equation}
\label{equation bound uniform decay v1}
 \begin{split}
  ||u_\ell - \tilde{u}_\ell ||_{H^1(U_{\ell,t})} + ||T (u_\ell - \tilde{u}_\ell) ||_{L^2(U_{\ell,t})}
  &\lesssim t \left( ||\Box_g u_\ell ||_{L^2(\Sigma_0)} \right) \\
  &\lesssim t (1 + (R \ell)^2) e^{-C_2 \ell \log \ell} ||u_\ell ||_{L^2(\Sigma_0)}
 \end{split}
\end{equation}
Where the factor of $(R \ell)^2$ comes from estimating the lower order term using the Hardy inequality. Note, however, that by taking $C_2$ to be slightly smaller we can absorb this contribution in the exponential factor. Now, we have
\begin{equation}
\label{equation bound uell in U}
 \begin{split}
  ||u_\ell ||_{L^2(\Sigma_0)} &= ||u_\ell ||_{L^2(U_{\ell, 0})} + ||u_\ell ||_{L^2(\Sigma_0 \setminus U_{\ell,0})} \\
  &\lesssim ||u_\ell ||_{L^2(U_{\ell, 0})} + e^{-\delta |\tilde{\omega}| \mu_\ell ^{(\frac{1}{2} + \epsilon)}} ||u_\ell ||_{L^2(\Sigma_0)}
 \end{split}
\end{equation}
Recall that we have a lower bound on $\tilde{\omega} \geq \tilde{\omega}_{\text{min}}$, independent of $\ell$. Hence, by taking $\ell$ sufficiently large we can make the coefficient of the second term on the right hand side of \eqref{equation bound uell in U} arbitrarily small, and so it can be absorbed by the left hand side. Thus, for sufficiently large $\ell$, we find
\begin{equation}
 ||u_\ell ||_{L^2(\Sigma_0)} \lesssim ||u_\ell ||_{L^2(U_{\ell, 0})}
\end{equation}
so that the norm of $u_\ell$ on the whole of $\Sigma_0$ can be bounded by its norm in the smaller region $U_\ell$.

Returning to equation \eqref{equation bound uniform decay v1}, and using the trivial bound
\begin{equation*}
  ||u_\ell ||_{L^2(U_{\ell, 0})} \leq  ||u_\ell ||_{H^1(U_{\ell, 0})}
\end{equation*}
along with the fact that norms of the quasimodes are time-independent, we find
\begin{equation}
 ||u_\ell - \tilde{u}_\ell ||_{H^1(U_{\ell,t})} + ||T (u_\ell - \tilde{u}_\ell) ||_{L^2(U_{\ell,t})}
 \lesssim t e^{-C_2 \ell \log \ell} \left( ||u_\ell ||_{H^1(U_{\ell,t})} + ||T u_\ell||_{L^2(U_{\ell,t})} \right)
\end{equation}
Hence, for all sufficiently large constants $C$, if we take $t$ to satisfy
\begin{equation*}
 t \leq C e^{C_2 \ell \log \ell}
\end{equation*}
then we can apply the reverse triangle inequality, and conclude that
\begin{equation}
\label{equation bound uniform decay v2}
  ||\tilde{u}_\ell ||_{H^1(U_{\ell,t})} + ||T\tilde{u}_\ell ||_{L^2(U_{\ell,t})} \gtrsim ||u_\ell ||_{H^1(U_{\ell,t})} + ||T u_\ell ||_{L^2(U_{\ell,t})} 
\end{equation}
Now, we have
\begin{equation*}
 \begin{split}
  ||u_\ell ||_{H^1(U_{\ell,t})} + ||T u_\ell ||_{L^2(U_{\ell,t})} 
  &= ||u_\ell ||_{H^1(U_{\ell,t})} + \tilde{\omega} || u_\ell ||_{L^2(U_{\ell,t})}
 \end{split}
\end{equation*}
while we also have
\begin{equation*}
 \begin{split}
  ||u_\ell ||_{H^1(\Sigma_t)} &= ||u_\ell ||_{H^1(U_{\ell,t})} + ||u_\ell ||_{H^1(\Sigma_t \setminus U_{\ell,t})} \\
  &\lesssim ||u_\ell ||_{H^1(U_{\ell,t})} + e^{-\delta |\tilde{\omega}| \mu_\ell ^{(\frac{1}{2} + \epsilon)}} ||u_\ell ||_{L^2(\Sigma_0)}
 \end{split}
\end{equation*}
so, for all sufficiently large $\ell$, for fixed $\tilde{\omega}_{\text{max}}$ we have
\begin{equation}
 ||u_\ell ||_{H^1(\Sigma_t)} + ||T u_\ell ||_{L^2(\Sigma_t)} \lesssim ||u_\ell ||_{H^1(U_{\ell,t})} + ||T u_\ell ||_{L^2(U_{\ell,t})}
\end{equation}
Substituting into equation \eqref{equation bound uniform decay v2} and using the fact that the norms of the quasimodes are independent of time, we find that, for all times $t \leq C e^{\ell \log \ell}$ we have the bound
\begin{equation}
 \begin{split}
  ||\tilde{u}_\ell ||_{H^1(U_{\ell,t})} + ||T \tilde{u}_\ell ||_{L^2(U_{\ell,t})} &\gtrsim ||u_\ell ||_{H^1(\Sigma_0)} + ||T u_\ell ||_{L^2(\Sigma_0)} \\
  &\gtrsim ||\tilde{u}_\ell ||_{H^1(\Sigma_0)} + ||T \tilde{u}_\ell ||_{L^2(\Sigma_0)}
 \end{split}
\end{equation}
where in the last line we have used the fact that the solutions to the wave equation $\tilde{u}_\ell$ and the quasimodes $u_\ell$ have identical initial data on $\Sigma_0$.

We now choose a sequence of times
\begin{equation}
 t_\ell := C e^{\ell \log \ell}
\end{equation}
for some sufficiently small constant $C$. Then we find that we have constructed a sequence of times $t_\ell \rightarrow \infty$, and a sequence solutions to the wave equation $\tilde{u}_\ell$, such that for some constant $c_1 > 0$ \emph{independent of} $\ell$ we have
\begin{equation}
\label{equation bound uniform decay v3}
\frac{ ||\tilde{u}_\ell ||_{H^1(U_{\ell,t})} + ||T \tilde{u}_\ell ||_{L^2(U_{\ell,t})} }{ ||\tilde{u}_\ell ||_{H^1(\Sigma_0)} + ||T \tilde{u}_\ell ||_{L^2(\Sigma_0)} } \geq c_1
\end{equation}

We may also consider replacing the denominator in equation \eqref{equation bound uniform decay v3} with higher order norms of the initial data. We claim that the following bound holds for the initial data:
\begin{equation}
\label{equation higher norms}
 \sqrt{E_0^{(1)}(u_\ell)} \lesssim || u_\ell ||_{H^2(\Sigma_0)} + || T u_\ell ||_{H^1(\Sigma_0)} \lesssim \ell \left( || u_\ell ||_{H^1(\Sigma_0)} + || T u_\ell ||_{L^2(\Sigma_0)} \right)
\end{equation}
The first inequality is clear from the definition of the higher order energies, while the second can be established at follows. Since $u_\ell$ is the initial data for a quasimode, we can exchange $z$ derivatives for factors of $\lambda$, which has been treated as some fixed constant in all of the above calculations. Additionally, we can exchange the $T$ derivative for a factor of $\tilde{\omega}$, which is actually bounded independently of $\ell$ by $\tilde{\omega}_{\text{max}}$. 

It is slightly more difficult to bound the higher order angular derivatives, and we need to use elliptic estimates. For this, we first note that the function
\begin{equation}
 \hat{u} := e^{-i\ell \phi} u_{\theta, \ell}(\theta)
\end{equation}
satisfies the eigenvalue equation
\begin{equation}
\label{equation eigenvalue on three sphere}
 -\slashed{\Delta}\hat{u} + \left( b^2(\tilde{\lambda}^2 - \tilde{\omega}^2)\frac{a^2}{(R_z)^2} \cos^2 \theta \right)\hat{u} = (\mu_\ell)^2 \hat{u}
\end{equation}
where $\slashed{\Delta}$ is the Laplacian on the 3-sphere, and $\hat{u}$ is interpreted as a function on the $3$-sphere by identifying the coordinates $(\theta, \phi, \psi)$ with the standard Hopf coordinates on the 3-sphere.

From equation \eqref{equation eigenvalue on three sphere} we find that we can estimate the first angular derivatives by the following: for sufficiently large $\mu_\ell$,
\begin{equation}
 \begin{split}
  \int_{\mathbb{S}^3} |\slashed{\nabla} \hat{u}|^2 \dVol_{\mathbb{S}^3} &\gtrsim (\mu_\ell)^2 \int_{\mathbb{S}^3} |\hat{u}|^2 \dVol_{\mathbb{S}^3} \\
  \Rightarrow ||\hat{u}||_{H^1(\mathbb{S}^3)} &\gtrsim |\mu_\ell| \, ||\hat{u}||_{L^2(\mathbb{S}^3)}
 \end{split}
\end{equation}
where the first line is obtained by multiplying \eqref{equation eigenvalue on three sphere} by $\hat{u}$ and integrating by parts. Similarly, from \eqref{equation eigenvalue on three sphere} we can estimate second angular derivatives by
\begin{equation}
 \begin{split}
  \int_{\mathbb{S}^3} \left| \slashed{\Delta}\hat{u} \right|^2 \dVol_{\mathbb{S}^3} &= \left( (\mu_\ell)^2 - b^2(\tilde{\lambda}^2 - \tilde{\omega}^2)\frac{a^2}{(R_z)^2} \cos^2 \theta \right)^2 |\hat{u}|^2 \\
  &\lesssim (\mu_\ell)^4 ||\hat{u}||^2_{L^2(\mathbb{S}^3)}
 \end{split}
\end{equation}
however, integrating by parts, the left hand side of this equation can be written as
\begin{equation}
 \int_{\mathbb{S}^3} \left| \slashed{\Delta}\hat{u} \right|^2 \dVol_{\mathbb{S}^3} = \int_{\mathbb{S}^3} \left( |\slashed{\nabla}^2 \hat{u}|^2 + \slashed{R}^{\mu\nu} (\slashed{\nabla}_\mu \hat{u}) (\slashed{\nabla}_\nu \hat{u}^*) \right) \dVol_{\mathbb{S}^3}
\end{equation}
where $\hat{u}^*$ denotes the complex conjugate of $\hat{u}$, and $\slashed{R}^{\mu\nu}$ is the Ricci curvature tensor of $\mathbb{S}^3$. For the 3-sphere, we have
\begin{equation*}
 \slashed{R}^{\mu\nu} = 2\slashed{g}^{\mu\nu}
\end{equation*}
where $\slashed{g}$ is the metric on the unit 3-sphere, and so
\begin{equation}
 \int_{\mathbb{S}^3} \left( |\slashed{\nabla}^2 \hat{u}|^2 + |\slashed{\nabla} \hat{u}|^2 \right) \lesssim (\mu_\ell)^4 ||\hat{u}||^2_{L^2(\mathbb{S}^3)}
\end{equation}
which, combined with the previous calculations, leads to the desired conclusion
\begin{equation}
 || \hat{u} ||_{H^2(\mathbb{S}^3)} \lesssim (\mu_\ell) ||\hat{u}||_{H^1(\mathbb{S}^3)}
\end{equation}
In other words, additional angular derivatives may be exchanged for factors of $\mu_\ell$.

Finally, we need to bound second derivatives with respect to $r$. For this, we use the fact that the quasimode $u_\ell$ gives an approximate solution to the wave equation:
\begin{equation*}
 ||\Box_g u_\ell ||_{L^2(\Sigma_0)} \lesssim e^{-C_2 \ell \log \ell} ||u_\ell ||_{L^2(\Sigma_0)}
\end{equation*}
In the inverse metric, $g^{rr}$ is bounded above and below, so we can exchange second derivatives with respect to $r$ for other second derivatives and lower order terms. We have already explained how to estimate these, with the exception of second time derivatives. However, for the quasimode $u_\ell$ (but \emph{not} for the actual solution to the wave equation $\tilde{u}_\ell$) we have
\begin{equation*}
 T^2 u_\ell = -(\tilde{\omega})^2 u_\ell
\end{equation*}
so these second times derivatives are easy to bound.

Higher derivatives can then be bounded by using similar elliptic estimates for the angular derivatives, and higher $r$ derivatives can be bounded in terms of other derivatives by differentiating the (approximate) wave equation satisfied by $u_\ell$, leading to the following estimate: for all $k \geq 1$, for all sufficiently small $C_k > 0$, at all times $t \leq C_k e^{C_k\ell \log \ell}$ there exists a constant $c_k > 0$ such that
\begin{equation}
 \frac{ ||\tilde{u}_\ell ||_{H^1(U_{\ell,t})} + ||T \tilde{u}_\ell ||_{L^2(U_{\ell,t})} }{ ||\tilde{u}_\ell ||_{H^k(\Sigma_0)} + ||T \tilde{u}_\ell ||_{H^{k-1}(\Sigma_0)} } \geq c_k (\ell)^{k-1}
\end{equation}
We now pick the times $t_\ell$ to be
\begin{equation}
 t_\ell := C_k e^{C_k \ell \log \ell}
\end{equation}
in which case $\ell$ can be written in terms of $t_\ell$ as
\begin{equation}
 \ell = \exp\left( W_0 \left( \frac{1}{C_k} \log \left( \frac{t_\ell}{C_k} \right) \right) \right)
\end{equation}

Finally, we note that the initial data we have constructed are is Schwartz. Indeed, the above calculations give bounds on the derivatives of $u_\ell$ and $T u_\ell$. We can also show that these sequences of functions lie in all polynomially weighted Sobolev spaces; indeed, we have, 
\begin{equation}
 ||r^\alpha u_\ell ||_{H^k(\Sigma_0)} \leq C_\alpha ||u_\ell||_{H^k(\Sigma_0)}
\end{equation}
for some constants $C_\alpha$ which, importantly, are \emph{independent} of $\ell$. Note that, although the initial data induced by each function $u_\ell$ is compactly supported, the region of support depends on $\ell$, indeed, $u_\ell$ is supported in the region $r \leq R\mu_\ell$. However, recall that the norm $u_\ell$ is \emph{exponentially} suppressed in the parameter $\ell$ outside of any region of fixed size, which easily allows us to prove the inequality above and hence to prove the theorem. Note, however, that we cannot use our sequence of functions if we wish to take the supremum over initial data which is compactly supported in some given region, since the sequence of initial data that we construct will eventually be supported outside this region.

\end{proof}

Note that we can repeat the above calculation in the three charge case, and in fact, it is slightly easier since the minimum of the effective potential is then at some nonzero value of $r$. This might seem redundant in the light of the results of section \ref{section non decay}, however, note that the initial data constructed in that section \emph{must} have nontrivial dependence on the $z$ direction. We can see this from the fact that the energy associated with the null vector field $V = T + Z$ is non-negative, but if $u$ has trivial dependence on the $z$ coordinate then $(Zu) = 0$, so the energy associated with $T$ is non-negative as well. Hence, if we wish to restrict to waves with trivial dependence on the $z$ direction, then we can obtain the same results in both the two and three charge microstate geometries.

\section*{Acknowledgements}
The author is very grateful to Harvey Reall, Felicity Eperon and Jorge Santos for numerous helpful discussions, and also to Harvey Reall and Mihalis Dafermos for useful comments on the manuscript. This work was partially supported by the European Research Council Grant No. ERC-2011-StG 279363-HiDGR.

\sloppy
\printbibliography

\end{document}